\providecommand{\allOne}{\mathbb{1}}
\title{Lazifying Conditional Gradient Algorithms}
\author{Gábor Braun}
\author{Sebastian Pokutta}
\author{Daniel Zink}
\affil{ISyE, Georgia Institute of Technology\\
  Atlanta, GA\\
  \texttt{\{gabor.braun,sebastian.pokutta,daniel.zink\}@gatech.edu}}
\setlist[enumerate]{label=(\roman*)}
\newlist{enumerate*}{enumerate*}{1}
\setlist[enumerate*]{label=(\arabic*),
  after=.,
  itemjoin={{, }}, itemjoin*={{, or }}}
\newcommand{\AUG}[1]{\ensuremath{\operatorname{AUG}\sb{#1}}}
\newcommand{\LP}[1]{\ensuremath{\operatorname{LP}\sb{#1}}}
\newcommand{\LPsep}[1]{\ensuremath{\operatorname{LPsep}\sb{#1}}}
\newcommand{\LLPsep}[1]{\ensuremath{\operatorname{LLPsep}\sb{#1}}}
\newcommand{\argmin}[1]{\ensuremath{\operatorname{argmin}_{#1}}}
\newcommand{\argmax}[1]{\ensuremath{\operatorname{argmax}_{#1}}}
\newtheorem{theorem}{Theorem}[section]
\newtheorem{proposition}[theorem]{Proposition}
\newtheorem{corollary}[theorem]{Corollary}
\newtheorem{lemma}[theorem]{Lemma}
\theoremstyle{definition}
\theoremstyle{remark}
\newtheorem{remark}[theorem]{Remark}
\DeclarePairedDelimiterX{\normSimple}[1]{\lVert}{\rVert}
{\ifblank{#1}{\mathord{\cdot}}{#1}}
\DeclarePairedDelimiter{\abs}{\lvert}{\rvert}
\newcommand{\norm}[2][]{\normSimple{#2}\ifblank{#1}{}{\sb{#1}}}
\newcommand{\Norm}[2][]{\normSimple*{#2}\ifblank{#1}{}{\sb{#1}}}
\newcommand{\dualnorm}[2][]{\mathinner{%
    \normSimple{#2}\ifblank{#1}{}{\sb{#1}}\sp{*}}}
\newcommand{\supp}[1]{\operatorname{supp}(#1)}
\newcommand{\ball}[2]{\mathbb{B}_{#1}\left({#2}\right)}
\newcommand{\R}{\mathbb R}
\newcommand{\expectation}[1]{\mathbb{E}\left[#1\right]}
\DeclareMathOperator{\dom}{dom}
\begin{document}
\maketitle{}
\begin{abstract}
  Conditional gradient algorithms (also often called Frank-Wolfe
  algorithms) are popular due to their simplicity of only requiring a
  linear optimization oracle and more recently they
  also gained significant traction for online learning. While simple
  in principle, in many cases the actual implementation of the linear
  optimization oracle is costly.
  We show a general method to \emph{lazify}
  various conditional gradient algorithms,
  which in actual computations leads to several orders of
  magnitude of speedup in wall-clock time.
  This is achieved by using a faster separation oracle
  instead of a linear optimization oracle,
  relying only on \emph{few} linear optimization oracle calls.
\end{abstract}

\section{Introduction}
\label{sec:introduction}

Convex optimization is an important technique both from a theoretical
and an applications perspective. Gradient descent based
methods are widely used due to their simplicity and easy
applicability to many real-world problems.  We are interested in
solving constraint convex optimization problems of the form
\begin{equation}
\label{eq:mainProb}
  \min_{x \in P} f(x),
\end{equation}
where \(f\)
is a smooth convex function and \(P\)
is a polytope, with access to \(f\)
being limited to first-order information,
i.e., we can obtain \(\nabla f(x)\)
and \(f(x)\) for a given \(x \in P\) and access to \(P\)
via a linear minimization oracle, which returns
\(\LP{P}(c) = \operatorname{argmin}_{x \in P} c x\)
for a given linear objective \(c\).

\begin{algorithm}
  \caption{Frank-Wolfe Algorithm \citep{frank1956algorithm}}
  \label{alg:FW-basic}
  \begin{algorithmic}[1]
    \REQUIRE
      smooth convex function \(f\) with curvature \(C\),
      start vertex \(x_{1} \in P\),
      linear minimization oracle \(\LP{P}\)
    \ENSURE points \(x_{t}\) in \(P\)
    \FOR{\(t=1\) \TO \(T-1\)}
      \STATE
        \(v_t \leftarrow \LP{P}(\nabla f(x_{t}))\)
         \STATE \(x_{t+1} \leftarrow (1 - \gamma_{t}) x_{t} + \gamma_{t}
          v_{t}\) with \(\gamma_t \coloneqq \frac{2}{t+2}\)
    \ENDFOR
  \end{algorithmic}
\end{algorithm}

When solving Problem~\eqref{eq:mainProb} using gradient descent
approaches in order to maintain feasibility, typically a projection
step is required.  This projection back into the feasible region \(P\)
is potentially computationally expensive, especially for complex
feasible regions in very large dimensions.  As such, projection-free
methods gained a lot of attention recently, in particular the
Frank-Wolfe algorithm \citep{frank1956algorithm} (also known as
conditional gradient descent \citep{levitin1966constrained}; see also
\citep{jaggi2013revisiting} for an overview) and its online version
\citep{hazan2012projection} due to their simplicity. We recall the
basic Frank-Wolfe algorithm in 
Algorithm~\ref{alg:FW-basic}. These methods eschew the projection step and rather use a
linear optimization oracle to stay within the feasible region. While
convergence rates and regret bounds are often suboptimal, in many
cases the gain due to only having to solve a \emph{single} linear
optimization problem over the feasible region in every iteration still
leads to significant computational advantages (see e.g.,
\citep[Section~5]{hazan2012projection}). This led to conditional
gradient algorithms being used for e.g., online optimization
and more generally machine learning.
Also the property that these algorithms
naturally generate sparse distributions over the extreme points of the
feasible region is often helpful.
Further
increasing the relevance of these methods, it was shown recently that
conditional gradient methods can also achieve linear convergence (see
e.g., \cite{garber2013linearly,FW-converge2015,LDLCC2016}) as well as
that the number of total gradient evaluations can be reduced while
maintaining the optimal number of oracle calls as shown in
\cite{lan2014conditional}.

Unfortunately, for complex feasible regions even solving the linear
optimization problem might be time-consuming and as such the cost of
solving the LP might be non-negligible. This could be the case, e.g.,
when linear optimization over the feasible region is a hard
problem or when solving large-scale optimization or learning
problems. As such it is natural to ask the following questions:
\begin{enumerate}
\item\label{item:2}
  Does the linear optimization oracle have to be called in every
  iteration?
\item\label{item:3}
  Does one need approximately optimal solutions for convergence?
\item\label{item:4}
  Can one reuse information across iterations?
\end{enumerate}

We will answer these questions in this work, showing that
\ref{item:2} the LP oracle is not required to be called
in every iteration,
\ref{item:3} much weaker guarantees are sufficient,
and \ref{item:4} we can reuse information. To
significantly reduce the cost of oracle calls \emph{while} maintaining
identical convergence rates up to small constant factors, we replace
the linear optimization oracle by a \emph{(weak) separation oracle}
\begin{oracle}
  \caption{Weak Separation Oracle \(\LPsep{P}(c, x, \Phi, K)\)}
  \label{alg:weak-separate-oracle}
  \begin{algorithmic}
    \REQUIRE
    linear objective \(c \in \mathbb{R}^{n}\),
    point \(x \in P\),
    accuracy \(K \geq 1\),
    objective value \(\Phi > 0\);
    \ENSURE Either
    \begin{enumerate*}
    \item\label{item:positive}
      vertex \(y \in P\) with
      \(c (x - y) > \Phi / K\)
    \item\label{item:negative}
      \FALSE: \(c (x - z) \leq \Phi\) for all \(z \in P\)
    \end{enumerate*}
  \end{algorithmic}
\end{oracle}
(Oracle~\ref{alg:weak-separate-oracle}) which approximately solves
a \emph{separation problem} within a multiplicative factor and returns
improving vertices.  We stress
that the weak separation oracle is significantly weaker than
approximate minimization, which has been already considered in
\cite{jaggi2013revisiting}.  In fact, there is no guarantee that the
improving vertices returned by the oracle are near to the optimal
solution to the linear minimization problem.  It is this relaxation of
dual bounds and approximate optimality that will provide a significant
speedup as we will see later. However, if the oracle does not return
an improving vertex (returns \algorithmicfalse), then this fact can be
used to derive a reasonably small dual bound of the form:
\(f(x_{t}) - f(x^{*}) \leq \nabla f(x_{t}) (x_{t} - x^{*}) \leq
\Phi_{t}\) for some \(\Phi_t > 0\).  While the accuracy \(K\) is
presented here as a formal argument of the oracle, an oracle
implementation might restrict to a fixed value \(K > 1\), which often
makes implementation easier. We point out that the cases
\ref{item:positive} and \ref{item:negative} potentially overlap if
\(K > 1\). This is intentional and in this case it is unspecified which
of the cases the oracle should choose (and it does not matter for
the algorithms).

This new oracle encapsulates the smart use of
the \emph{original} linear optimization oracle,
even though for some problems
it could potentially be implemented directly without relying on
a linear programming oracle.
Concretely, a weak separation oracle can
be realized by a single call to a linear optimization oracle and as
such is no more complex than the original oracle. However it has two
important advantages: it allows for \emph{caching} and \emph{early
  termination}.  Caching refers to storing previous solutions, and
first searching among them to satisfy the oracle's
separation condition. The underlying linear optimization oracle is
called only, when none of the cached solutions satisfy the
condition.  Algorithm~\ref{alg:LPSepLPOracle} formalizes this process.
Early termination is the technique to stop the linear
optimization algorithm before it finishes at an appropriate stage,
when from its internal data a suitable oracle answer can be easily
recovered; this is clearly an implementation dependent technique.
The two techniques can be combined, e.g.,
Algorithm~\ref{alg:LPSepLPOracle} could use an early terminating
linear oracle or other implementation of the weak separation oracle in
line \ref{line:early-stop}.
\begin{algorithm}
  \caption{\(\LPsep{P}(c, x, \Phi, K)\) via LP oracle}
  \label{alg:LPSepLPOracle}
  \begin{algorithmic}[1]
    \REQUIRE
      linear objective \(c \in \mathbb{R}^{n}\),
      point \(x \in P\),
      accuracy \(K \geq 1\),
      objective value \(\Phi > 0\);
    \ENSURE Either
    \begin{enumerate*}
    \item vertex \(y \in P\) with
      \(c (x - y) > \Phi / K\)
    \item \FALSE: \(c (x - z) \leq \Phi\) for all \(z \in P\)
    \end{enumerate*}
    \IF{\(y \in P \) cached with \(c (x - y) > \Phi / K\) exists}
      \RETURN \(y\) \COMMENT{Cache call}
    \ELSE
    \STATE \(y \leftarrow \argmax{x \in P} c x\)
      \label{line:early-stop}
    \COMMENT{LP call}
    \IF{\(c (x - y) > \Phi / K\)}
      \STATE add \(y\) to cache
      \RETURN \(y\)
    \ELSE
      \RETURN \FALSE
    \ENDIF
    \ENDIF
  \end{algorithmic}
\end{algorithm}

We call \emph{lazification} the technique of replacing a linear
programming oracle with a much weaker one,
and we will demonstrate
significant speedups in wall-clock performance (see
e.g., Figure~\ref{fig:cacheEffect}), while maintaining identical theoretical
convergence rates.

To exemplify our approach we provide conditional gradient algorithms
employing the weak separation oracle for the standard Frank-Wolfe
algorithm as well as the variants in
\citep{hazan2012projection,LDLCC2016,garber2013linearly}, which have
been chosen due to requiring modified convergence arguments that go
beyond those required for the vanilla Frank-Wolfe algorithm.
Complementing the theoretical analysis we report computational results
demonstrating effectiveness of our approach via a significant
reduction in wall-clock time compared to their linear optimization
counterparts.

\subsection*{Related Work}
\label{sec:related-work}

There has been extensive work on Frank-Wolfe algorithms and conditional gradient
algorithms, so we will restrict to review work most
closely related to ours. The Frank-Wolfe algorithm was originally
introduced in \citep{frank1956algorithm} (also known as conditional
gradient descent \citep{levitin1966constrained} and has been intensely
studied in particular in terms of achieving stronger convergence
guarantees as well as affine-invariant versions. We demonstrate our
approach for the vanilla Frank-Wolfe algorithm
\citep{frank1956algorithm} (see also \citep{jaggi2013revisiting}) as
an introductory example. We then consider more complicated variants
that require non-trivial changes to the respective convergence proofs
to demonstrate the versatility of our approach. This includes the
linearly convergent variant via local linear optimization
\citep{garber2013linearly} as well as the pairwise conditional
gradient variant of \cite{LDLCC2016}, which is especially efficient in
terms of implementation. However, our technique also applies to the
\emph{Away-Step Frank-Wolfe} algorithm, the \emph{Fully-Corrective
  Frank-Wolfe} algorithm, the \emph{Pairwise Conditional Gradient}
algorithm, as well as the \emph{Block-Coordinate Frank-Wolfe}
algorithm. Recently, in \cite{Freund2016} guarantees for arbitrary
step-size rules were provided and an analogous analysis can be also
performed for our approach. On the other hand, the analysis of the
inexact variants, e.g., with approximate linear minimization does not
apply to our case as our oracle is significantly weaker than
approximate minimization as pointed out earlier. For more
information, we refer the interested reader to the excellent overview
in \citep{jaggi2013revisiting} for Frank-Wolfe methods in general as
well as \cite{FW-converge2015} for an overview with respect to global
linear convergence.

It was also recently shown in \cite{hazan2012projection} that the
Frank-Wolfe algorithm can be adjusted to the online learning setting
and in this work we provide a lazy version of this algorithm.
Combinatorial convex online optimization has been investigated in a
long line of work (see e.g.,
\citep{kalai2005efficient,audibert2013regret,neu2013efficient}). It is
important to note that our regret bounds hold in the \emph{structured online
learning} setting, i.e., our bounds depend on the \(\ell_1\)-diameter
or sparsity of the polytope, rather than its ambient dimension for
arbitrary convex functions (see e.g.,
\citep{cohen2015following,gupta2016solving}).  We refer the interested
reader to \citep{ocoBook} for an extensive overview.

A key component of the new oracle is the ability to cache and reuse
old solutions, which accounts for the majority of the observed speed
up. The idea of caching of oracle calls was already explored in
various other contexts such as cutting plane methods (see e.g.,
\cite{joachims2009cutting}) as well as the \emph{Block-Coordinate
  Frank-Wolfe} algorithm in
\cite{shah2015multi,osokin2016minding}. Our lazification approach
(which uses caching) is however much more lazy, requiring
no multiplicative approximation
guarantee; see \cite[Proof of Theorem 3. Appendix
F]{osokin2016minding} and \cite{lacoste2013block} for comparison to
our setup.

\subsection*{Contribution}
\label{sec:contribution}

The main technical contribution of this paper is a new approach,
whereby instead of finding the optimal solution, the oracle is used
only to find a \emph{good enough solution} or a \emph{certificate}
that such a solution does not exist, both ensuring the desired
convergence rate of the conditional gradient algorithms.

Our contribution can be summarized as follows:

\begin{enumerate}
\item \emph{Lazifying approach.} We provide a general method to lazify conditional gradient
algorithms. For this we replace the linear optimization oracle with
a weak separation oracle, which allows us to reuse feasible solutions
from previous oracle calls, so that in many cases the oracle call can
be skipped. In fact, once a simple representation of the underlying
feasible region is learned no further oracle calls are needed. We also
demonstrate how parameter-free variants can be obtained.

\item \emph{Lazified conditional gradient algorithms.} We exemplify our
approach by providing lazy versions of the vanilla Frank-Wolfe
algorithm as well as of the conditional gradient
methods in \citep{hazan2012projection,garber2013linearly,LDLCC2016}.

\item  \emph{Weak separation through augmentation.} We show in the case of
0/1 polytopes how to implement a weak separation oracle with at most
\(k\)
calls to an augmentation oracle that on input \(c \in \R^n\)
and \(x\in P\)
provides either an improving solution \(\overline{x} \in P\)
with \(c\overline{x} < cx\)
or ensures optimality, where \(k\) denotes the \(\ell_{1}\)-diameter
of \(P\).  This is useful when the solution space is sparse.

\item \emph{Computational experiments.} We demonstrate computational
superiority by extensive comparisons of the weak separation based
versions with their original versions. In all cases we report
significant speedups in wall-clock time often of several
orders of magnitude.
\end{enumerate}

It is important to note that in all cases, we inherit the same
requirements, assumptions, and properties of the baseline algorithm
that we lazify. This includes applicable function classes, norm
requirements, as well as smoothness and (strong) convexity
requirements. We also maintain identical convergence rates up to
(small) constant factors.

A previous version of this work appeared as extended abstract in
\cite{bpz2017conf}; this version has been significantly revised over
the conference version including a representative subset of
more extensive computational results,
full proofs for all described variants, as well as a variant that uses
an augmentation oracle instead of linear optimization oracle (see
Section~\ref{sec:weak-separate-augment}).

\subsection*{Outline}
\label{sec:outline}

We briefly recall notation and notions in Section~\ref{sec:preliminaries} and
consider conditional gradient algorithms in
Section~\ref{sec:offline-frank-wolfe}. In
Section~\ref{sec:frank-wolfe-with} we consider parameter-free
variants of the proposed algorithms, and in
Section~\ref{sec:online-fw-error} we examine online versions.
Finally, in
Section~\ref{sec:weak-separate-augment} we show a realization of
a weak separation oracle with an even weaker oracle in the case of
combinatorial problem and we provide extensive computational results
in Section~\ref{sec:experiments}. 

\section{Preliminaries}
\label{sec:preliminaries}

Let \(\norm{\cdot}\)
be an arbitrary norm on \(\R^{n}\),
and let \(\dualnorm{}\) denote the dual norm of \(\norm{}\).
A function \(f\)
is \emph{\(L\)-Lipschitz}
if \(\abs{f(y) - f(x)} \leq L \norm{y - x}\)
for all \(x,y \in \dom f\).
A convex function \(f\)
is \emph{smooth} with \emph{curvature} at most \(C\)
if
\[f(\gamma y + (1-\gamma) x) \leq f(x) + \gamma \nabla f(x) (y - x) + C
\gamma^{2} / 2\]
for all \(x,y \in \dom f\)
and \(0 \leq \gamma \leq 1\).
A function \(f\)
is \emph{\(S\)-strongly
  convex} if
\[f(y) - f(x) \geq \nabla f(x) (y - x) + \frac{S}{2} \norm{y -
  x}^{2}\]
for all \(x,y \in \dom f\).
Unless stated otherwise Lipschitz continuity and strong convexity will
be measured in the norm \(\norm{\cdot}\). Moreover, let \(\ball{r}{x} \coloneqq \{y
\mid \norm{x-y} \leq r\}\) be the ball around \(x\) with
radius \(r\) with respect to \(\norm{.}\). 
In the following, \(P\) will denote the feasible region, a polytope
and the vertices of \(P\) will be denoted by
\(v_{1}, \dots, v_{N}\).

\section{Lazy Conditional Gradient}
\label{sec:offline-frank-wolfe}

We start with the most basic Frank-Wolfe algorithm as a simple
example for lazifying by means
of a \emph{weak separation oracle}.  We then lazify
more complex Frank-Wolfe algorithms in \cite{garber2013linearly} and
\cite{LDLCC2016}.  Throughout this section
\(\norm{\cdot}\) denotes the \(\ell_{2}\)-norm.

\subsection{Lazy Conditional Gradient: a basic example}
\label{sec:lazy-c-basic}

We start with lazifying the original Frank-Wolfe algorithm (arguably
the simplest Conditional Gradient algorithm), adapting the baseline
argument from \cite[Theorem~1]{jaggi2013revisiting}.  While the
vanilla version has suboptimal convergence rate \(O(1/T)\), its
simplicity makes it an illustrative example of the main idea of
lazification.  The lazy algorithm (Algorithm~\ref{alg:FW-WSep})
maintains an upper bound \(\Phi_{t}\) on the convergence rate, guiding
its eagerness for progress when searching for an improving vertex
\(v_{t}\). If the weak separation oracle provides an improving vertex
\(v_t\) we refer to this as a \emph{positive call} and
if the oracle claims there are no improving vertices
we call it a \emph{negative call}.

\begin{algorithm}
  \caption{Lazy Conditional Gradient}
  \label{alg:FW-WSep}
  \begin{algorithmic}[1]
    \REQUIRE
      smooth convex function \(f\) with curvature \(C\),
      start vertex \(x_{1} \in P\),
      weak linear separation oracle \(\LPsep{P}\),
      accuracy \(K \geq 1\),
      step sizes \(\gamma_{t}\),
      initial upper bound \(\Phi_{0}\)
    \ENSURE points \(x_{t}\) in \(P\)
    \FOR{\(t=1\) \TO \(T-1\)}
      \STATE
        \label{line:FW-big-Phi-t}
        \(\Phi_{t} \leftarrow
        \frac{\Phi_{t-1} + \frac{C \gamma_{t}^{2}}{2}}
        {1 + \frac{\gamma_{t}}{K}}\)
      \STATE
        \label{line:FW-improving-v}
        \(v_t \leftarrow \LPsep{P}(\nabla f(x_{t}),
        x_{t}, \Phi_{t}, K)\)
      \IF{\(v_{t} = \FALSE\)}
        \STATE
        \label{line:FW-same}
        \(x_{t+1} \leftarrow x_{t}\)
      \ELSE
        \STATE
          \label{line:FW-update}
          \(x_{t+1} \leftarrow
          (1 - \gamma_{t}) x_{t} + \gamma_{t} v_{t}\)
      \ENDIF
    \ENDFOR
  \end{algorithmic}
\end{algorithm}

The step size \(\gamma_{t}\) is chosen to (approximately)
minimize \(\Phi_{t}\) in Line~\ref{line:FW-big-Phi-t};
roughly \(\Phi_{t-1} / K C\). 

\begin{theorem}
  \label{thm:offline-FW-error}
  Assume \(f\) is convex and smooth with curvature \(C\).
  Then Algorithm~\ref{alg:FW-WSep}
  with
  \(\gamma_{t} = \frac{2 (K^{2} + 1)}{K (t + K^{2} + 2)}\)
  and \(f(x_{1}) - f(x^{*}) \leq \Phi_{0}\)
  has convergence rate
  \begin{equation}
    \label{eq:offline-FW-error}
    f(x_{t}) - f(x^*)
    \leq
    \frac{2 \max\{ C, \Phi_{0} \} (K^{2} + 1)}{t + K^{2} + 2}, 
  \end{equation}
  where \(x^*\) is a minimum point of \(f\) over \(P\).
\begin{proof}
We prove by induction that
\[
f(x_{t}) - f(x^{*}) \leq \Phi_{t-1}
.
\]
The claim is clear for \(t=1\) by the choice of \(\Phi_{0}\).
Assuming the claim is true for \(t\), we prove it for \(t+1\).
We distinguish two cases depending on the return value
of the weak separation oracle in Line~\ref{line:FW-improving-v}.

In case of a positive call, i.e.,
when the oracle returns an improving solution \(v_{t}\),
then \(\nabla f(x_{t}) (x_{t} - v_{t}) \geq \Phi_{t} / K\),
which is used in the second inequality below.
The first inequality follows by smoothness of \(f\), and the 
second inequality by the induction hypothesis and the fact that
\(v_{t}\) is an improving solution:
\begin{align*}
  f(x_{t+1}) - f(x^*)
  &\leq
  \underbrace{f(x_t) - f(x^*)}_{\leq \Phi_{t-1}}
  + \gamma_{t}
  \underbrace{\nabla f(x_{t}) (v_{t} - x_{t})}_{\leq - \Phi_{t} / K}
  +
  \frac{C \gamma_{t}^{2}}{2} \\
  & \leq 
  \Phi_{t-1}
  -
  \gamma_{t}
  \frac{\Phi_{t}}{K}
  +
  \frac{C \gamma_{t}^{2}}{2} \\
  & =
  \Phi_{t}
  ,
\end{align*}

In case of a negative call, i.e.,
when the oracle returns no improving solution,
then in particular
\(\nabla f(x_{t}) (x_{t} - x^{*}) \leq \Phi_{t}\),
hence by Line~\ref{line:FW-same}
\begin{equation}
  f(x_{t+1}) - f(x^{*}) = f(x_{t}) - f(x^{*}) \leq
  \nabla f(x_{t}) (x_{t} - x^{*})
  \leq \Phi_{t}
  .
\end{equation}

Finally, using the specific values of \(\gamma_{t}\)
we prove the upper bound
\begin{equation}
  \Phi_{t-1}
  \leq
  \frac{2 \max\{ C, \Phi_{0} \} (K^{2} + 1)}{t + K^{2} + 2}
\end{equation}
by induction on \(t\).
The claim is obvious for \(t = 1\).
The induction step is an easy computation
relying on the definition of \(\Phi_{t}\)
on Line~\ref{line:FW-big-Phi-t}:
\begin{equation}
  \Phi_{t}
  =
  \frac{\Phi_{t-1} + \frac{C \gamma_{t}^{2}}{2}}
  {1 + \frac{\gamma_{t}}{K}}
  \leq
  \frac{\frac{2 \max\{ C, \Phi_{0} \} (K^{2} + 1)}{t + K^{2} + 2}
    + \frac{\max\{ C, \Phi_{0} \} \gamma_{t}^{2}}{2}}
  {1 + \frac{\gamma_{t}}{K}}
  \leq
  \frac{2 \max\{ C, \Phi_{0} \} (K^{2} + 1)}{t + K^{2} + 3}
  .
\end{equation}
Here the last inequality follows from
the concrete value of \(\gamma_{t}\).
\end{proof}
\end{theorem}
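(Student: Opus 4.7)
The plan is to prove the theorem by a double induction: first show by induction on $t$ that the algorithm maintains the invariant $f(x_t) - f(x^*) \leq \Phi_{t-1}$, and then show by a separate induction on $t$ that the recursively defined $\Phi_{t-1}$ is bounded by the claimed closed-form expression $\frac{2\max\{C,\Phi_0\}(K^2+1)}{t+K^2+2}$. The base case of the first induction is immediate from the assumption $f(x_1) - f(x^*) \leq \Phi_0$.

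For the inductive step of the first induction, I would split on the return value of the weak separation oracle. In the positive case, when $v_t$ is returned with $\nabla f(x_t)(x_t - v_t) > \Phi_t/K$, I would apply smoothness of $f$ to the update $x_{t+1} = (1-\gamma_t) x_t + \gamma_t v_t$, obtaining $f(x_{t+1}) \leq f(x_t) + \gamma_t \nabla f(x_t)(v_t - x_t) + C\gamma_t^2/2$, and then combine with the inductive hypothesis on $f(x_t) - f(x^*) \leq \Phi_{t-1}$ and the oracle's improvement guarantee. The recursive definition of $\Phi_t$ in Line~\ref{line:FW-big-Phi-t} is specifically engineered so that $\Phi_{t-1} - \gamma_t \Phi_t/K + C\gamma_t^2/2 = \Phi_t$, which closes the step. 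In the negative case, the oracle certifies $\nabla f(x_t)(x_t - z) \leq \Phi_t$ for all $z \in P$, so taking $z = x^*$ and using convexity gives $f(x_t) - f(x^*) \leq \nabla f(x_t)(x_t - x^*) \leq \Phi_t$; since $x_{t+1} = x_t$ in this case, the claim follows.

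For the second induction bounding $\Phi_{t-1}$, the base case $t=1$ is trivial and the inductive step amounts to verifying that the choice $\gamma_t = \frac{2(K^2+1)}{K(t+K^2+2)}$ makes the recursion in Line~\ref{line:FW-big-Phi-t} compatible with the claimed decay rate $1/(t+K^2+2)$. Concretely, substituting the inductive upper bound on $\Phi_{t-1}$ and the value of $\gamma_t$ into the recursion and simplifying should give the bound for $\Phi_t$ with $t$ replaced by $t+1$.

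The main obstacle is the final algebraic verification: the specific constant $K^2+1$ in the numerator of $\gamma_t$ and the shift $K^2+2$ in the denominator are chosen precisely so that the inequality at the end of the proof tightens correctly. Getting the right constants requires carefully tracking how the $\max\{C,\Phi_0\}$ factor propagates and using $\gamma_t^2 \leq \frac{4(K^2+1)^2}{K^2(t+K^2+2)^2}$ together with the $1 + \gamma_t/K$ denominator. Everything else is a mechanical consequence of the two-case analysis in the recursion.
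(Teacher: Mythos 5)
Your proposal is correct and follows essentially the same route as the paper's own proof: the same two nested inductions (first the invariant \(f(x_t) - f(x^*) \leq \Phi_{t-1}\) via the positive/negative case split using smoothness and the oracle guarantee, then the closed-form bound on \(\Phi_{t-1}\) by substituting the specific \(\gamma_t\) into the recursion of Line~\ref{line:FW-big-Phi-t}). The paper likewise leaves the final algebraic verification as a routine computation, so your level of detail matches.
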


Note that by design, the algorithm converges at the worst-case
rate that we postulate due to the negative calls when it does
not move. Clearly, this is highly undesirable,
therefore the algorithm should be understood as
the \emph{textbook variant} of lazy conditional gradient.
We will present an improved, parameter-free variant of
Algorithm~\ref{alg:FW-WSep} in Section~\ref{sec:frank-wolfe-with} that
converges at the best possible rate that the non-lazy variant would
achieve (up to a small constant factor).

\subsection{Lazy Pairwise Conditional Gradient}

In this section
we provide a lazy variant
(Algorithm~\ref{alg:pairwise-cond-gradient})
of the Pairwise Conditional Gradient algorithm
from \cite{LDLCC2016},
using separation instead of linear optimization.
We make identical assumptions: the feasible region
is a \(0/1\) polytope,
i.e., all vertices of \(P\) have only \(0/1\) entries,
and moreover it is given in the
form \(P = \{x \in \mathbb{R}^n \mid 0 \le x \le \allOne, Ax = b\}\),
where \(\allOne\) denotes the all-one vector.

\begin{algorithm}
  \caption{Lazy Pairwise Conditional Gradient (LPCG)}
  \label{alg:pairwise-cond-gradient}
  \begin{algorithmic}[1]
    \REQUIRE
      polytope \(P\),
      smooth and \(S\)-strongly convex function \(f\)
      with curvature \(C\),
      accuracy \(K \geq 1\),
      non-increasing step-sizes \(\eta_t\),
      eagerness \(\Delta_{t}\)
    \ENSURE points \(x_t\)
    \STATE \(x_1 \in P\) arbitrary and \(\Phi_0 \ge f(x_1) - f(x^*)\)
    \FOR{\(t = 1, \dots, T\)}
      \STATE
      \(
       \widetilde{\nabla}f(x_t)_i \coloneqq \begin{cases}
                                \nabla f(x_t)_i & \text{if } x_{t,i} > 0\\
                                -\infty & \text{if } x_{t,i} = 0
                              \end{cases}
      \)
      \STATE \(\Phi_t \leftarrow
        \frac{2\Phi_{t-1} + \eta_t^2 C}{2 +
           \frac{\eta_t}{K \Delta_t}}\)
        \label{line:LPCG-Phi-t}
      \STATE
      \label{line:objective-function}
      \(c_t \leftarrow \left(\nabla f(x_t), -\widetilde{\nabla} f(x_t)\right)\)
      \STATE
      \label{line:pairwise-oracle-call}
      \((v_t^+, v_t^-) \leftarrow
	\LPsep{P \times P} \left(
          c_{t},\, (x_t, x_t),\, \frac{\Phi_t}{\Delta_t},\,
          K \right)\)
      \IF{\((v_t^+, v_t^-) = \FALSE\)}
        \STATE
        \label{line:pairwise-negative-case}
        \(x_{t+1} \leftarrow x_{t}\)
      \ELSE
	\STATE
          \label{line:approx-feasible}
          \(\tilde{\eta}_t \leftarrow \max \{2^{-\delta} \mid
          \delta \in \mathbb{Z}_{\ge 0},\ 2^{-\delta} \le \eta_t\}\)
	\STATE
	  \label{line:pairwise-pos-case}
	  \(x_{t+1} \leftarrow x_t + \tilde{\eta}_{t} (v_t^+ - v_t^-)\)
      \ENDIF
    \ENDFOR
  \end{algorithmic}
\end{algorithm}

Observe that Algorithm~\ref{alg:pairwise-cond-gradient} calls the
linear separation oracle \LPsep{} on the cartesian product of \(P\)
with itself. Choosing the objective function as in
Line~\ref{line:objective-function} allows us to simultaneously
find an improving direction and an away-step direction.

Let \(\operatorname{card}{x}\) denote the number of non-zero entries
of the vector \(x\).
\begin{theorem}
  \label{thm:pairwise-cond-gradient}
  Let \(x^{*}\) be a minimum point of \(f\) in \(P\),
  and \(\Phi_{0}\) an upper bound of \(f(x_{1}) - f(x^{*})\).
  Furthermore,
  let \(\operatorname{card}(x^*) \leq \alpha\),
  \(M_1 \coloneqq \sqrt{\frac{S}{8 \alpha}}\),
 \(\kappa \coloneqq
 \min \left\{ \frac{M_{1}}{K C}, 1 / \sqrt{\Phi_{0}} \right\}\),
 \(\eta_t \coloneqq \kappa \sqrt{\Phi_{t-1}}\) and
 \(\Delta_t \coloneqq \sqrt{\frac{2 \alpha \Phi_{t-1}}{S}}\), then
 Algorithm~\ref{alg:pairwise-cond-gradient}
 has convergence rate
 \[
  f(x_{t+1}) - f(x^*) \le \Phi_t \le \Phi_0 \left( \frac{1 + B}{1 + 2B} \right)^t,
 \]
 where \(B \coloneqq \kappa \cdot \frac{M_{1}}{2 K}\).
\end{theorem}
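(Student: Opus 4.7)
The plan is to prove by induction on $t$ the invariant $f(x_{t+1}) - f(x^{*}) \le \Phi_{t}$, and then to verify the closed-form bound $\Phi_{t} \le \Phi_{0}\bigl((1+B)/(1+2B)\bigr)^{t}$ by solving the recurrence in Line~\ref{line:LPCG-Phi-t}. The base case is the hypothesis on $\Phi_{0}$; for the inductive step I split on the outcome of the oracle call at Line~\ref{line:pairwise-oracle-call}, mirroring the structure of the proof of Theorem~\ref{thm:offline-FW-error}.

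\emph{Positive case.} The returned pair $(v_{t}^{+},v_{t}^{-})$ satisfies $\nabla f(x_{t})(x_{t}-v_{t}^{+}) - \widetilde{\nabla}f(x_{t})(x_{t}-v_{t}^{-}) > \Phi_{t}/(K\Delta_{t})$. The $-\infty$ coordinates of $\widetilde{\nabla}f$ force $\operatorname{supp}(v_{t}^{-})\subseteq\operatorname{supp}(x_{t})$ (otherwise the oracle's objective is $-\infty$), which turns the inequality into $\nabla f(x_{t})(v_{t}^{-}-v_{t}^{+}) > \Phi_{t}/(K\Delta_{t})$. The rounded step $\tilde{\eta}_{t}$ satisfies $\eta_{t}/2 < \tilde{\eta}_{t} \le \eta_{t}$ by the definition on Line~\ref{line:approx-feasible}, and one checks that the pairwise update $x_{t+1}=x_{t}+\tilde{\eta}_{t}(v_{t}^{+}-v_{t}^{-})$ preserves feasibility (the affine constraint via $A(v_{t}^{+}-v_{t}^{-})=0$, the box constraints via a dyadic invariant on the iterates together with $\operatorname{supp}(v_{t}^{-})\subseteq\operatorname{supp}(x_{t})$). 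Smoothness then gives $f(x_{t+1}) \le f(x_{t}) + \tilde{\eta}_{t}\nabla f(x_{t})(v_{t}^{+}-v_{t}^{-}) + C\tilde{\eta}_{t}^{2}/2$; bounding the improvement term with $\tilde{\eta}_{t}>\eta_{t}/2$ and the curvature term with $\tilde{\eta}_{t}\le\eta_{t}$, and invoking $f(x_{t})-f(x^{*})\le\Phi_{t-1}$, yields $f(x_{t+1})-f(x^{*}) \le \Phi_{t-1} - \eta_{t}\Phi_{t}/(2K\Delta_{t}) + C\eta_{t}^{2}/2$, which equals $\Phi_{t}$ after rearranging the defining equation on Line~\ref{line:LPCG-Phi-t}.

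\emph{Negative case.} Since $x_{t+1}=x_{t}$, it suffices to show $f(x_{t})-f(x^{*})\le\Phi_{t}$. The oracle failure gives $\nabla f(x_{t})(v^{-}-v^{+}) \le \Phi_{t}/\Delta_{t}$ for every pair of vertices $v^{+}\in P$ and $v^{-}\in P$ with $\operatorname{supp}(v^{-})\subseteq\operatorname{supp}(x_{t})$. The key step is a decomposition lemma specific to $0/1$ polytopes that exploits $\operatorname{card}(x^{*})\le\alpha$ to produce the bound $\nabla f(x_{t})(x_{t}-x^{*}) \le 2\sqrt{\alpha}\,\|x_{t}-x^{*}\|\cdot\Phi_{t}/\Delta_{t}$: one decomposes the direction $x_{t}-x^{*}$ into pairwise vertex differences whose away-side lies in $\operatorname{supp}(x_{t})$, and aggregates the oracle bound against each piece. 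Combining with the strong-convexity inequality $f(x_{t})-f(x^{*}) \le \nabla f(x_{t})(x_{t}-x^{*}) - S\|x_{t}-x^{*}\|^{2}/2$ and maximizing the right-hand side over the quantity $\|x_{t}-x^{*}\|$ gives $f(x_{t})-f(x^{*}) \le 2\alpha\Phi_{t}^{2}/(S\Delta_{t}^{2})$. Substituting $\Delta_{t}^{2}=2\alpha\Phi_{t-1}/S$ simplifies this to $\Phi_{t}^{2}/\Phi_{t-1}$, which is at most $\Phi_{t}$ since the recurrence below is a contraction, $\Phi_{t}\le\Phi_{t-1}$.

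\emph{Recurrence and main obstacle.} Substituting the chosen $\eta_{t}$ and $\Delta_{t}$ into Line~\ref{line:LPCG-Phi-t} one computes $\eta_{t}/(K\Delta_{t}) = (\kappa/K)\sqrt{S/(2\alpha)} = 2\kappa M_{1}/K = 4B$, while the constraint $\kappa\le M_{1}/(KC)$ gives $\eta_{t}^{2}C = \kappa^{2}C\Phi_{t-1} \le (\kappa M_{1}/K)\Phi_{t-1} = 2B\Phi_{t-1}$. Hence $\Phi_{t} \le \Phi_{t-1}(2+2B)/(2+4B) = \Phi_{t-1}(1+B)/(1+2B)$, and iterating yields the stated geometric decay (this also retroactively justifies the use of $\Phi_{t}\le\Phi_{t-1}$ above). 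The main obstacle I anticipate is the decomposition lemma driving the negative case: the bound $\nabla f(x_{t})(x_{t}-x^{*}) \le 2\sqrt{\alpha}\,\|x_{t}-x^{*}\|\,\Phi_{t}/\Delta_{t}$ relies genuinely on the $0/1$ structure together with the $\alpha$-sparsity of $x^{*}$, and its constant is precisely what fixes the form of $M_{1}$ and therefore of $B$. A secondary technical nuisance is the feasibility of the dyadic pairwise step, which requires carrying a coordinate-level invariant along the induction.
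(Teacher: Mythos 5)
Your proposal is correct and follows essentially the same route as the paper: the same induction on \(f(x_{t+1}) - f(x^{*}) \le \Phi_{t}\), the same case split on the oracle answer, the same decomposition lemma (Lemma~\ref{lem:decomposition}, quoted from \cite{LDLCC2016}) driving the negative case, and the same recurrence computation yielding \((1+B)/(1+2B)\). The only cosmetic deviation is in the negative case, where the paper bounds \(\norm{x_{t}-x^{*}} \le \sqrt{2\Phi_{t-1}/S}\) via strong convexity and the induction hypothesis so that the decomposition weights sum to at most \(\Delta_{t}\) and plain convexity finishes, whereas you keep the distance free, use the full strong-convexity inequality, and maximize over it to get \(\Phi_{t}^{2}/\Phi_{t-1} \le \Phi_{t}\); both close the same way, and your factor \(2\sqrt{\alpha}\) is harmless slack relative to the lemma's \(\sqrt{\operatorname{card}(x^{*})}\,\norm{x_{t}-x^{*}}\).
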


We recall a technical lemma for the proof.
\begin{lemma}[{\cite[Lemma 2]{LDLCC2016}}]
  \label{lem:decomposition}
  Let \(x,y \in P\).
  Then \(x\) is a liner combination
  \(x = \sum_{i=1}^k \lambda_{i} v_{i}\)
  of some vertices \(v_{i}\) of \(P\)
  (in particular, \(\sum_{i=1}^{k} \lambda_{i} = 1\))
  with
  \(x - y =  \sum_{i=1}^{k} \gamma_{i} (v_{i} - z)\)
  for some \(0 \leq \gamma_{i} \leq \lambda_{i}\) and
  \(z\in P\) such that
  \(\sum_{i=1}^{k} \gamma_{i} \leq
  \sqrt{\operatorname{card}(y)} \norm{x-y}\).
\end{lemma}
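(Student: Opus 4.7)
The plan is to prove the lemma constructively, following the flow/rearrangement argument of \cite{LDLCC2016}. Start from an arbitrary convex decomposition \(x = \sum_{i=1}^k \lambda_i v_i\) into vertices of \(P\), which exists because \(x \in P\). Introducing the residual weights \(\mu_i := \lambda_i - \gamma_i\) and total excess \(\Gamma := \sum_i \gamma_i\), the desired identity \(x - y = \sum_i \gamma_i(v_i - z)\) is algebraically equivalent to
\[
y = \sum_i \mu_i v_i + \Gamma z,
\qquad
\sum_i \mu_i + \Gamma = 1,
\qquad
z \in P.
\]
So the task reduces to siphoning off a fraction \(\gamma_i\) of mass from each \(v_i\) and re-routing the total excess \(\Gamma\) through a single auxiliary point \(z \in P\).

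The construction of the \(\gamma_i\) exploits the 0/1 structure of the vertices. Write \(S := \operatorname{supp}(y)\). For any coordinate \(j \notin S\) we have \(y_j = 0\), so every vertex with \(v_{i,j} = 1\) must have its mass on coordinate \(j\) absorbed by \(\Gamma z_j\); for \(j \in S\) the coordinatewise equality \(y_j = \sum_i \mu_i v_{i,j} + \Gamma z_j\) determines how much of each \(\lambda_i\) may be retained. Choose the \(\gamma_i\) to be minimal subject to these constraints and set \(z := \Gamma^{-1}\bigl(y - \sum_i \mu_i v_i\bigr)\). Feasibility \(z \in P\) then splits into two parts: \(Az = b\) is automatic from \(Ay = b\) and \(A v_i = b\), while \(z \in [0,1]^n\) follows from the minimality of the \(\gamma_i\) together with \(v_{i,j} \in \{0,1\}\). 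This is the technically delicate step and is exactly the content of the cited lemma.

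Finally, the norm estimate is obtained by Cauchy-Schwarz. By minimality of the construction, only the mismatch on \(S\) needs to be paid for, so
\[
\sum_i \gamma_i \;\leq\; \sum_{j \in S} \abs{x_j - y_j},
\]
since coordinates \(j \notin S\) are resolved automatically by routing through \(z\). Applying Cauchy-Schwarz to the \(\abs{S}\)-dimensional vector \((x_j - y_j)_{j \in S}\) and using that \(\norm{\cdot}\) denotes the \(\ell_2\)-norm,
\[
\sum_{j \in S} \abs{x_j - y_j}
\;\leq\; \sqrt{\abs{S}}\, \norm{x - y}
\;\leq\; \sqrt{\operatorname{card}(y)}\, \norm{x - y},
\]
giving the claimed bound. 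The main obstacle is the feasibility of \(z\) in the previous step: the equality constraints \(Ax = b\) couple all coordinates, so one cannot freely set the \(\gamma_i\) coordinate by coordinate; the 0/1 assumption on vertices is what makes the bookkeeping combinatorial and lets the construction close.
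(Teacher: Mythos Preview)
The paper does not prove this lemma at all: it is merely stated with the citation \cite[Lemma~2]{LDLCC2016} and then used as a black box in the proof of Theorem~\ref{thm:pairwise-cond-gradient}. So there is no ``paper's own proof'' to compare your attempt against.

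That said, your sketch contains a genuine gap in the norm estimate. You assert that ``coordinates \(j \notin S\) are resolved automatically by routing through \(z\)'' and conclude
\[
\sum_i \gamma_i \;\leq\; \sum_{j \in S} \abs{x_j - y_j}.
\]
This is backwards. For \(j \notin S\) we have \(y_j = 0\), and since \(\mu_i, v_{i,j}, \Gamma, z_j \geq 0\), the identity \(y_j = \sum_i \mu_i v_{i,j} + \Gamma z_j\) forces \(\mu_i = 0\) (hence \(\gamma_i = \lambda_i\)) for \emph{every} vertex \(v_i\) with \(v_{i,j} = 1\). Thus coordinates outside \(S\) do not come for free; they \emph{force} mass to be removed and therefore \emph{contribute} to \(\sum_i \gamma_i\). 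A concrete counterexample: take \(P = [0,1]^3\), \(x = (1,1,1)\), \(y = (1,0,0)\). The only vertex decomposition of \(x\) is \(x = 1 \cdot (1,1,1)\), and one checks that \(\gamma_1 = 1\) with \(z = y\) is forced. Here \(S = \{1\}\) and \(\sum_{j \in S} \abs{x_j - y_j} = 0\), so your intermediate inequality would give \(\sum_i \gamma_i \leq 0\), contradicting \(\sum_i \gamma_i = 1\). The lemma's actual bound \(\sqrt{\operatorname{card}(y)}\,\norm{x-y} = \sqrt{2}\) is still satisfied, so the statement is fine; it is only your route to it that breaks.

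The feasibility outline (automatic \(Az = b\), box constraints from the \(0/1\) structure) is the right shape, but the norm bound requires a different accounting than restricting to \(\operatorname{supp}(y)\); you should consult the original argument in \cite{LDLCC2016} for the correct bookkeeping.
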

\begin{proof}[Proof of Theorem~\ref{thm:pairwise-cond-gradient}]
The feasibility of the iterates \(x_t\) is ensured by
Line~\ref{line:approx-feasible} and the monotonicity of
the sequence \(\{\eta_t\}_{t\ge 1}\)
with the same argument as in
\cite[Lemma~1 and Observation~2]{LDLCC2016}.

  We first show by induction that
  \[
   f(x_{t+1}) - f(x^*) \le \Phi_t.
  \]
  For \(t=0\) we have \(\Phi_0 \ge f(x_1) - f(x^*)\).
  Now assume the statement for some \(t\ge 0\).
  In case of a negative call
  (Line~\ref{line:pairwise-negative-case}),
  we use the guarantee of
  Oracle~\ref{alg:weak-separate-oracle} to get
  \[
   c_t [(x_t, x_t) - (z_1, z_2)] \leq \frac{\Phi_t}{\Delta_t}
  \]
  for all \(z_1, z_2 \in P\), which is equivalent to
  (as \(c_{t}(x_{t}, x_{t}) = 0\))
  \[
   \widetilde{\nabla} f(x_t) z_2 - \nabla f(x_t) z_1 \le \frac{\Phi_t}{\Delta_t}
  \]
  and therefore
  \begin{equation}
  \label{eq:gradient-upper-bound}
   \nabla f(x_t)(\tilde{z}_2 - z_1) \le \frac{\Phi_t}{\Delta_t},
  \end{equation}
  for all \(\tilde{z}_2, z_1 \in P\) with
  \(\supp{\tilde{z}_2} \subseteq \supp{x_t}\),
  where \(\supp{x}\) denotes the set of non-zero coordinates of \(x\).
  We use Lemma~\ref{lem:decomposition}
  for the decompositions \(x_t = \sum_{i=1}^k \lambda_i v_i\) and
  \(x_{t} - x^* = \sum_{i=1}^{k} \gamma_{i} (v_{i} - z)\) with
  \(0 \leq \gamma_{i} \leq \lambda_{i}\), \(z \in P\) and
  \[\sum_{i=1}^{k} \gamma_{i}
    \leq \sqrt{\operatorname{card}(x^{*})}\norm{x_{t} - x^{*}}
    \leq \sqrt{\frac{2 \operatorname{card}(x^*) \Phi_{t-1}}{S}}
    \leq \Delta_{t},\]
  using the induction hypothesis and strong convexity
  in the second inequality.
  Then
  \begin{equation*}
    f(x_{t+1}) - f(x^*)
    = f(x_{t}) - f(x^*) \le \nabla f(x_t) (x_t - x^*)
    = \sum_{i=1}^{k} \gamma_{i} \cdot
    \underbrace{\nabla f(x_t) (v_i - z)}_{\leq \Phi_{t} / \Delta_{t}}
    \leq \Phi_{t},
  \end{equation*}
  where we used Equation~\eqref{eq:gradient-upper-bound}
  for the last inequality.

  In case of a positive call
  (Lines~\ref{line:approx-feasible} and \ref{line:pairwise-pos-case})
  we get,
  using first smoothness of \(f\), then
  \(\eta_{t} / 2 < \tilde{\eta_{t}} \leq \eta_{t}\)
  and
  \(\nabla f(x_t)(v_t^+ - v_t^-) \leq - \Phi_{t} / (K \Delta_{t})\),
  and finally the definition of \(\Phi_{t}\):
  \begin{align*}
   f(x_{t+1}) - f(x^*)
   &= f(x_{t}) - f(x^*)
   +
   f(x_{t} + \tilde{\eta_{t}} (v_{t}^{+} - v_{t}^{-})) - f(x_{t})
   \\
   &
   \le \Phi_{t-1} + \tilde{\eta}_t \nabla f(x_t)(v_t^+ - v_t^-) +
      \frac{\tilde{\eta}_t^2 C}{2} \\
   & \le \Phi_{t-1} - \frac{\eta_t}{2} \cdot
   \frac{\Phi_t}{K \Delta_{t}}
   + \frac{\eta_t^2 C}{2}
   = \Phi_t
   .
  \end{align*}
  Plugging in the values of \(\eta_t\) and \(\Delta_t\)
  to the definition of \(\Phi_{t}\) gives the desired bound.
  \begin{equation*}
   \Phi_{t}
   = \frac{2\Phi_{t-1} + \eta_t^2 C}{2 +
     \frac{\eta_t}{K \Delta_t}}
   =
   \Phi_{t-1}
   \frac{1 + \kappa^{2} C / 2}{1 + \kappa M_{1} / K}
   \leq
   \Phi_{t-1}
   \frac{1 + B}{1 + 2 B}
   \le \Phi_0 \left( \frac{1 + B}{1 + 2B} \right)^t.
   \qedhere
  \end{equation*}
 \end{proof}

\subsection{Lazy Local Conditional Gradient}
\label{sec:lazy-local}

In this section we provide a lazy version
(Algorithm~\ref{alg:llao-FW})
of the conditional gradient
algorithm from \cite{garber2013linearly}.
Let \(P\subseteq \mathbb{R}^{n}\)
be any polytope, \(D\)
denote an upper bound on the \(\ell_{2}\)-diameter
of \(P\),
and \(\mu \geq 1\)
be an affine invariant parameter of \(P\)
satisfying Lemma \ref{lem:lloo-decompose} below,
see \cite[Section~2]{garber2013linearly} for a possible definition.
As the algorithm is not affine invariant by nature,
we need a non-invariant version of smoothness:
Recall that a convex function \(f\) is \emph{\(\beta\)-smooth}
if \[f(y) - f(x) \leq
\nabla f(x) (y - x) + \beta \norm{y - x}^{2} / 2.\]

\begin{algorithm}
  \caption{Lazy Local Conditional Gradient (LLCG)}
  \label{alg:llao-FW}
  \begin{algorithmic}[1]
    \REQUIRE
      feasible polytope \(P\),
      \(\beta\)-smooth and \(S\)-strongly convex function \(f\),
      parameters \(K\), \(S\), \(\beta\), \(\mu\);
      diameter \(D\)
    \ENSURE points \(x_t\)
    \STATE \(x_1 \in P\) arbitrary and \(\Phi_0 \ge f(x_1) - f(x^*)\)
    \STATE \(\alpha \leftarrow \frac{S}{2 K \beta n \mu^2}\)
    \FOR{\(t = 1, \dots, T\)}
      \STATE \(r_t \leftarrow \sqrt{\frac{2\Phi_{t-1}}{S}} \)
      \STATE \(\Phi_t \leftarrow
	\frac{\Phi_{t-1} +
          \frac{\beta}{2} \alpha^2 \min\{n\mu^2r_t^2, D^2\}}{
          1 + \alpha / K}\)
        \label{line:LLCG-Phi-t}
      \STATE \(p_t \leftarrow
        \LLPsep{P}\left( \nabla f(x_t), x_t, r_t, \Phi_t, K \right) \)
      \IF{\(p_t = \FALSE\)}
        \STATE
        \(x_{t+1} \leftarrow x_{t}\)
      \ELSE
        \STATE \label{line:locg-positive-case}
          \(x_{t+1} \leftarrow x_t + \alpha(p_t - x_t)\)
      \ENDIF
    \ENDFOR
  \end{algorithmic}
\end{algorithm}

As an intermediary step, we first
implement a \emph{local weak separation oracle}
in Algorithm~\ref{alg:weak-local-separate},
a \emph{local} version of Oracle~\ref{alg:weak-separate-oracle},
which finds improving points only in a small neighbourhood of the
original point,
analogously to the local linear optimization oracle in
\cite{garber2013linearly}.
To this end,
we recall a technical lemma from \cite{garber2013linearly}.
\begin{algorithm}[t]
  \caption{Weak Local Separation
    \(\LLPsep{P}(c, x, r, \Phi, K)\)}
  \label{alg:weak-local-separate}
  \begin{algorithmic}[1]
    \REQUIRE
      polytope \(P\) together with invariant \(\mu\),
    linear objective \(c \in \mathbb{R}^{n}\),
    point \(x \in P\),
    radius \(r > 0\),
    objective value \(\Phi > 0\),
    accuracy \(K \geq 1\)
    \ENSURE Either
    \begin{enumerate*}
    \item \(y \in P\) with \(\norm{x-y} \le \sqrt{n}\mu r\) and
      \(c (x - y) > \Phi / K\)
    \item \FALSE: \(c (x - z) \leq \Phi\) for all \(z \in P \cap \ball{r}{x}\)
    \end{enumerate*}
    \STATE \(\Delta \leftarrow \min \left\{
        \frac{\sqrt{n} \mu}{D} r, 1 \right\}\)
    \STATE Decompose \(x\):
      \(x = \sum_{j=1}^{M} \lambda_{j} v_{j}\),
      \ \(\lambda_{j} > 0\), \(\sum_{j} \lambda_{j} = 1\).
    \STATE Sort vertices:
      \(i_1, \dots, i_M\) \quad \(cv_{i_1} \ge \dots \ge cv_{i_M}\).
    \STATE \(k \leftarrow \min \{k :
      \sum_{j=1}^k \lambda_{i_j} \ge \Delta\}\)
    \STATE \(p_- \leftarrow \sum_{j=1}^{k-1} \lambda_{i_j} v_{i_j} + \left(\Delta - \sum_{j=1}^{k-1} \lambda_{i_j} \right)v_{i_k}\)
    \STATE \(v^* \leftarrow \LPsep{P}\left(c, \frac{p_{-}}{\Delta},
          \frac{\Phi}{\Delta} \right)\)
    \IF{\(v^* = \FALSE\)}
      \RETURN \label{line:LLP-false} \FALSE
    \ELSE
      \RETURN
      \label{line:LLP-true}
      \(y \leftarrow x - p_- + \Delta v^*\)
    \ENDIF
  \end{algorithmic}
\end{algorithm}

\begin{lemma}
  \label{lem:lloo-decompose}
  \cite[Lemma~7]{garber2013linearly}
  Let \(P \subseteq \R^{n}\) be a polytope and \(v_{1}, \dots, v_{N}\)
  be its vertices.
  Let \(x, y \in P\)
  and \(x = \sum_{i=1}^{N} \lambda_{i} v_{i}\)
  a convex combination of the vertices of \(P\).
  Then there are numbers \(0 \leq \gamma_{i} \leq \lambda_{i}\)
  and \(z \in P\) satisfying
  \begin{align}
    x - y &= \sum_{i \in [N]} \gamma_{i} (z - v_{i})
    \\
    \sum_{i \in [N]} \gamma_{i}
    &
    \leq \frac{\sqrt{n} \mu}{D}
    \norm{x - y}
    .
  \end{align}
\end{lemma}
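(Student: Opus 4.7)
The plan is to reformulate the target identity as a vertex-to-point mass-transport statement and then invoke the defining property of the polytope invariant $\mu$. Writing $s := \sum_i \gamma_i$, the identity is equivalent to a representation $y = \sum_i (\lambda_i - \gamma_i) v_i + s z$, that is, expressing $y$ as a convex combination using the reduced weights $\lambda_i - \gamma_i \geq 0$ of the vertices appearing in $x$'s decomposition together with a single auxiliary point $z \in P$ of weight $s$. Geometrically, we start from $x = \sum_i \lambda_i v_i$ and \emph{transport} mass $\gamma_i$ off each vertex $v_i$ onto a common point $z$; the resulting shift of the barycenter is $\sum_i \gamma_i (z - v_i)$, which is exactly $y - x$ (up to sign).

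A trivial witness to existence is $\gamma_i = \lambda_i$ and $z = y$, giving $s = 1$. Hence the entire content of the lemma is the quantitative bound $s \leq (\sqrt{n}\mu/D)\norm{x - y}$, which is only informative for small displacements. I would prove this by invoking the definition of $\mu$ from \cite[Section~2]{garber2013linearly}: $\mu$ is the affine invariant of $P$ engineered so that for any two points of $P$, the direction joining them admits a representation as a conic combination of edges $v_j - v_i$ of $P$ with coefficient sum at most $(\mu/D)$ times the $\ell_1$-length of the direction. Starting from such an edge decomposition of $y - x$, I would then regroup all outgoing edges at a common endpoint $z \in P$ (using that a convex combination of vertices lies in $P$), producing exactly the desired vertex-to-point representation, with $\sum_i \gamma_i$ controlled by the total coefficient sum.

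The last step is the standard norm conversion $\norm{y - x}_{1} \leq \sqrt{n}\,\norm{y - x}_{2}$, which accounts for the extra factor of $\sqrt{n}$ in the bound and matches the use of the $\ell_2$-norm in the statement.

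The main obstacle is that $\mu$ has no elementary closed-form expression; its construction and the quantitative edge-decomposition it guarantees are precisely the technical heart of \cite[Section~2]{garber2013linearly}. My plan would therefore be to defer to their construction for the existence of the edge decomposition and the definition of $\mu$, and only to verify that the regrouping step yields the stated identity with the sum $\sum_i \gamma_i$ bounded as claimed.
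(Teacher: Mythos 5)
The paper itself contains no proof of this lemma: it is imported verbatim from \cite[Lemma~7]{garber2013linearly}, and in Section~\ref{sec:lazy-local} the parameter \(\mu\) is introduced precisely as an affine invariant of \(P\) \emph{satisfying} Lemma~\ref{lem:lloo-decompose}, with the citation supplying a concrete construction. So your decision to defer the technical core to \cite{garber2013linearly} matches how the paper treats the statement, and your surrounding observations are correct: the identity is a mass-transport statement, \((\gamma_i=\lambda_i,\ z=y)\) is a trivial witness with \(s=1\), the entire content is the quantitative bound on \(\sum_i\gamma_i\), and the \(\sqrt{n}\) comes from \(\norm[1]{x-y}\le\sqrt{n}\,\norm[2]{x-y}\). (A minor point in your favor: your rewriting \(y=\sum_i(\lambda_i-\gamma_i)v_i+sz\) corresponds to \(x-y=\sum_i\gamma_i(v_i-z)\), which is the sign actually used when the lemma is invoked in the proof of Lemma~\ref{lem:lloo-guarantee}; the sign \(z-v_i\) in the displayed statement appears to be a typo in the paper.)

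The gap is in your middle step. First, the property you attribute to \(\mu\) --- that \(y-x\) admits a conic decomposition into edge directions \(v_j-v_i\) with coefficient sum at most \((\mu/D)\norm[1]{y-x}\) --- is not the definition in \cite[Section~2]{garber2013linearly}, where \(\mu\) is built from explicit quantities attached to the inequality description of \(P\) (row norms of the constraint matrix and minimal positive slacks at vertices); as you state it, the property is essentially a paraphrase of the lemma, so adopting it as the definition makes the argument circular rather than a proof. Second, and more importantly, your regrouping step loses the constraint \(0\le\gamma_i\le\lambda_i\). Setting \(\gamma_i=\sum_j c_{ij}\) and \(z=\bigl(\sum_{i,j}c_{ij}v_j\bigr)/s\) does reproduce the identity and places \(z\in P\), but nothing in a generic conic edge decomposition forces the total outflow from \(v_i\) to be at most the mass \(\lambda_i\) that the \emph{given} convex combination places there. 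That upper bound is the whole point of the lemma: it is exactly what lets Algorithm~\ref{alg:weak-local-separate} interpolate \(\gamma_i\le\eta_i^-\le\lambda_i\) in the proof of Lemma~\ref{lem:lloo-guarantee}. A complete proof must construct a transport plan compatible with the prescribed decomposition of \(x\), and that is precisely the part of \cite[Lemma~7]{garber2013linearly} that requires the concrete definition of \(\mu\); only the norm conversion comes for free.
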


Now we prove the correctness of the weak local separation algorithm.
\begin{lemma}
 \label{lem:lloo-guarantee}
 Algorithm~\ref{alg:weak-local-separate} is correct.
 In particular \(\LLPsep{P}(c, x, r, \Phi, K)\)
 \begin{enumerate}
  \item returns either an \(y \in P\)
    with \(\norm{x-y} \le \sqrt{n}\mu r\) and
    \(c (x - y) \geq \Phi / K\),
  \item\label{item:weak-local-negative}
    or returns \algorithmicfalse, and then
    \(c (x - z) \leq \Phi\) for all \(z \in P \cap \ball{r}{x}\).
 \end{enumerate}

 \begin{proof}
  We first consider the case
  when the algorithm exits in Line~\ref{line:LLP-true}.
  Observe that \(y\in P\) since
  \(y\) is a convex combination of vertices of by construction:
  \(y = \sum_{j=1}^{M} (\lambda_{i_j} - \gamma_j) v_{i_j} + \Delta v^*\) with
  \(\Delta = \sum_{j=1}^{M} \gamma_j \le \frac{\sqrt{n} \mu}{D} r\),
  where \(\gamma_{j} = \lambda_{i_{j}}\) for \(j < k\),
  and \(\gamma_{k} = \Delta - \sum_{j=1}^{k-1} \lambda_{i_{j}}\),
  and \(\gamma_{j} = 0\) for \(j > k\).
  Therefore
  \begin{equation*}
     \norm{x-y} = \Norm{\sum_{j=1}^{M} \gamma_j (v_{i_j} - v^{*})}
    \le \sum_{j=1}^{M} \gamma_j \norm{v_{i_j} - v^*}
    \le \sqrt{n} \mu r
     .
  \end{equation*}
  Finally using the guarantee of \LPsep{P} we get
  \begin{equation*}
    c(x-y)
    = \Delta c \left( \frac{p_{-}}{\Delta} - v^{*} \right)
    \geq \frac{\Phi}{K}
    .
  \end{equation*}

  If the algorithm exits in Line~\ref{line:LLP-false},
  we use Lemma~\ref{lem:lloo-decompose}
  to decompose any \(y \in P\cap \ball{r}{x}\):
  \[
   x - y = \sum_{i=1}^{M} \gamma_{i} (v_{i} - z),
  \]
  with \(z \in P\) and
  \(\sum_{i=1}^{M} \gamma_{i} \leq \frac{\sqrt{n} \mu}{D} \norm{x-y}
  \le \Delta\).
  Since \(\sum_{i=1}^{M} \lambda_{i}  = 1 \geq \Delta\),
  there are numbers \(\gamma_{i} \leq \eta_{i}^{-} \leq \lambda_{i}\)
  with \(\sum_{i=1}^{M} \eta_{i}^{-} = \Delta\).
  Let
  \begin{align}
    \tilde{p}_{-} &\coloneqq \sum_{i=1}^{M} \eta_i^{-} v_{i},
    \\
    \tilde{p}_{+} &\coloneqq y - x + \tilde{p}_{-}
    = \sum_{i=1}^{M} (\eta_{i}^{-} - \gamma_{i}) v_{i} +
    \sum_{i=1}^{M} \gamma_{i} z,
  \end{align}
  so that \(\tilde{p}_{+} / \Delta \in P\).
  To bound the function value we first observe that
  the choice of \(p_-\) in the algorithm assures that
  \(c u \leq c p_{-}\) for all \(u = \sum_{i=1}^{M} \eta_{i} v_{i}\)
  with \(\sum_{i=1}^{M} \eta_{i} = \Delta\)
  and all \(0 \leq \eta_{i} \leq \lambda_{i}\).
  In particular, \(c\tilde{p}_- \le cp_-\).
  The function value of the positive part \(\tilde{p}_{+}\)
  can be bounded with the guarantee
  of \LPsep{P}:
  \[
   c \left(
     \frac{p_{-}}{\Delta} - \frac{\tilde{p}_{+}}{\Delta}
   \right)
   \le \frac{\Phi}{\Delta}
   ,
  \]
  i.e., \(c (p_{-} - \tilde{p}_{+}) \leq \Phi\).
  Finally combining these bounds gives
  \[
   c(x-y) = c\left(\tilde{p}_- - \tilde{p}_+\right)
   \le c(p_{-} - \tilde{p}_{+}) \leq \Phi
  \]
  as desired.
 \end{proof}
\end{lemma}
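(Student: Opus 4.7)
The plan is to handle the two exit paths of Algorithm~\ref{alg:weak-local-separate} separately, reading off what the subordinate call to \(\LPsep{P}\) certifies in each case and combining it with a controlled decomposition of \(x\).

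For the positive exit on Line~\ref{line:LLP-true}, \(\LPsep{P}\) returns a vertex \(v^{*}\) with \(c(p_{-}/\Delta - v^{*}) > \Phi/(K\Delta)\), and I will verify the three required properties of \(y = x - p_{-} + \Delta v^{*}\). First, \(y \in P\) by writing \(p_{-} = \sum_{j} \gamma_{j} v_{i_{j}}\) with \(0 \le \gamma_{j} \le \lambda_{i_{j}}\) and \(\sum_{j} \gamma_{j} = \Delta\), so that \(y = \sum_{j} (\lambda_{i_{j}} - \gamma_{j}) v_{i_{j}} + \Delta v^{*}\) is a convex combination of vertices of \(P\). Second, the locality bound \(\norm{x-y} \le \sqrt{n}\mu r\) follows from \(x - y = \sum_{j} \gamma_{j} (v_{i_{j}} - v^{*})\), the triangle inequality, \(\norm{v_{i_{j}} - v^{*}} \le D\), and the defining inequality \(\Delta \le \sqrt{n}\mu r / D\). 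Third, the separation bound \(c(x-y) > \Phi/K\) is immediate by multiplying the \(\LPsep{P}\) guarantee by \(\Delta\).

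For the negative exit on Line~\ref{line:LLP-false}, the \(\LPsep{P}\) call certifies \(c(p_{-}/\Delta - z') \le \Phi/\Delta\) for every \(z' \in P\). Given any target \(y \in P \cap \ball{r}{x}\), I will invoke Lemma~\ref{lem:lloo-decompose} to obtain a decomposition \(x - y = \sum_{i} \gamma_{i} (v_{i} - z)\) with \(\sum_{i} \gamma_{i} \le (\sqrt{n}\mu / D)\norm{x-y} \le \Delta\). Since \(\sum_i \lambda_i = 1 \ge \Delta\), I can inflate the \(\gamma_i\) to values \(\eta_i^-\) with \(\gamma_i \le \eta_i^- \le \lambda_i\) and \(\sum_i \eta_i^- = \Delta\), and then set \(\tilde{p}_- \coloneqq \sum_i \eta_i^- v_i\) and \(\tilde{p}_+ \coloneqq y - x + \tilde{p}_-\), arranging that \(\tilde{p}_+/\Delta \in P\) (it is a convex combination of the \(v_i\) and \(z\) with total weight \(\Delta\)).

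The closing step is a two-link chain of inequalities. First, the greedy selection rule behind \(p_{-}\), namely taking the vertices with largest \(c v_{i_{j}}\) first up to total mass \(\Delta\) capped by the \(\lambda_{i_{j}}\), ensures \(c\tilde{p}_{-} \le c p_{-}\) across all admissible reweightings of mass \(\Delta\); hence \(c(\tilde{p}_{-} - \tilde{p}_{+}) \le c(p_{-} - \tilde{p}_{+})\). Second, applying the \(\LPsep{P}\) negative guarantee to \(z' = \tilde{p}_{+}/\Delta\) bounds \(c(p_{-} - \tilde{p}_{+}) \le \Phi\), which combined with the identity \(c(x-y) = c(\tilde{p}_- - \tilde{p}_+)\) yields the claim. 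The main obstacle I anticipate is justifying rigorously the greedy optimality of \(p_{-}\) and the feasibility of \(\tilde{p}_+/\Delta\); both rely delicately on \(\Delta \le 1\) and the sparsity bound \(\sum_i \gamma_i \le \Delta\) from Lemma~\ref{lem:lloo-decompose}.
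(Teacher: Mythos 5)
Your proposal is correct and follows essentially the same route as the paper's proof: the same convex-combination argument and locality bound for the positive exit, and for the negative exit the same use of Lemma~\ref{lem:lloo-decompose}, the inflation of the \(\gamma_i\) to \(\eta_i^-\), the construction of \(\tilde{p}_\pm\), and the chain \(c(x-y) = c(\tilde{p}_- - \tilde{p}_+) \le c(p_- - \tilde{p}_+) \le \Phi\) via the greedy optimality of \(p_-\) and the \LPsep{P} guarantee. The two anticipated obstacles you name are handled exactly as you sketch, so nothing further is needed.
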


We are ready to examine the Conditional Gradient Algorithm
based on \LLPsep{P}:

\begin{theorem}
\label{thm:lloo-conv-rate}
 Algorithm~\ref{alg:llao-FW} converges with the following rate:
 \[
 f(x_{t+1}) - f(x^*) \le \Phi_t \le
 \Phi_0 \left(\frac{1 + \alpha / (2 K)}{1 + \alpha / K}\right)^{t}.
 \]
 \begin{proof}
  The proof is similar to the proof of
  Theorem~\ref{thm:pairwise-cond-gradient}.
  We prove this rate by induction. For \(t=0\) the choice of
  \(\Phi_0\) guarantees that \(f(x_1) - f(x^*) \le \Phi_0\).
  Now assume the theorem holds for \(t\ge 0\).
  With strong convexity
  and the induction hypothesis we get
  \[
   \norm{x_t - x^*}^2 \le \frac{2}{S} (f(x_t) - f(x^*))
   \le \frac{2}{S} \Phi_{t-1} = r_t^2,
  \]
  i.e., \(x^* \in P \cap \ball{r_t}{x_t}\).
  In case of a negative call, i.e.,
  when \(p_t = \mathbf{false}\),
  then case \ref{item:weak-local-negative}
  of Lemma~\ref{lem:lloo-guarantee} applies:
  \begin{equation*}
   f(x_{t+1}) - f(x^*)
   = f(x_t) - f(x^*) \le \nabla f(x_t) (x_t - x^*)
   \le \Phi_t.
  \end{equation*}

  In case of a positive call, i.e.,
  when Line~\ref{line:locg-positive-case} is
  executed, we get the same inequality via:
  \begin{align*}
   f(x_{t+1}) - f(x^*)
   &\le \Phi_{t-1} + \alpha \nabla f(x_t)(p_t - x_t) + 
      \frac{\beta}{2} \alpha^2 \norm{x_{t} - p_t}^2 \\
   & \le \Phi_{t-1} - \alpha \frac{\Phi_t}{K}
   + \frac{\beta}{2} \alpha^2 \min\{n\mu^2 r_t^2, D^2\} \\
   & = \Phi_t.
  \end{align*}
  Therefore using the definition of \(\alpha\) and \(r_t\) we get the desired bound:
  \begin{equation*}
   \Phi_{t} \le \frac{\Phi_{t-1}
     + \frac{\beta}{2}\alpha^2 r_t^2 n \mu^2}{1 + \alpha / K}
   = \Phi_{t-1} \left( \frac{1 + \alpha / (2 K)}{1 + \alpha / K}
   \right)
   \leq
   \Phi_0 \left( \frac{1 + \alpha / (2 K)}{1 + \alpha / K}\right)^{t}
   .
   \qedhere
  \end{equation*}
 \end{proof}

\end{theorem}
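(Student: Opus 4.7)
My plan is to mirror the inductive strategy used for Theorem~\ref{thm:pairwise-cond-gradient}, showing the pointwise bound \(f(x_{t+1}) - f(x^{*}) \le \Phi_{t}\) by induction on \(t\), and then deriving the closed-form geometric decay by a direct computation using the specific values of \(\alpha\) and \(r_{t}\). The base case \(t=0\) is immediate from the hypothesis \(\Phi_{0} \ge f(x_{1}) - f(x^{*})\).

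For the inductive step, the crucial preliminary observation is that strong convexity together with the induction hypothesis gives
\(\norm{x_{t} - x^{*}}^{2} \le 2(f(x_{t}) - f(x^{*}))/S \le 2\Phi_{t-1}/S = r_{t}^{2}\),
so \(x^{*} \in P \cap \ball{r_{t}}{x_{t}}\). This is exactly the hypothesis needed to apply case~\ref{item:weak-local-negative} of Lemma~\ref{lem:lloo-guarantee} to \(x^{*}\). I would then split on the oracle's return value. In the negative case, \(x_{t+1} = x_{t}\), and convexity combined with the oracle's guarantee applied to \(x^{*}\) yields \(f(x_{t+1}) - f(x^{*}) \le \nabla f(x_{t})(x_{t} - x^{*}) \le \Phi_{t}\). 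In the positive case, smoothness gives
\[f(x_{t+1}) - f(x_{t}) \le \alpha \nabla f(x_{t})(p_{t} - x_{t}) + \tfrac{\beta}{2}\alpha^{2}\norm{p_{t} - x_{t}}^{2},\]
and I would combine the oracle's guarantee \(\nabla f(x_{t})(p_{t} - x_{t}) \le -\Phi_{t}/K\) with the localization bound \(\norm{p_{t} - x_{t}} \le \sqrt{n}\mu r_{t}\) (and the trivial \(\norm{p_{t} - x_{t}} \le D\)) to conclude \(f(x_{t+1}) - f(x^{*}) \le \Phi_{t-1} - \alpha \Phi_{t}/K + \tfrac{\beta}{2}\alpha^{2}\min\{n\mu^{2}r_{t}^{2}, D^{2}\} = \Phi_{t}\), where the final equality is the defining recurrence of \(\Phi_{t}\).

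It remains to exhibit the geometric rate. Plugging \(r_{t}^{2} = 2\Phi_{t-1}/S\) and \(\alpha = S/(2K\beta n \mu^{2})\) into the recurrence gives \(\tfrac{\beta}{2}\alpha^{2} n\mu^{2} r_{t}^{2} = \alpha \Phi_{t-1}/(2K)\), so
\[\Phi_{t} = \frac{\Phi_{t-1}(1 + \alpha/(2K))}{1 + \alpha/K},\]
and iterating yields the claimed bound. The main obstacle I anticipate is bookkeeping: one must be careful to invoke Lemma~\ref{lem:lloo-guarantee} with the correct radius \(r_{t}\) (which is why the algorithm ties \(r_{t}\) to the current dual bound \(\Phi_{t-1}\) via strong convexity), and to verify that \(\norm{p_{t} - x_{t}}^{2} \le \min\{n\mu^{2} r_{t}^{2}, D^{2}\}\) so that the smoothness estimate matches the numerator of the \(\Phi_{t}\) recurrence exactly. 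Beyond this, the argument is structurally identical to the pairwise case.
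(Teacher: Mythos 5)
Your proposal is correct and follows essentially the same route as the paper's proof: induction on the bound \(f(x_{t+1}) - f(x^{*}) \le \Phi_{t}\), using strong convexity to place \(x^{*}\) in \(P \cap \ball{r_{t}}{x_{t}}\) so that Lemma~\ref{lem:lloo-guarantee} applies, splitting on the oracle answer, and closing with the same algebraic identity \(\tfrac{\beta}{2}\alpha^{2} n\mu^{2} r_{t}^{2} = \alpha\Phi_{t-1}/(2K)\) to obtain the geometric rate. The only detail worth making explicit in a final write-up is that the bound \(\norm{p_{t}-x_{t}} \le \sqrt{n}\mu r_{t}\) is exactly the guarantee of the positive case of Lemma~\ref{lem:lloo-guarantee}, which you have correctly identified.
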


\section{Parameter-free Conditional Gradient via Weak Separation}
\label{sec:frank-wolfe-with}

In this section we provide a parameter-free variant of the Lazy
Frank-Wolfe Algorithm, which is inspired by \cite{pokutta2017} and
which exhibits a very favorable behavior in computations; the same
technique applies to all other variants from
Section~\ref{sec:offline-frank-wolfe} as well.
The idea is that instead of using predetermined values for
progress parameters,
like \(\Phi_{t}\) and \(\gamma_{t}\) in Algorithm~\ref{alg:FW-WSep},
to optimize worst-case progress,
the parameters are adjusted adaptively, using data encountered by the
algorithm, and
avoiding hard-to-estimate parameters, like the curvature \(C\).
In practice, this leads to faster convergence, as usual for adaptive
methods, while the theoretical convergence rate is worse only by a
small constant factor.
See Figures~\ref{fig:video-colocalization} and
\ref{fig:matrix-completion} for a comparison
and Section~\ref{sec:offlineResults}
for more experimental results.

The occasional halving of the \(\Phi_{t}\) is reminiscent of an
adaptive restart strategy, considering successive iterates with the
same \(\Phi_{t}\) as an epoch.
It ensures that \(\Phi_{t}\) is always at least half of the primal
gap, while quickly reducing it if it is too large for the algorithm to
make progress, and as such it represents a reasonable amount of
expected progress throughout the whole run of the algorithm, not just
at the initial iterates.

\begin{algorithm}
  \caption{Parameter-free Lazy Conditional Gradient (LCG)}
\label{alg:CG-imp-imp}
  \begin{algorithmic}[1]
    \REQUIRE
      smooth convex function \(f\),
       start vertex \(x_{1} \in P\),
       weak linear separation oracle \(\LPsep{P}\),
      accuracy \(K \geq 1\)
    \ENSURE points \(x_{t}\) in \(P\)
    \STATE
    \label{line:phi_0}
      \(\Phi_{0} \leftarrow \max_{x \in P} \nabla f(x_{1}) (x_1 - x) /
      2\)
      \COMMENT{Initial bound}
    \FOR{\(t=1\) \TO \(T-1\)}
    \STATE
      \label{line:oracle_pfree}
      \(v_t \leftarrow
      \LPsep{P}(\nabla f(x_{t}), x_{t}, \Phi_{t-1}, K)\)
    \IF{\(v_{t} = \FALSE\)}
      \STATE \(x_{t+1} \leftarrow x_{t}\)
      \STATE
        \label{line:phi-update}
        \(\Phi_{t} \leftarrow \frac{\Phi_{t-1}}{2}\)
        \COMMENT{Update \(\Phi\)}
      \ELSE
        \STATE
          \(\gamma_t \leftarrow \argmin{0 \leq \gamma \leq 1}
          f((1 - \gamma) x_{t} + \gamma v_{t})\)
          \COMMENT{Line search}
        \STATE
          \(x_{t+1} \leftarrow
          (1 - \gamma_{t}) x_{t} + \gamma_{t} v_{t}\)
          \COMMENT{Update iterate}
        \STATE \(\Phi_{t} \leftarrow \Phi_{t-1}\)
      \ENDIF
    \ENDFOR
  \end{algorithmic}
\end{algorithm}

\begin{remark}[Additional LP call for initial bound]
  Note that Algorithm~\ref{alg:CG-imp-imp} finds a tight initial bound
  \(\Phi_0\) with a single extra LP call. If this is undesired, this
  can be also done approximately as long as \(\Phi_0\) is a valid
  upper bound, for example by means of binary search via the weak
  separation oracle.
\end{remark}

\begin{theorem}
  \label{thm:conv-parameter-free} Let \(f\) be a smooth convex
  function with curvature \(C\).  Algorithm~\ref{alg:CG-imp-imp}
  converges at a rate proportional to \(1/t\). In particular to
  achieve a bound \(f(x_t) - f(x^*) \leq \varepsilon\),
  given an initial upper bound
  \(f(x_{1}) - f(x^{*}) \leq 2 \Phi_{0}\),
  the number of
  required steps is upper bounded by
 \[
   t \le
   \left\lceil
     \log \frac{\Phi_{0}}{\varepsilon}
   \right\rceil
   + 1
   + 4K
   \left\lceil
     \log \frac{\Phi_{0}}{KC}
   \right\rceil
   + \frac{16 K^{2} C}{\varepsilon}
   .
 \]
\begin{proof}
  The main idea of the proof is to maintain an approximate upper bound
  on the optimality gap. We then show that negative calls halve the
  upper bound \(2 \Phi_{t}\) and positive oracle calls make significant
  objective function improvement.

We analyze iteration \(t\) of the algorithm.
If Oracle~\ref{alg:weak-separate-oracle}
in Line~\ref{line:oracle_pfree}
returns a negative answer
(i.e., \algorithmicfalse, case \ref{item:negative}),
then this guarantees \(\nabla f(x_{t}) (x_t - x) \leq \Phi_{t-1}\)
for all \(x \in P\), in particular, using convexity,
\(f(x_{t+1}) - f(x^*) = f(x_t) - f(x^*) \leq
\nabla f(x_{t}) (x_t - x^*) \leq \Phi_{t-1} = 2\Phi_t\).

If Oracle~\ref{alg:weak-separate-oracle}
returns a positive answer (case \ref{item:positive}), then we have
\(f(x_t) - f(x_{t+1}) \geq
\gamma_t \Phi_{t-1} / K - (C / 2) \gamma_t^2\)
by smoothness of \(f\).
By minimality of \(\gamma_{t}\),
therefore
\(f(x_t) - f(x_{t+1}) \geq \min_{0 \leq \gamma \leq 1}
(\gamma \Phi_{t-1} / K - (C/2) \gamma^{2})\),
which is \(\Phi_{t-1}^{2} / (2C K^{2})\) if \(\Phi_{t-1} < KC\),
and \(\Phi_{t-1} / K - C/2 \geq \frac{C}{2}\)
if \(\Phi_{t-1} \geq KC\).

Now we bound the number \(t'\) of consecutive positive oracle calls
immediately following an iteration \(t\) with a negative oracle call.
Note that the same argument bounds the number of initial consecutive
positive oracle calls with the choice \(t=0\), as we only use
\(f(x_{t+1}) - f(x^{*}) \leq 2 \Phi_{t}\) below.

Note that \(\Phi_{t} = \Phi_{t+1} = \dots = \Phi_{t+t'}\).
Therefore
  \begin{equation}
    2 \Phi_{t} \ge f(x_{t+1}) - f(x^*) \ge
    \sum_{\tau = t + 1}^{t + t'} (f(x_{\tau}) - f(x_{\tau +1}))
   \ge
    \begin{cases*}
      t' \frac{\Phi_{t}^2}{2C K^2} & if \(\Phi_{t} < KC\) \\
      t' \left(\frac{\Phi_{t}}{K}-\frac{C}{2} \right)
      & if \(\Phi_{t} \geq KC\)
    \end{cases*},
  \end{equation}
  which gives in the case \(\Phi_{t} < KC\) that \(t' \le 4CK^2 / \Phi_{t}\), and
  in the case \(\Phi_{t} \ge KC\) that
  \[
    t' \le \frac{2\Phi_{t}}{\frac{\Phi_{t}}{K}-\frac{C}{2}} = \frac{4K\Phi_{t}}{2\Phi_{t} -
  KC} \leq \frac{4K\Phi_{t}}{2\Phi_{t} - \Phi_{t}} = 4K.
  \]
Thus iteration \(t\) is followed by at most \(4K\) consecutive
positive oracle calls as long as \(\Phi_{t} \geq KC\),
and \(4CK^2 / \Phi_{t} < 2^{\ell + 1} \cdot 4K\) ones for
\(2^{- \ell - 1} KC < \Phi_{t} \leq 2^{-\ell} KC\) with \(\ell \geq 0\).

Adding up the number of oracle calls gives the desired rate:
in addition to the positive oracle calls we also have
at most \(\lceil \log (\Phi_0 / \varepsilon) \rceil + 1\)
negative oracle calls, where
\(\log(\cdot)\) is the binary logarithm and \(\varepsilon\) is the
(additive) accuracy.
Thus after a total of 
\begin{equation*}
  \left\lceil
    \log \frac{\Phi_{0}}{ \varepsilon}
  \right\rceil
  + 1
  + 4K
  \left\lceil
    \log \frac{\Phi_{0}}{KC}
  \right\rceil
  + \sum_{\ell = 0}^{\lceil \log (KC / \varepsilon) \rceil}
  2^{\ell + 1} \cdot 4K
  \leq
  \left\lceil
    \log \frac{\Phi_{0}}{\varepsilon}
  \right\rceil
  + 1
  + 4K
  \left\lceil
    \log \frac{\Phi_{0}}{KC}
  \right\rceil
  + \frac{16 K^{2} C}{\varepsilon}
\end{equation*}
iterations (or equivalently oracle calls) we have \(f(x_{t}) - f(x^*) \leq \varepsilon\).
\end{proof}
\end{theorem}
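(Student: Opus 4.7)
The plan is to track two interleaved quantities: the bound \(\Phi_{t}\) which is halved exactly at the iterations where the oracle returns \FALSE, and the primal gap \(f(x_{t+1})-f(x^{*})\). The invariant I would maintain is \(f(x_{t+1})-f(x^{*}) \leq 2\Phi_{t}\) after each iteration. On a negative call at step \(t\), the oracle's guarantee \(\nabla f(x_{t})(x_{t}-x)\leq \Phi_{t-1}\) for all \(x\in P\) together with convexity immediately gives \(f(x_{t})-f(x^{*})\leq \Phi_{t-1}=2\Phi_{t}\), and since \(x_{t+1}=x_{t}\) in this branch the invariant is preserved. Positive calls keep \(\Phi_{t}=\Phi_{t-1}\), so I only need to show they cannot violate the invariant either; this follows because the line search only decreases \(f\).

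Next I would quantify per-step progress on positive calls. Smoothness together with the oracle's guarantee \(\nabla f(x_{t})(x_{t}-v_{t}) > \Phi_{t-1}/K\) implies \(f(x_{t})-f((1-\gamma)x_{t}+\gamma v_{t}) \geq \gamma\Phi_{t-1}/K - C\gamma^{2}/2\) for every \(\gamma\in[0,1]\); since \(\gamma_{t}\) is chosen by line search, the progress on this step is at least the maximum of that expression over \([0,1]\). A small calculation splits into two regimes: if \(\Phi_{t-1} < KC\) the unconstrained maximizer is interior and yields progress \(\Phi_{t-1}^{2}/(2CK^{2})\); if \(\Phi_{t-1}\geq KC\) the maximizer is \(\gamma = 1\) and the progress is at least \(\Phi_{t-1}/K - C/2 \geq \Phi_{t-1}/(2K)\).

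Now I would bound the number \(t'\) of consecutive positive calls following any negative call (or the initial run). Throughout this block \(\Phi_{t}\) is constant, and the telescoped progress sum is at most the available gap \(2\Phi_{t}\) by the invariant above. Dividing \(2\Phi_{t}\) by the per-step progress lower bound gives \(t'\leq 4K\) whenever \(\Phi_{t}\geq KC\), and \(t'\leq 4CK^{2}/\Phi_{t}\) whenever \(\Phi_{t} < KC\). Since every negative call halves \(\Phi\), the values of \(\Phi\) during the ``small'' regime form a geometric sequence \(\Phi_{t}\in(2^{-\ell-1}KC,2^{-\ell}KC]\), contributing at most \(2^{\ell+1}\cdot 4K\) positive calls for each level \(\ell=0,1,\ldots,\lceil\log(KC/\varepsilon)\rceil\).

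Finally I would assemble the count. The number of negative calls is at most \(\lceil\log(\Phi_{0}/\varepsilon)\rceil+1\), since \(\Phi_{t}\) halves on each one and as soon as \(2\Phi_{t}\leq\varepsilon\) the invariant yields the desired accuracy. Adding the ``large regime'' positive calls (at most \(4K\lceil\log(\Phi_{0}/KC)\rceil\)) and summing the geometric series for the ``small regime'' gives the stated bound \(\lceil\log(\Phi_{0}/\varepsilon)\rceil+1+4K\lceil\log(\Phi_{0}/KC)\rceil+16K^{2}C/\varepsilon\). The main obstacle I expect is being careful with the two-regime progress estimate and making sure the geometric sum over \(\ell\) really telescopes to \(16K^{2}C/\varepsilon\) rather than something larger; the rest is bookkeeping around the invariant.
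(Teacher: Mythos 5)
Your proposal is correct and follows essentially the same route as the paper's proof: the invariant \(f(x_{t+1})-f(x^{*})\leq 2\Phi_{t}\), the two-regime progress bound from line search and smoothness, the \(4K\) bound per block in the large regime, and the geometric sum over halving levels in the small regime. Your shortcut \(\Phi_{t-1}/K - C/2 \geq \Phi_{t-1}/(2K)\) for \(\Phi_{t-1}\geq KC\) yields the same \(t'\leq 4K\) as the paper's slightly more roundabout manipulation, so the two arguments coincide in substance.
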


As seen from the proof, the algorithm receives few negative oracle
calls by design; these are usually more expensive than positive ones
as the oracle has to compute a certificate by, e.g., executing a full
linear optimization oracle call.

\begin{corollary}
\label{cor:negCalls}
Algorithm~\ref{alg:CG-imp-imp} receives at most \(\lceil \log
  \Phi_0 / \varepsilon \rceil + 1\) negative oracle answers.
\end{corollary}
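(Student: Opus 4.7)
The plan is to read off the bound directly from how $\Phi_t$ evolves in Algorithm~\ref{alg:CG-imp-imp}. Inspection of the algorithm shows that $\Phi_t$ is modified only on Line~\ref{line:phi-update}, which is executed precisely when the oracle returns \FALSE. Every such negative call halves $\Phi_t$, and positive calls leave it unchanged. Hence if $N_t$ denotes the number of negative oracle calls seen by iteration $t$, then $\Phi_t = \Phi_0 \cdot 2^{-N_t}$.

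Next I would invoke the bookkeeping already contained in the proof of Theorem~\ref{thm:conv-parameter-free}, namely that right after a negative call at step $t$ one has
\[
f(x_{t+1}) - f(x^{*}) \leq \Phi_{t-1} = 2 \Phi_{t}.
\]
Therefore the primal gap is at most $2 \Phi_0 \cdot 2^{-N_t}$ immediately following any negative call. If the algorithm has not yet reached accuracy $\varepsilon$, meaning the gap exceeds $\varepsilon$, then $2 \Phi_0 \cdot 2^{-N_t} > \varepsilon$, which rearranges to $N_t < \log(\Phi_0/\varepsilon) + 1$.

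Finally, I would close the argument by noting that once one additional negative call pushes $N_t$ up to $\lceil \log(\Phi_0/\varepsilon) \rceil + 1$, the bound above guarantees the desired accuracy has been attained and no further negative calls are needed (or rather, counting them is unnecessary because the algorithm has served its purpose). So the total number of negative oracle answers received before achieving $f(x_t) - f(x^{*}) \leq \varepsilon$ is at most $\lceil \log(\Phi_0/\varepsilon) \rceil + 1$, matching the claim. The main obstacle is essentially zero: the corollary is a one-line unpacking of the halving mechanism, and the only care needed is to align the ceiling/off-by-one correctly with the definition of $\Phi_t$ used in the main theorem's proof.
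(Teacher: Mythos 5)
Your proof is correct and follows essentially the same route as the paper, which establishes this count inside the proof of Theorem~\ref{thm:conv-parameter-free}: negative calls are exactly the iterations that halve \(\Phi_t\), the primal gap after a negative call is bounded by \(2\Phi_t = 2\Phi_0 2^{-N_t}\), and positive calls never increase the gap, so once \(N_t\) reaches \(\lceil \log(\Phi_0/\varepsilon)\rceil + 1\) the accuracy \(\varepsilon\) is guaranteed. Your handling of the off-by-one (the factor \(2\) in \(2\Phi_t\) contributing the \(+1\)) matches the paper's accounting.
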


\begin{remark}[Improved use of Linear Optimization oracle]
  A possible improvement to Line~\ref{line:phi-update} is
  \(\Phi_{t} \leftarrow \max_{x \in P} \nabla f(x_{t}) (x_{t} - x) /
  2\), assuming that at a negative call the oracle also provides the
  dual gap \(\max_{x \in P} \nabla f(x_{t}) (x_{t} - x)\) as well as
  the minimizer \(\bar x \in P\) of the oracle call.  This is the case e.g., when the
  weak separation oracle is implemented as in
  Algorithm~\ref{alg:LPSepLPOracle}.  Clearly, the minimizer
  \(\bar x\) can be also used to perform a progress step; albeit without guarantee w.r.t.~to \(\Phi_t\).
\end{remark}

\begin{remark}[Line Search]
\label{sec:param-free-cond-linesearch}
If line search is too expensive we can choose
\(\gamma_t = \min \{1, \Phi_t / KC\}\) in
Algorithm~\ref{alg:CG-imp-imp}. In this case an estimate of the
curvature \(C\) is required.
\end{remark}

\section{Lazy Online Conditional Gradient}
\label{sec:online-fw-error}

In this section we lazify the online conditional gradient
algorithm of \cite{hazan2012projection}
over arbitrary polytopes \(P = \{x \in \mathbb{R}^n \mid Ax\le b\}\),
resulting in Algorithm~\ref{alg:online-FW-WSep}.
We slightly improve constant factors by replacing
\cite[Lemma~3.1]{hazan2012projection}
with a better estimation via
solving a quadratic inequality arising from strong convexity.
In this section the norm \(\norm{\cdot}\) can be arbitrary.

\begin{algorithm}
  \caption{Lazy Online Conditional Gradient (LOCG)}
\label{alg:online-FW-WSep}
  \begin{algorithmic}[1]
    \REQUIRE
    functions \(f_{t}\),
    start vertex \(x_{1} \in P\),
    weak linear separation oracle \(\LPsep{P}\),
    parameters \(K\), \(C\), \(b\), \(S\), \(s\);
    diameter \(D\)
    \ENSURE points \(x_{t}\)
    \FOR{\(t=1\) \TO \(T-1\)}
      \STATE \(\nabla_{t} \leftarrow \nabla f_{t}(x_{t})\)
      \IF{\(t = 1\)}
        \STATE \(h_{1} \leftarrow \min \{
        \dualnorm{\nabla_{1}} D,
        2 \dualnorm{\nabla_{1}}^{2} / S\}\)
      \ELSE
        \STATE
        \label{line:upper-bound}
        \(h_{t} \leftarrow
        \Phi_{t-1}
        +
        \min\left\{
          \dualnorm{\nabla_{t}} D,
          \frac{\dualnorm{\nabla_{t}}^{2}}{S t^{1-s}}
          +
          2 \sqrt{\frac{\dualnorm{\nabla_{t}}^{2}}{2 S t^{1-s}}
            \left(
              \frac{\dualnorm{\nabla_{t}}^{2}}{2 S t^{1-s}}
              +
              \Phi_{t-1}
            \right)}
        \right\}\)
      \ENDIF
      \STATE
      \label{line:big-Phi-t}
      \(\Phi_{t} \leftarrow
      \frac{h_{t} + \frac{C t^{1 - b} \gamma_{t}^{2}}{2 (1 - b)}}
      {1 + \frac{\gamma_{t}}{K}}\)
      \STATE
      \label{line:OFW-augmented-v}
      \(v_t \leftarrow \LPsep{P}(\sum_{i=1}^{t} \nabla f_{i}(x_{t}),
      x_{t}, \Phi_{t}, K)\)
      \IF{\(v_{t} = \FALSE\)}
        \STATE
        \label{line:OFW-same}
        \(x_{t+1} \leftarrow x_{t}\)
      \ELSE
        \STATE
        \label{line:OFW-update}
        \(x_{t+1} \leftarrow
        (1 - \gamma_{t}) x_{t} + \gamma_{t} v_{t}\)
        \STATE
        \label{line:Phi-update}
        \(\Phi_{t} \leftarrow
        h_{t} - \sum_{i=1}^{t} f_{i}(x_{t})
        + \sum_{i=1}^{t} f_{i}(x_{t+1})\)
      \ENDIF
    \ENDFOR
  \end{algorithmic}
\end{algorithm}

\begin{theorem}
  \label{thm:OFW-error}
  Let \(0 \leq b, s < 1\).
  Let \(K \geq 1\) be an accuracy
  parameter.
  Assume \(f_{t}\) is \(L\)-Lipschitz,
  and smooth with curvature at most \(C t^{-b}\).
  Let \(D \coloneqq \max_{y_{1}, y_{2} \in P} \norm{y_{1} - y_{2}}\)
  denote the diameter of \(P\) in norm \(\norm{}\).
  Then the following hold for
  the points \(x_{t}\)
  computed by Algorithm~\ref{alg:online-FW-WSep}
  where
  \(x_{T}^{*}\) is the minimizer of \(\sum_{t=1}^{T} f_{t}\):
  \begin{enumerate}
  \item\label{item:OFW-stochastic}
    With the choice
    \begin{equation*}
      \gamma_{t}
      =
      t^{- (1 - b) / 2}
      ,
    \end{equation*}
    the \(x_{t}\) satisfy
    \begin{equation}
      \label{eq:OFW-error_stochastic}
      \frac{1}{T} \sum_{t=1}^{T} (f_{t}(x_{T}) - f_{t}(x_{T}^{*}))
      \leq
      A T^{- (1 - b) / 2}
      ,
    \end{equation}
    where
    \begin{equation*}
      A \coloneqq
      \frac{C K}{2 (1 - b)}
      +
      L (K + 1)
      D
      .
    \end{equation*}
  \item\label{item:OFW-adversarial}
    Moreover, if all the \(f_{t}\) are \(S t^{-s}\)-strongly convex,
    then with the choice
  \begin{equation*}
    \gamma_{t}
    =
    t^{(b + s - 2) / 3}
    ,
  \end{equation*}
  the \(x_{t}\) satisfy
  \begin{equation}
    \label{eq:OFW-error}
    \frac{1}{T} \sum_{t=1}^{T} (f_{t}(x_{T}) - f_{t}(x_{T}^{*}))
    \leq
    A T^{- (2 (1 + b) - s) / 3}
    ,
  \end{equation}
  where
  \begin{equation*}
    A
    \coloneqq
    2
    \left(
      (K + 1) (K + 2) \frac{L^{2}}{S}
      + \frac{C K}{2 (1 - b)}
    \right)
    .
  \end{equation*}
  \end{enumerate}
\begin{proof}
We prove only Claim~\ref{item:OFW-adversarial},
as the proof of Claim~\ref{item:OFW-stochastic}
is similar and simpler.
Let \(F_{T} \coloneqq \sum_{t=1}^{T} f_{t}\).
Furthermore, let
\(\overline{h}_{T} \coloneqq A T^{1 - (2 (1 + b) - s) / 3}\)
be \(T\) times the right-hand side of Equation~\eqref{eq:OFW-error}.
In particular, \(F_{T}\) is \(S_{T}\)-strongly convex,
and smooth with curvature at most \(C_{F_{T}}\)
where
\begin{align}
  C_{F_{T}} &\coloneqq
  \frac{C T^{1 - b}}{1 - b}
  \geq
  C \sum_{t=1}^{T} t^{-b}
  ,
  &
  S_{T} &\coloneqq
  S T^{1-s}
  \leq
  S \sum_{t=1}^{T} t^{-s}
  .
\end{align}
We prove \(F_{t}(x_{t}) - F_{t}(x_{t}^{*}) \leq h_{t}
\leq \overline{h}_{t}\)
by induction on \(t\).
The case \(t=1\) is clear.
Let \(\overline{\Phi}_{t}\) denote the value of \(\Phi_{t}\)
in Line~\ref{line:big-Phi-t}, while we reserve \(\Phi_{t}\)
to denote its value as used in Line~\ref{line:upper-bound}.
We start by showing
\(F_{t}(x_{t+1}) - F_{t}(x_{t}^{*}) \leq \Phi_{t}
\leq \overline{\Phi}_{t}\).
We distinguish two cases depending on the oracle answer \(v_{t}\)
from Line~\ref{line:OFW-augmented-v}.
For a negative oracle answer (\(v_{t} = \algorithmicfalse\)),
we have \(\Phi_{t} = \overline{\Phi}_{t}\)
and the weak separation oracle
asserts
\(\max_{y \in P} \nabla F_{t}(x_{t}) (x_{t} - y)
\leq \Phi_{t}\),
which combined with the convexity of \(F_{t}\) provides
\begin{equation*}
  F_{t}(x_{t+1}) - F_{t}(x_{t}^{*})
  =
  F_{t}(x_{t}) - F_{t}(x_{t}^{*})
  \leq
  \nabla F_{t}(x_{t}) (x_{t} - x_{t^{*}})
  \leq \Phi_{t} = \overline{\Phi}_{t}.
\end{equation*}
Otherwise, for a positive oracle answer,
Line~\ref{line:Phi-update} and
the induction hypothesis provides
\(F_{t}(x_{t+1}) - F_{t}(x_{t}^{*})
\leq h_{t} + F_{t}(x_{t+1}) - F_{t}(x_{t}) = \Phi_{t}\).
To prove \(\Phi_{t} \leq \overline{\Phi}_{t}\),
we apply the smoothness of \(F_{t}\)
followed by the inequality provided by the choice of \(v_{t}\):
\begin{equation*}
  F_{t}(x_{t+1}) - F_{t}(x_{t}) - \frac{C_{F_{t}} \gamma_{t}^{2}}{2}
  \leq
  \nabla F_{t}(x_{t}) (x_{t+1} - x_{t})
  =
  \gamma_{t} \nabla F_{t}(x_{t}) (v_{t} - x_{t})
  \leq
  - \frac{\gamma_{t} \overline{\Phi}_{t}}{K}
  .
\end{equation*}
Rearranging provides the inequality:
\begin{equation*}
  \Phi_{t}
  =
  h_{t} + F_{t}(x_{t+1}) - F_{t}(x_{t})
  \leq
  h_{t}
  - \frac{\gamma_{t} \overline{\Phi}_{t}}{K}
  + \frac{C_{F_{t}} \gamma_{t}^{2}}{2}
  =
  \overline{\Phi}_{t}
  .
\end{equation*}
For later use,
we bound the difference between \(\overline{h}_{t}\)
and \(\overline{\Phi}_{t}\)
using the value of parameters, \(h_{t} \leq \overline{h}_{t}\),
and \(\gamma_{t} \leq 1\):
\begin{equation*}
  \overline{h}_{t} - \overline{\Phi}_{t}
  \geq
  \overline{h}_{t} -
  \frac{\overline{h}_{t} + \frac{C_{F_{t}} \gamma_{t}^{2}}{2}}
  {1 + \frac{\gamma_{t}}{K}}
  =
  \frac{\frac{\overline{h}_{t} \gamma_{t}}{K}
    - \frac{C_{F_{t}} \gamma_{t}^{2}}{2}}
  {1 + \frac{\gamma_{t}}{K}}
  \geq
  \frac{\frac{\overline{h}_{t} \gamma_{t}}{K}
    - \frac{C_{F_{t}} \gamma_{t}^{2}}{2}}
  {1 + \frac{1}{K}}
  =
  \frac{A - \frac{C K}{2 (1 - b)}}{K + 1}
  t^{[2 s - (1 + b)] / 3}
  .
\end{equation*}

We now apply
\(F_{t}(x_{t+1}) - F_{t}(x_{t}^{*}) \leq \Phi_{t}\),
together with
convexity of \(f_{t+1}\),
and the minimality
\(F_{t}(x_{t}^{*}) \leq F_{t}(x_{t+1}^{*})\) of \(x_{t}^{*}\),
followed by strong convexity of \(F_{t+1}\):
\begin{equation}
  \label{eq:OFW-recursion}
 \begin{split}
  F_{t+1}(x_{t+1}) - F_{t+1}(x_{t+1}^{*})
  &
  \leq
  (F_{t}(x_{t+1}) - F_{t}(x_{t}^{*}))
  +
  (f_{t+1}(x_{t+1}) - f_{t+1}(x_{t+1}^{*}))
  \\
  &
  \leq
  \Phi_{t}
  + \dualnorm{\nabla_{t+1}} \cdot \norm{x_{t+1} - x_{t+1}^{*}}
  \\
  &
  \leq
  \Phi_{t}
  + \dualnorm{\nabla_{t+1}} \sqrt{\frac{2}{S_{t+1}}
    (F_{t+1}(x_{t+1}) - F_{t+1}(x_{t+1}^{*}))}
  .
 \end{split}
\end{equation}
Solving the quadratic inequality provides
\begin{equation}
  \label{eq:OFW-new-bound}
  F_{t+1}(x_{t+1}) - F_{t+1}(x_{t+1}^{*})
  \leq
  \Phi_{t}
  + \frac{\dualnorm{\nabla_{t+1}}^{2}}{S_{t+1}}
  +
  2 \sqrt{\frac{\dualnorm{\nabla_{t+1}}^{2}}{2 S_{t+1}}
    \left(
      \frac{\dualnorm{\nabla_{t+1}}^{2}}{2 S_{t+1}}
      +
      \Phi_{t}
    \right)}
  .
\end{equation}
From Equation~\eqref{eq:OFW-recursion},
ignoring the last line,
we also obtain
\(F_{t+1}(x_{t+1}) - F_{t+1}(x_{t+1}^{*})
\leq \Phi_{t} + \dualnorm{\nabla_{t+1}} D\)
via the estimate \(\norm{x_{t+1} - x_{t+1}^{*}} \leq D\).
Thus \(F_{t+1}(x_{t+1}) - F_{t+1}(x_{t+1}^{*}) \leq h_{t+1}\),
by Line~\ref{line:upper-bound},
as claimed.

Now we estimate the right-hand side of
Equation~\eqref{eq:OFW-new-bound} by
using the actual value of the parameters,
the estimate \(\dualnorm{\nabla_{t+1}} \leq L\),
and the inequality \(s + b \leq 2\).
In fact, we estimate a proxy for the right-hand side.
Note that \(A\) was chosen to satisfy
the second inequality:
\begin{equation*}
 \begin{split}
  \frac{L^{2}}{S_{t+1}}
  +
  2 \sqrt{\frac{L^{2}}{2 S_{t+1}} \overline{h}_{t}}
  &
  \leq
  \frac{L^{2}}{S t^{1-s}}
  +
  2 \sqrt{\frac{L^{2}}{2 S t^{1-s}} \overline{h}_{t}}
  \leq
  \frac{L^{2}}{S} t^{[2 s - (1 + b)] / 3}
  +
  2 \sqrt{\frac{L^{2}}{2 S t^{1-s}} \overline{h}_{t}}
  \\
  &
  =
  \left(
    \frac{L^{2}}{S}
    +
    \sqrt{2 \frac{L^{2}}{S} A}
  \right)
  t^{[2 s - (1 + b)] / 3}
  \leq
  \frac{A - \frac{C K}{2 (1 - b)}}{K + 1}
  t^{[2 s - (1 + b)] / 3}
  \\
  &
  \leq
  \overline{h}_{t} - \overline{\Phi}_{t}
  \leq
  \overline{h}_{t} - \Phi_{t}
  .
 \end{split}
\end{equation*}
In particular,
\(\frac{L^{2}}{2 S_{t+1}} + \Phi_{t} \leq \overline{h}_{t}\)
hence combining with Equation~\eqref{eq:OFW-new-bound}
we obtain
\begin{equation}
 \begin{split}
  h_{t+1}
  &
  \leq
  \Phi_{t}
  + \frac{L^{2}}{S_{t+1}}
  +
  2 \sqrt{\frac{L^{2}}{2 S_{t+1}}
    \left(
      \frac{L^{2}}{2 S_{t+1}}
      +
      \Phi_{t}
    \right)}
  \\
  &
  \leq
  \Phi_{t}
  +
  \frac{L^{2}}{S_{t+1}}
  +
  2 \sqrt{\frac{L^{2}}{2 S_{t+1}} \overline{h}_{t}}
  \\
  &
  \leq
  \overline{h}_{t}
  \leq
  \overline{h}_{t+1}
  .
  \qedhere
 \end{split}
\end{equation}
\end{proof}
\end{theorem}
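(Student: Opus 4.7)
The plan is to establish, by induction on $t$, the invariant
$F_t(x_t) - F_t(x_t^*) \leq h_t \leq \overline{h}_t$, where $\overline{h}_t \coloneqq A \, t^{\,1 - (2(1+b)-s)/3}$ in Claim~\ref{item:OFW-adversarial} (and the analogue $A\, t^{\,1-(1-b)/2}$ in Claim~\ref{item:OFW-stochastic}). The base case $t=1$ is immediate from the initialization of $h_1$ in the algorithm. The inductive step splits naturally into two sub-steps: first bound $F_t(x_{t+1}) - F_t(x_t^*) \leq \Phi_t \leq \overline{\Phi}_t$ at the same time horizon; then transfer from the comparator $x_t^*$ to $x_{t+1}^*$ to bound $h_{t+1}$.

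For the first sub-step, observe that $F_t$ is smooth with aggregate curvature $C_{F_t} \leq C t^{1-b}/(1-b)$ and (in Claim~\ref{item:OFW-adversarial}) strongly convex with parameter $S_t \geq S t^{1-s}$. A negative oracle call in Line~\ref{line:OFW-augmented-v} certifies $\max_{y\in P} \nabla F_t(x_t)(x_t-y) \leq \Phi_t = \overline{\Phi}_t$, which combined with convexity gives the bound at $x_{t+1} = x_t$. A positive call combines smoothness of $F_t$ with the oracle's improvement guarantee $\nabla F_t(x_t)(v_t-x_t) \leq -\overline{\Phi}_t/K$; rearranging and using the definition on Line~\ref{line:big-Phi-t} yields $\Phi_t \leq \overline{\Phi}_t$. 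A short calculation using $\gamma_t \leq 1$ and $h_t \leq \overline{h}_t$ also furnishes the slack estimate
\begin{equation*}
\overline{h}_t - \overline{\Phi}_t \;\geq\; \frac{A - \frac{CK}{2(1-b)}}{K+1}\, t^{[2s - (1+b)]/3},
\end{equation*}
which will absorb the penalty from shifting the comparator.

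For the second sub-step, combine convexity of $f_{t+1}$ with the minimality $F_t(x_t^*) \leq F_t(x_{t+1}^*)$ to obtain $F_{t+1}(x_{t+1}) - F_{t+1}(x_{t+1}^*) \leq \Phi_t + \dualnorm{\nabla_{t+1}} \cdot \norm{x_{t+1} - x_{t+1}^*}$. Strong convexity of $F_{t+1}$ then bounds the distance via $\norm{x_{t+1}-x_{t+1}^*}^2 \leq (2/S_{t+1})(F_{t+1}(x_{t+1}) - F_{t+1}(x_{t+1}^*))$, yielding a quadratic inequality in the new gap. Solving this quadratic produces exactly the expression encoded by Line~\ref{line:upper-bound}; the alternative branch with $\dualnorm{\nabla_{t+1}} D$ comes from the diameter estimate $\norm{x_{t+1}-x_{t+1}^*} \leq D$, needed when strong convexity is weak or absent. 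Hence $h_{t+1}$ is a valid upper bound on $F_{t+1}(x_{t+1}) - F_{t+1}(x_{t+1}^*)$.

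The main obstacle is closing the induction by verifying $h_{t+1} \leq \overline{h}_{t+1}$. Using $\dualnorm{\nabla_{t+1}} \leq L$ and $S_{t+1} \geq S t^{1-s}$, the increment $h_{t+1} - \Phi_t$ is dominated by $L^2/S_{t+1} + 2\sqrt{(L^2/(2S_{t+1}))(L^2/(2S_{t+1}) + \overline{h}_t)}$; inserting the value $\overline{h}_t = A t^{1-(2(1+b)-s)/3}$ shows that both terms scale exactly as $t^{[2s-(1+b)]/3}$. The exponent $(2(1+b)-s)/3$ is chosen precisely so that these terms match the growth rate of the slack $\overline{h}_t - \overline{\Phi}_t$, and the constant $A$ is tuned so that $L^2/S + \sqrt{2 L^2 A/S}$ fits under $(A - CK/[2(1-b)])/(K+1)$; this is where the definition of $A$ in the theorem statement comes from. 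Claim~\ref{item:OFW-stochastic} follows the same blueprint but without the quadratic step: one simply uses $\norm{x_{t+1}-x_{t+1}^*} \leq D$, so $h_{t+1} - \Phi_t \leq LD$ at each step, and the choice $\gamma_t = t^{-(1-b)/2}$ balances the per-step slack against this constant increment, producing the rate $T^{-(1-b)/2}$ with the simpler constant $A = CK/[2(1-b)] + L(K+1)D$.
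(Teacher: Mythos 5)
Your proposal is correct and follows essentially the same route as the paper's proof: the same induction invariant $F_t(x_t) - F_t(x_t^*) \leq h_t \leq \overline{h}_t$, the same two-stage inductive step (oracle case split to get $\Phi_t \leq \overline{\Phi}_t$, then comparator shift via the strong-convexity quadratic), the same slack estimate $\overline{h}_t - \overline{\Phi}_t$, and the same tuning of $A$ and the exponent to close the induction. The only cosmetic difference is that you keep $L^2/(2S_{t+1}) + \overline{h}_t$ inside the square root where the paper first collapses it to $\overline{h}_t$, which is absorbed by the same condition $s + b \leq 2$.
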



\subsection{Stochastic and Adversarial Versions}
\label{sec:offl-stoch-vers}

Complementing the offline algorithms from
Section~\ref{sec:offline-frank-wolfe},
we will now derive various online versions.
The presented cases here are similar to
those in \cite{hazan2012projection} and thus we state them without proof.

For stochastic cost functions \(f_{t}\),
we obtain bounds from
Theorem~\ref{thm:OFW-error}~\ref{item:OFW-stochastic} similar to
\cite[Theorems~4.1 and~4.3]{hazan2012projection}
(with \(\delta\) replaced by \(\delta / T\) in the bound
to correct an inaccuracy in the original argument).
The proof is analogous and hence omitted, but
note that
\(\norm[2]{y_{1} - y_{2}} \leq \sqrt{\norm[1]{y_{1} - y_{2}}
  \norm[\infty]{y_{1} - y_{2}}} \leq \sqrt{k}\)
for all \(y_{1}, y_{2} \in P\).
\begin{corollary}
  \label{cor:stochastic-error}
  Let \(f_{t}\) be convex functions
  sampled i.i.d.\ with expectation \(\expectation{f_{t}} = f^{*}\),
  and \(\delta > 0\).
  Assume that the \(f_{t}\) are \(L\)-Lipschitz in the \(2\)-norm.
  \begin{enumerate}
  \item\label{item:stochastic-error_smooth}
    If all the \(f_{t}\) are smooth with curvature at most \(C\),
    then Algorithm~\ref{alg:online-FW-WSep}
    applied to the \(f_{t}\)
    (with \(b = 0\))
    yields with probability \(1 - \delta\)
    \begin{equation}
      \label{eq:stochastic-error_smooth}
      \sum_{t=1}^{T} f^{*}(x_{t})
      - \min_{x \in P} \sum_{t=1}^{T} f^{*}(x)
      \leq
      O\left(
        C \sqrt{T}
        + L k\sqrt{n T \log (n T^{2} / \delta) \log T}
      \right)
      .
    \end{equation}
  \item\label{item:stochastic-error_non-smooth}
    Without any smoothness assumption,
    Algorithm~\ref{alg:online-FW-WSep}
    (applied to smoothenings of the \(f_{t}\))
    provides with probability \(1 - \delta\)
    \begin{equation}
      \label{eq:stochastic-error_non-smooth}
      \sum_{t=1}^{T} f^{*}(x_{t})
      - \min_{x \in P} \sum_{t=1}^{T} f^{*}(x)
      \leq
      O\left(
        \sqrt{n} L k T^{2/3}
        + L k\sqrt{n T \log (n T^{2} / \delta) \log T}
      \right)
      .
    \end{equation}
  \end{enumerate}
\end{corollary}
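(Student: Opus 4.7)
The plan is a standard online-to-batch conversion: apply Theorem~\ref{thm:OFW-error}~\ref{item:OFW-stochastic} to obtain an in-sample regret bound, and then add a martingale concentration step to translate it into a high-probability bound against the expected loss \(f^{*}\). Throughout, I would work with \(\norm{\cdot}\) chosen as the \(\ell_{2}\)-norm, using the hint \(\norm[2]{y_{1} - y_{2}} \leq \sqrt{k}\) to bound the diameter \(D\) entering the constant \(A\) of Theorem~\ref{thm:OFW-error} by \(\sqrt{k}\).

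For part~\ref{item:stochastic-error_smooth} with \(b = 0\), the theorem yields \(F_{T}(x_{T}) - F_{T}(x_{T}^{*}) \leq O(\sqrt{T}(C K + L (K+1) \sqrt{k}))\), where \(F_{T} \coloneqq \sum_{t=1}^{T} f_{t}\) and \(x_{T}^{*} \coloneqq \argmin{x \in P} F_{T}(x)\). Since the iterate \(x_{t}\) produced by Algorithm~\ref{alg:online-FW-WSep} is measurable with respect to \(f_{1}, \ldots, f_{t-1}\) (the step uses \(\nabla f_{t}(x_{t})\) to build \(x_{t+1}\), not \(x_{t}\) itself), the differences \(f^{*}(x_{t}) - f_{t}(x_{t})\) form a bounded martingale difference sequence, each at most \(2 L \sqrt{k}\) in absolute value by Lipschitz continuity and the diameter bound. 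Azuma--Hoeffding at confidence \(\delta / T\) (the correction alluded to in the prose before the corollary), combined with a second Hoeffding bound on \(\sum_{t} [f_{t}(x^{*}) - f^{*}(x^{*})]\) and a follow-the-leader conversion of the in-sample regret into an iterate-wise bound on \(\sum_{t}[f_{t}(x_{t}) - f_{t}(x^{*})]\), lets me control the decomposition
\begin{equation*}
  \sum_{t=1}^{T} [f^{*}(x_{t}) - f^{*}(x^{*})]
  =
  \sum_{t=1}^{T} [f^{*}(x_{t}) - f_{t}(x_{t})]
  + \sum_{t=1}^{T} [f_{t}(x_{t}) - f_{t}(x^{*})]
  + \sum_{t=1}^{T} [f_{t}(x^{*}) - f^{*}(x^{*})]
\end{equation*}
and thereby establish \eqref{eq:stochastic-error_smooth}. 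The \(\sqrt{n \log(n T^{2} / \delta)}\) factor arises from a union bound over the (sparse) vertices of \(P\) to pass from pointwise to uniform concentration.

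For part~\ref{item:stochastic-error_non-smooth} I would smooth each \(f_{t}\) by convolution with the uniform distribution on a Euclidean ball of radius \(\delta\), obtaining \(\hat{f}_{t}\) which is \(O(L \sqrt{n}/\delta)\)-smooth, still \(L\)-Lipschitz, and satisfies \(\abs{f_{t}(x) - \hat{f}_{t}(x)} \leq L \delta\) uniformly on \(P\). Applying part~\ref{item:stochastic-error_smooth} to the smoothed sequence \(\hat{f}_{t}\) controls the regret against \(\hat{f}^{*}\); replacing \(\hat{f}^{*}\) by \(f^{*}\) costs an additional \(2 L \delta T\). Balancing the \(O(L\sqrt{n T}/\delta)\) smoothness term against the bias \(L \delta T\) by \(\delta \asymp (n/T)^{1/6}\) yields the \(T^{2/3}\) dependence in \eqref{eq:stochastic-error_non-smooth}.

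The main obstacle is the conversion of the one-shot in-sample regret \(F_{T}(x_{T}) - F_{T}(x_{T}^{*})\) produced by Theorem~\ref{thm:OFW-error} into the iterate-wise online regret \(\sum_{t}(f_{t}(x_{t}) - f_{t}(x^{*}))\) driving the middle term of the decomposition. This requires exploiting that LOCG is effectively a follow-the-leader method, so \(x_{t}\) is approximately optimal for \(F_{t-1}\) at rate \(O(\sqrt{t-1})\), and a be-the-leader telescoping argument then converts the in-sample optimality into the iterate-wise online regret, at the cost of a logarithmic factor in \(T\).
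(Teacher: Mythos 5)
The paper itself gives no proof here: it defers to \cite[Theorems~4.1 and~4.3]{hazan2012projection}, noting only the substitution \(\delta \mapsto \delta/T\) and the diameter bound \(D \leq \sqrt{k}\). So the comparison is against that argument, and your proposal deviates from it at exactly the step you flag as the main obstacle --- and that step, as you describe it, does not work. A be-the-leader telescoping with \emph{approximate} leaders accumulates the approximation errors additively: if \(x_{t+1}\) minimizes \(F_{t}\) only up to \(\epsilon_{t} = O(\sqrt{t})\), the telescoped bound is \(\sum_{t} f_{t}(x_{t+1}) \leq F_{T}(x^{*}) + \sum_{t}\epsilon_{t} = F_{T}(x^{*}) + O(T^{3/2})\), which is vacuous; and without strong convexity the iterate-stability term \(\sum_t \abs{f_t(x_t)-f_t(x_{t+1})}\) is the only thing FTL gives you, so there is no route to an \(O(\sqrt{T}\,\mathrm{polylog})\) bound on the online regret \(\sum_{t}[f_{t}(x_{t}) - f_{t}(x^{*})]\) this way. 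Indeed, the whole point of the stochastic analysis in \cite{hazan2012projection} is to \emph{avoid} the online regret: one uses the anytime guarantee \(F_{t-1}(x_{t}) - \min_{x\in P} F_{t-1}(x) \leq \overline{h}_{t-1} = O(A\sqrt{t-1})\) for \emph{every} \(t\) (this is what the induction in the proof of Theorem~\ref{thm:OFW-error} actually establishes, even though the theorem statement only quotes \(t=T\)), divides by \(t-1\), and combines with uniform concentration of \(\frac{1}{t-1}F_{t-1}\) around \(f^{*}\) over \(P\) (union bound over the at most \(k\)-sparse vertices, whence the \(k\sqrt{n\log(nT^{2}/\delta)}\) factor) to conclude \(f^{*}(x_{t}) - f^{*}(x^{*}) \leq O(1/\sqrt{t}) + O(Lk\sqrt{n\log(nT^{2}/\delta)/t})\) for each \(t\) separately; summing over \(t\) gives \eqref{eq:stochastic-error_smooth}. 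The crucial feature is that the \(O(\sqrt{t})\) optimization error is \emph{normalized by} \(t\) before being summed, which your telescoping forgoes. Your martingale observation (that \(x_{t}\) is \(\mathcal{F}_{t-1}\)-measurable so \(f^{*}(x_{t}) - f_{t}(x_{t})\) is a martingale difference) is correct and is part of the concentration step, but note that Lipschitzness alone does not bound \(\abs{f_{t}(x) - f^{*}(x)}\) by \(2L\sqrt{k}\) without a normalization of the function values.

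Two smaller points. First, in part~\ref{item:stochastic-error_non-smooth} the balancing you describe does not produce the stated exponent: with curvature \(C \asymp L\sqrt{n}\,k/r\) for smoothing radius \(r\) and bias \(LrT\), equating \(C\sqrt{T}\) with \(LrT\) gives \(r \asymp n^{1/4}k^{1/2}T^{-1/4}\) and a \(T^{3/4}\) term, not \(T^{2/3}\); the \(T^{2/3}\) rate in \cite[Theorem~4.3]{hazan2012projection} comes from re-running the analysis with the smoothing coupled to the step-size schedule (effectively a different choice of exponent in \(\gamma_{t}\)) rather than from applying part~\ref{item:stochastic-error_smooth} as a black box. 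Second, reusing \(\delta\) both as the confidence parameter and as the smoothing radius should be avoided. The overall architecture (regret theorem plus martingale concentration plus smoothing) is the right one, but the central conversion step needs to be replaced by the per-iteration normalized argument sketched above.
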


Similar to \cite[Theorem~4.4]{hazan2012projection},
from Theorem~\ref{thm:OFW-error}~\ref{item:OFW-adversarial}
we obtain the following regret bound for adversarial cost functions
with an analogous proof.
\begin{corollary}
  \label{cor:adversarial-error}
  For any \(L\)-Lipschitz convex cost functions \(f_{t}\),
  Algorithm~\ref{alg:online-FW-WSep}
  applied to the functions
  \(\tilde{f}_{t}(x) \coloneqq
  \nabla f_{t}(x_{t}) x
  + \frac{2L}{\sqrt{k}} t^{-1/4} \norm[2]{x - x_{1}}^{2}\)
  (with \(b = s = 1/4\), \(C = L \sqrt{k}\), \(S = L / \sqrt{k}\),
  and Lipschitz constant \(3 L\))
  achieving regret
  \begin{equation}
    \label{eq:adversarial-error}
    \sum_{t=1}^{T} f_{t}(x_{t})
    - \min_{x \in P} \sum_{t=1}^{T} f_{t}(x)
    \leq
    O(L \sqrt{k} T^{3/4})
  \end{equation}
  with at most \(T\) calls to the weak separation oracle.
\end{corollary}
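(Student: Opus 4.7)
The plan is to apply Theorem~\ref{thm:OFW-error}~\ref{item:OFW-adversarial} to the regularized linear surrogates $\tilde{f}_t$ and then translate the resulting ``surrogate regret'' back to the original losses by a standard linearization argument, following the template of \cite[Theorem~4.4]{hazan2012projection}.

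First, I would verify that the $\tilde{f}_t$ fit the hypotheses of Theorem~\ref{thm:OFW-error}~\ref{item:OFW-adversarial} with $b = s = 1/4$. Since $\tilde{f}_t$ is linear plus the quadratic $\frac{2L}{\sqrt{k}} t^{-1/4} \norm[2]{x - x_1}^2$, its Hessian equals $\frac{4L}{\sqrt{k}} t^{-1/4} I$, so $\tilde{f}_t$ is $\Omega(L/\sqrt{k})\, t^{-1/4}$-strongly convex in $\ell_2$, and its curvature is bounded by (top Hessian eigenvalue) $\cdot D^{2} = O(L\sqrt{k})\, t^{-1/4}$, using $D \leq \sqrt{k}$ for the $\ell_2$-diameter of $P$ as noted before Corollary~\ref{cor:stochastic-error}. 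The gradient $\nabla \tilde{f}_t(x) = \nabla f_t(x_t) + \tfrac{4L}{\sqrt{k}} t^{-1/4}(x - x_1)$ has dual $\ell_2$-norm at most $L + 4L t^{-1/4} \leq 5L$, so $\tilde{f}_t$ is $O(L)$-Lipschitz. Invoking Theorem~\ref{thm:OFW-error}~\ref{item:OFW-adversarial} then yields exponent $(2(1+b) - s)/3 = 3/4$ and constant $A = O(L\sqrt{k})$, hence
\[
\sum_{t=1}^T \bigl(\tilde{f}_t(x_t) - \tilde{f}_t(\tilde{x}_T^*)\bigr) \leq A\, T^{1/4} = O\bigl(L\sqrt{k}\, T^{1/4}\bigr),
\]
where $\tilde{x}_T^* \in \argmin{x \in P} \sum_{t=1}^{T} \tilde{f}_t(x)$. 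Since Algorithm~\ref{alg:online-FW-WSep} calls the weak separation oracle once per iteration, this accounts for the at most $T$ oracle calls.

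Next, for the linearization step, let $u^* \in \argmin{x \in P} \sum_{t=1}^T f_t(x)$ denote the comparator. Expanding the definition of $\tilde{f}_t$ gives the identity
\[
\nabla f_t(x_t)(x_t - u^*) = \bigl(\tilde{f}_t(x_t) - \tilde{f}_t(u^*)\bigr) - \tfrac{2L}{\sqrt{k}} t^{-1/4} \bigl(\norm[2]{x_t - x_1}^2 - \norm[2]{u^* - x_1}^2\bigr).
\]
Convexity of $f_t$ yields $f_t(x_t) - f_t(u^*) \leq \nabla f_t(x_t)(x_t - u^*)$. Summing over $t$, discarding the nonpositive term $-\frac{2L}{\sqrt{k}} t^{-1/4} \norm[2]{x_t - x_1}^2$, and using the comparator inequality $\sum_t \tilde{f}_t(u^*) \geq \sum_t \tilde{f}_t(\tilde{x}_T^*)$, I obtain
\[
\sum_{t=1}^T f_t(x_t) - \sum_{t=1}^T f_t(u^*) \leq \sum_{t=1}^T \bigl(\tilde{f}_t(x_t) - \tilde{f}_t(\tilde{x}_T^*)\bigr) + \tfrac{2L}{\sqrt{k}} \norm[2]{u^* - x_1}^2 \sum_{t=1}^T t^{-1/4}.
\]
Bounding $\norm[2]{u^* - x_1}^2 \leq D^2 \leq k$ and $\sum_{t=1}^T t^{-1/4} = O(T^{3/4})$, the regularization term contributes $O(L\sqrt{k}\, T^{3/4})$, which dominates the surrogate regret $O(L\sqrt{k}\, T^{1/4})$ and yields the claimed bound.

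The main obstacle I anticipate is not any single hard step but rather the bookkeeping of constants: matching the curvature, strong-convexity, and Lipschitz parameters generated by the quadratic regularizer against the prescribed form $C t^{-b}$, $S t^{-s}$, $L$-Lipschitz of Theorem~\ref{thm:OFW-error}~\ref{item:OFW-adversarial} with $b = s = 1/4$, and verifying that this specific choice of $b, s$ is precisely the one that balances the two sources of regret --- the $O(T^{1/4})$ coming from the lazy online Frank-Wolfe dynamics and the $O(T^{3/4})$ coming from the cumulative regularization penalty --- into the overall $O(L\sqrt{k}\, T^{3/4})$ rate.
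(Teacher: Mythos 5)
Your overall strategy --- apply Theorem~\ref{thm:OFW-error}~\ref{item:OFW-adversarial} to the regularized surrogates $\tilde f_t$ and then linearize --- is the intended one (the paper omits this proof, deferring to the analogous argument in \cite{hazan2012projection}), and your parameter bookkeeping for $b$, $s$, $C$, $S$, and the Lipschitz constant is essentially right up to constants. However, there is a genuine gap at the central step. Theorem~\ref{thm:OFW-error}~\ref{item:OFW-adversarial} bounds $\frac{1}{T}\sum_{t=1}^{T}\bigl(\tilde f_t(x_T) - \tilde f_t(\tilde x_T^*)\bigr)$, i.e., it controls $F_T(x_T) - F_T(\tilde x_T^*)$ for the \emph{final} iterate $x_T$ only. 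This is not the online surrogate regret $\sum_{t=1}^{T}\bigl(\tilde f_t(x_t) - \tilde f_t(\tilde x_T^*)\bigr)$ that you claim to obtain, and the bound $O(L\sqrt{k}\,T^{1/4})$ you assert for that quantity does not follow from the theorem as stated; a $T^{1/4}$ surrogate regret would be what exact follow-the-leader achieves, not what the lazily, approximately minimizing iterates guarantee.

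To close the gap you need the follow-the-leader machinery of \cite[Theorem~4.4]{hazan2012projection}. Since the step sizes do not depend on the horizon, the theorem (more precisely, the invariant $F_t(x_t) - F_t(x_t^*) \le \overline{h}_t = A t^{1/4}$ established inductively in its proof) applies at every horizon $t$. Strong convexity of $F_t$ with modulus $S_{1:t} = \Omega(L t^{3/4}/\sqrt{k})$ then gives $\norm[2]{x_t - x_t^*} = O\bigl(\sqrt{A t^{1/4}/S_{1:t}}\bigr) = O(\sqrt{k}\,t^{-1/4})$, Lipschitz continuity gives $\tilde f_t(x_t) - \tilde f_t(x_t^*) = O(L\sqrt{k}\,t^{-1/4})$, and the be-the-leader inequality $\sum_{t}\tilde f_t(x_t^*) \le \sum_t \tilde f_t(x)$ for every $x \in P$ converts the sum of these terms into the surrogate regret bound. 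Summing yields $\sum_{t=1}^{T}\bigl(\tilde f_t(x_t) - \tilde f_t(\tilde x_T^*)\bigr) = O(L\sqrt{k}\,T^{3/4})$ --- of the same order as the regularization penalty, not dominated by it as your write-up suggests. Your linearization identity and the bound on the cumulative regularization term are correct, so with this repair the corollary follows; but as written the proposal rests on reading an online regret guarantee into a statement about the last iterate.
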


Note that the gradient of the \(\tilde{f}_{t}\)
are easily computed via the formula
\(\nabla \tilde{f}_{t}(x) = \nabla f_{t}(x_{t}) + 4 L t^{-1/4} (x -
x_{1}) / \sqrt{k}\),
particularly because the gradient of the \(f_t\)
need not be recomputed, so that we obtain a weak separation-based
stochastic gradient descent algorithm, where we only have access to
the \(f_t\)
through a stochastic gradient oracle, while retaining all the
favorable properties of the Frank-Wolfe algorithm with a convergence
rate \(O(T^{-1/4})\) (c.f., \cite{garber2013linearly}).

\section{Weak Separation through Augmentation}
\label{sec:weak-separate-augment}

So far we realized the weak separation oracle via lazy optimization.
We will now create a (weak) separation oracle for integral polytopes,
employing an even weaker, so-called augmentation oracle,
which only provides an improving solution
but provides no guarantee with respect to optimality.
We call this approach \emph{lazy augmentation}.
This is especially useful when a fast augmentation oracle is available
or the vertices of the underlying polytope \(P\) are particularly
sparse, i.e.,  \(\norm[1]{y_{1} - y_{2}} \leq k \ll n\)
  for all \(y_{1}, y_{2} \in P\), where \(n\) is the ambient dimension
  of \(P\). As before theoretical convergence rates are maintained. 

For simplicity of exposition we restrict to \(0/1\)
polytopes \(P\)
here.  For general integral polytopes, one considers a
so-called \emph{directed augmentation oracle}, which can be similarly
linearized after splitting variables in positive and negative parts;
we refer the interested reader to see
\citep{schulz2002complexity,bodic2015solving} for an in-depth
discussion.
 
Let \(k\) denote the \(\ell_{1}\)-diameter of \(P\).
Upon presentation
with a 0/1 solution \(x\)
and a linear objective \(c \in \R^n\),
an augmentation oracle
either provides an improving 0/1 solution \(\bar x\)
with \(c \bar x < c x\) or asserts optimality for \(c\):

\begin{oracle}[H]
  \label{alg:linearAugmentation}
  \caption{Linear Augmentation Oracle \AUG{P}(c, x)}
  \begin{algorithmic}
    \REQUIRE
    linear objective \(c \in \mathbb{R}^{n}\),
    vertex \(x \in P\)
    \ENSURE vertex \(\bar{x} \in P\) with \(c \bar{x} < c x\)
    when exists, otherwise \(\bar{x} = x\)
  \end{algorithmic}
\end{oracle}

Such an oracle is significantly weaker than a linear optimization
oracle but also significantly easier to implement and much faster; we
refer the interested reader to
\citep{grotschel1993combinatorial,schulz19950,schulz2002complexity}
for an extensive list of examples.  While augmentation and
optimization are polynomially equivalent (even for convex integer
programming \citep{OerWW14}) the current best linear optimization
algorithms based on an augmentation oracle are slow for general
objectives. While optimizing  an \emph{integral} objective \(c \in \R^{n}\)
needs \(O(k \log \norm[\infty]{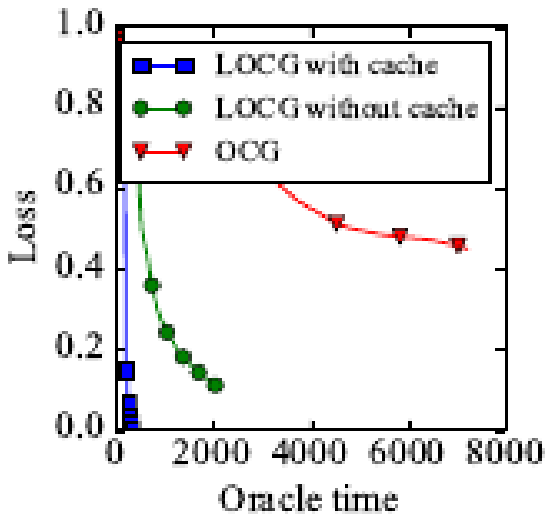})\) calls to an augmentation oracle (see
\citep{schulz19950,schulz2002complexity,bodic2015solving}), a general
objective function, such as the gradient in Frank–Wolfe algorithms has
only an \(O(k n^3)\) guarantee in terms of required oracle calls (e.g., via simultaneous diophantine approximations
\citep{frank1987application}), which is not desirable for large \(n\).
In contrast, here we use an augmentation oracle to perform separation,
without finding the optimal solution.  Allowing a multiplicative error
\(K > 1\),
we realize an augmentation-based weak separation oracle (see
Algorithm~\ref{alg:weak-separate}), which decides given a linear
objective function \(c \in \mathbb{R}^{n}\),
an objective value \(\Phi > 0\),
and a starting point \(x \in P\),
whether there is a \(y \in P\)
with \(c (x - y) > \Phi / K\)
or \(c (x - y) \leq \Phi\)
for all \(y \in P\).
In the former case, it actually provides a certifying \(y \in P\),
i.e., with \(c (x - y) > \Phi / K\).
Note that a constant accuracy \(K\)
requires a linear number of oracle calls in the diameter \(k\),
e.g., \(K = (1 - 1/e)^{-1} \approx 1.582\) 
needs at most \(N \leq k\) oracle calls, which can be much smaller
than the ambient dimension of the polytope. 

At the beginning, in Line~\ref{line:make-integral}, the algorithm
has to replace the input point \(x\) with an integral point \(x_{0}\).
If the point \(x\) is given as a convex combination of integral
points, then a possible solution is to evaluate the objective \(c\)
on these integral points, and choose \(x_{0}\) the first one with
\(c x_{0} \leq cx\).
This can be easily arranged for Frank–Wolfe algorithms as they
maintain convex combinations.

\begin{algorithm}
  \caption{Augmenting Weak Separation
   \(\LPsep{P}(c, x, \Phi, K)\)}
  \label{alg:weak-separate}
  \begin{algorithmic}[1]
    \REQUIRE
    linear objective \(c \in \mathbb{R}^{n}\),
    point \(x \in P\),
    objective value \(\Phi > 0\);
    accuracy \(K > 1\)
    \ENSURE Either
    \begin{enumerate*}
    \item \(y \in P\) vertex with
      \(c (x - y) > \Phi / K\)
    \item \FALSE: \(c (x - z) \leq \Phi\) for all \(z \in P\)
    \end{enumerate*}
    \STATE
    \(N \leftarrow \lceil \log (1 - 1/K) / \log (1 - 1/k) \rceil\)
    \STATE Choose \(x_{0} \in P\) vertex with \(c x_{0} \leq c x\).
    \label{line:make-integral}
    \FOR{\(i = 1\) \TO \(N\)}
      \IF{\(c (x - x_{i-1}) \geq \Phi\)}
        \RETURN \(x_{i-1}\)
        \label{line:ws-too-big}
      \ENDIF
      \STATE
      \label{line:oracle-call}
      \(x_{i} \leftarrow \AUG{P}(c + \frac{\Phi - c (x - x_{i-1})}{k}
      (\allOne - 2 x_{i-1}),x_{i-1})\)
      \IF{\(x_{i} = x_{i-1}\)}
        \RETURN \FALSE
        \label{line:ws-infeasible}
      \ENDIF
    \ENDFOR
    \RETURN \(x_{N}\)
    \label{line:ws-end}
  \end{algorithmic}
\end{algorithm}

\begin{proposition}
  \label{prop:weak-separate}
  Assume \(\norm[1]{y_{1} - y_{2}} \leq k\)
  for all \(y_{1}, y_{2} \in P\).
  Then Algorithm~\ref{alg:weak-separate} is correct,
  i.e., it outputs either
  \begin{enumerate*}
  \item
    \(y \in P\) with
    \(c (x - y) > \Phi / K\)
  \item \(\mathbf{false}\)
  \end{enumerate*}
  In the latter case
  \(c (x - y) \leq \Phi\) for all \(y \in P\) holds.
  The algorithm calls \(\AUG{P}\) at most
  \(N \leq \lceil \log (1 - 1/K) / \log (1 - 1/k) \rceil\)
  many
  times.
\begin{proof}
First note
that \((\allOne - 2x) v + \norm[1]{x} = \norm[1]{v - x}
\) for \(x,v \in \{0,1\}^n\),
hence Line~\ref{line:oracle-call} is equivalent to
\(x_{i} \leftarrow\AUG{P}(c + \frac{\Phi - c (x - x_{i-1})}{k}
      \norm[1]{\cdot - x_{i-1}},x_{i-1})\).

The algorithm obviously calls the oracle at most \(N\) times by
design, and always returns a value, so we need to verify only the
correctness of the returned value.
We distinguish cases according to the output.

Clearly, Line~\ref{line:ws-too-big} always returns an
\(x_{i-1}\) with \(c (x - x_{i-1}) \geq \Phi > [1 - (1 - 1/k)^{N}] \Phi\).
When Line~\ref{line:ws-infeasible} is executed,
the augmentation oracle just returned \(x_{i} = x_{i-1}\),
i.e., for all \(y \in P\)
\begin{equation}
  c x_{i-1}
  \leq
  c y + \frac{\Phi - c (x - x_{i-1})}{k} \norm[1]{y - x_{i-1}}
  \leq
  c y + \frac{\Phi - c (x - x_{i-1})}{k} k
  = c (y - x) + c x_{i-1} + \Phi
  ,
\end{equation}
so that \(c (x - y) \leq \Phi\), as claimed.

Finally, when Line~\ref{line:ws-end} is executed,
the augmentation oracle has found an improving vertex \(x_{i}\)
at every iteration,
i.e.,
\begin{equation}
  c x_{i-1}
  >
  c x_{i} + \frac{\Phi - c (x - x_{i-1})}{k} \norm[1]{x_{i} - x_{i-1}}
  \geq
  c x_{i} + \frac{\Phi - c (x - x_{i-1})}{k}
  ,
\end{equation}
using \(\norm[1]{x_{i} - x_{i-1}} \geq 1\) by integrality.
Rearranging provides the convenient form
\begin{equation}
  \Phi - c (x - x_{i})
  <
  \left(
    1 - \frac{1}{k}
  \right)
  [\Phi - c (x - x_{i-1})]
  ,
\end{equation}
which by an easy induction provides
\begin{equation}
  \Phi - c (x - x_{N})
  <
  \left(
    1 - \frac{1}{k}
  \right)^{N}
  [\Phi - c (x - x_{0})]
  \leq
  \left(
    1 - \frac{1}{K}
  \right)
  \Phi
  ,
\end{equation}
i.e., \(c (x - x_{N}) \geq \frac{\Phi}{K}\),
finishing the proof.
\end{proof}
\end{proposition}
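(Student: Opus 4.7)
The plan is to verify the three possible exit points of Algorithm~\ref{alg:weak-separate} separately, since each exit point corresponds to a different case of the claimed dichotomy. The iteration count bound is immediate from the \texttt{FOR} loop structure, so the work is entirely in the correctness analysis. Throughout, I will exploit the paper's stated identity \((\allOne - 2x_{i-1}) v + \norm[1]{x_{i-1}} = \norm[1]{v - x_{i-1}}\) for \(0/1\) vectors, which lets me interpret the augmentation oracle's modified objective in Line~\ref{line:oracle-call} as an \(\ell_{1}\)-penalized version of \(c\) around the current iterate \(x_{i-1}\).

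The two easy exit points are Lines~\ref{line:ws-too-big} and~\ref{line:ws-infeasible}. In the former, the test \(c(x - x_{i-1}) \geq \Phi\) immediately implies \(c(x - x_{i-1}) > \Phi/K\), so returning \(x_{i-1}\) is a valid positive answer. In the latter, the augmentation oracle fails to improve on \(x_{i-1}\) with respect to the penalized objective \(c + \frac{\Phi - c(x - x_{i-1})}{k} \norm[1]{\cdot - x_{i-1}}\); hence for every vertex \(y \in P\),
\begin{equation*}
c x_{i-1} \leq c y + \frac{\Phi - c (x - x_{i-1})}{k} \norm[1]{y - x_{i-1}} \leq c y + (\Phi - c(x - x_{i-1})),
\end{equation*}
where the second inequality uses the diameter bound \(\norm[1]{y - x_{i-1}} \leq k\) together with \(\Phi - c(x - x_{i-1}) > 0\) (which holds because we passed the Line~\ref{line:ws-too-big} test). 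Rearranging gives \(c(x - y) \leq \Phi\), which is the required negative certificate.

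The main loop analysis is the substantive step. When iteration \(i\) reaches Line~\ref{line:oracle-call} and the oracle returns a strictly improving \(x_{i} \neq x_{i-1}\), the improvement in the penalized objective, combined with \(\norm[1]{x_{i} - x_{i-1}} \geq 1\) from \(0/1\) integrality, yields
\begin{equation*}
c x_{i-1} > c x_{i} + \frac{\Phi - c(x - x_{i-1})}{k},
\end{equation*}
which rearranges to \(\Phi - c(x - x_{i}) < (1 - 1/k)\,[\Phi - c(x - x_{i-1})]\). Iterating this contraction and using \(c x_{0} \leq c x\) from Line~\ref{line:make-integral} gives \(\Phi - c(x - x_{N}) < (1 - 1/k)^{N} \Phi\), and the choice \(N \geq \log(1 - 1/K)/\log(1 - 1/k)\) forces this to be at most \((1 - 1/K) \Phi\), so \(c(x - x_{N}) \geq \Phi/K\) as needed for a positive answer at Line~\ref{line:ws-end}.

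The subtlest step, and the one I would expect to double-check carefully, is the FALSE case at Line~\ref{line:ws-infeasible}: one must notice that the penalty coefficient \((\Phi - c(x - x_{i-1}))/k\) is precisely calibrated so that (a) it is nonnegative whenever we actually reach the oracle call (which requires the Line~\ref{line:ws-too-big} guard), and (b) the \(\ell_{1}\)-penalty absorbs exactly the diameter \(k\) to reconstruct \(\Phi\) on the right-hand side. The interplay between the two guards and the coefficient is what makes the single penalty do double duty: it drives the geometric contraction in the iterative case while simultaneously giving a clean separation certificate in the failure case. Everything else is bookkeeping with a single-variable telescoping recursion.
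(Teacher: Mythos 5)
Your proposal is correct and follows essentially the same route as the paper's own proof: the same case analysis over the three exit points, the same reading of the penalized objective via the identity \((\allOne - 2x_{i-1})v + \norm[1]{x_{i-1}} = \norm[1]{v - x_{i-1}}\), the same use of the \(\ell_1\)-diameter bound in the \algorithmicfalse{} case, and the same geometric contraction telescoped over \(N\) iterations. Your explicit remark that the guard at Line~\ref{line:ws-too-big} guarantees the penalty coefficient is positive is a point the paper leaves implicit, but otherwise the arguments coincide.
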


\section{Experiments}
\label{sec:experiments}

We implemented and compared the
parameter-free variant of LCG (Algorithm~\ref{alg:CG-imp-imp})
to the standard Frank-Wolfe algorithm (CG),
then Algorithm~\ref{alg:pairwise-cond-gradient}
(LPCG) to the Pairwise Conditional Gradient algorithm (PCG) of
\cite{LDLCC2016}, as well as
Algorithm~\ref{alg:online-FW-WSep} (LOCG) to the Online Frank-Wolfe
algorithm (OCG) of \cite{hazan2012projection}. While we did implement
the Local Conditional Gradient algorithm of \cite{garber2013linearly}
as well, the very large
constants in the original algorithms made it impractical to run.
Unless stated otherwise the weak separation oracle is implemented as
sketched in Algorithm~\ref{alg:LPSepLPOracle} through caching and
early termination of
the original LP oracle.

We have used \(K = 1.1\) and \(K = 1\) as multiplicative factors for
the weak separation oracle; for the impact of the choice of \(K\) see
Section~\ref{sec:effect-k}. For the baseline algorithms we use inexact
variants, i.e., we solve linear optimization problems only
approximately. This is a significant speedup in favor of non-lazy
algorithms at the (potential) cost of accuracy, while neutral to lazy
optimization as it solves an even more relaxed problem anyways.  To
put things in perspective, the non-lazy baselines could not complete
even a single iteration for a significant fraction of the considered
problems in the given time frame if we were to exactly solve the
linear optimization problems. In terms of using line search, for all
tests we treated all algorithms equally: either \emph{all} or
\emph{none} used line search. If not stated otherwise, we used (simple
backtracking) line search. 

The linear optimization oracle over \(P \times P\) for LPCG was
implemented by calling the respective oracle over \(P\) twice: once
for either component.  Contrary to the non-lazy version, the lazy
algorithms depend on the initial upper bound \(\Phi_{0}\).  For the
instances that need a very long time to solve the (approximate) linear
optimization even once, we used a binary search for \(\Phi_{0}\) for
the lazy algorithms: starting from a conservative initial value, using
the update rule \(\Phi_{0} \leftarrow \Phi_{0} / 2\) until the
separation oracle returns an improvement for the first time and then
we start the algorithm with \(2 \Phi_0\), which is an upper bound on
the Wolfe gap and hence also on the primal gap.  This initial phase is
also included in the reported wall-clock time.  Alternatively, if the
linear optimization was less time consuming we used a single
(approximate) linear optimization at the start to obtain an initial
bound on \(\Phi_0\) (see e.g., Section~\ref{sec:frank-wolfe-with}).

In some cases, especially when the underlying feasible region
has a high dimension and the (approximate) linear
optimization can be solved relatively fast compared to the
cost of computing an inner product, we observed that the costs
of maintaining the cache was very high. In these cases we
reduced the cache size every \(100\) steps by keeping only the
\(100\) points that were used the most so far.  Both the number
of steps and the approximate size of the cache were chosen
arbitrarily, however \(100\) for both worked very well for all
our examples. Of course there are many different strategies
for maintaining the cache, which could be used here and
which could lead to further improvements in performance.

The stopping criteria for each of the experiments was
a given wall clock time limit in seconds.
The time limit was enforced separately for the main code
and the oracle code,
so in some cases the actual time used can be larger,
when the last oracle call started before the time limit
was reached and took longer than the time left.

We implemented all algorithms in \texttt{Python 2.7} with critical
functions \emph{cythonized} for performance employing \texttt{Numpy}.
We used these packages from the \texttt{Anaconda 4.2.0} distribution
as well as \texttt{Gurobi 7.0} \citep{optimization2016gurobi} as a
black box solver for the linear optimization oracle.  The weak
separation oracle was implemented via a callback function to stop
linear optimization as soon as a good enough feasible solution has
been found in a schema as outlined in
Algorithm~\ref{alg:LPSepLPOracle}. The parameters for Gurobi were kept
at their default settings except for enforcing the time limit of the
tests and setting the acceptable duality gap to \(10\%\), allowing
Gurobi to terminate the linear optimization early avoiding the
expensive \emph{proof} of optimality. This is used to realize the
inexact versions of the baseline algorithms.  All experiments were
performed on a 16-core machine with Intel Xeon E5-2630 v3 @ 2.40GHz
CPUs and 128GB of main memory. While our code does not explicitly use
multiple threads, both Gurobi and the numerical libraries use multiple
threads internally.

\subsection{Computational results}
\label{sec:comp}

We performed computational tests on a large variety of different
problems that are instances of the three machine learning tasks
\emph{video colocalization}, \emph{matrix completion}, and \emph{structured regression}.

\paragraph{Video colocalization.} Video colocalization is the problem of identifying objects
in a sequence of multiple frames in a video. In \cite{joulin2014efficient}
it is shown that video colocalization can be reduced to optimizing a quadratic objective
function over a flow or a path polytope, which is the problem we
are going to solve.
The resulting linear program is an instance of the minimum-cost
network flow problem,
see \cite[Eq.~(3)]{joulin2014efficient} for the concrete linear
program and more details.
The quadratic functions are of the form \(\norm{Ax - b}^2\)
where we choose the non-zero entries in
\(A\) according to a density parameter at random and then each of these
entries to be \([0,1]\)-uniformly distributed, while
\(b\) is chosen as a linear combination of the columns of \(A\) with
random multipliers from \([0,1]\). For some of the instances
we also use \(\norm{x-b}^2\) as the objective function with
\(b_i \in [0,1]\) uniformly at random.

\paragraph{Matrix completion.} The formulation of the matrix completion problem we are going
to use is the following:
\begin{equation}
  \min_{X} \sum_{(i,j) \in \Omega} \abs{X_{i,j} - A_{i,j}}^2
  \qquad \text{ s.t.} \quad \norm[*]{X} \le R,
 \label{eq:matrix_completion}
\end{equation}
where \(\norm[*]{\cdot}\) denotes the nuclear norm, i.e.,
\(\norm[*]{A} = \operatorname{Tr}(\sqrt{A^t A})\).  The set
\(\Omega\), the matrix \(A\) and \(R\) are given parameters.
Similarly to \cite{lan2014conditional} we generate the \(m\times n\)
matrix \(A\) as the product of \(A_L\) of size \(m\times r\) and
\(A_R\) of size \(r\times n\). The entries in \(A_L\) and \(A_R\) are
chosen from a standard Gaussian distribution.  The set \(\Omega\) is
chosen uniformly of size
\(s = \min \{5r(m + n - r), \lceil 0.99mn\rceil\}\).  The linear
optimization oracle is implemented in this case by a singular value
decomposition of the linear objective function and we essentially
solve the LP to (approximate) optimality. The matrix completion tests
will only demonstrate the impact of caching solutions. Note that this
test is also informative as due to the \lq{}roundness\rq{} of the
feasible region the solution of the actual LP oracle will induce a
direction that is equal to the true gradient and as such it provides
insight into how much per-iteration progress is lost due to working
with gradient approximations from the weak separation oracle.

\paragraph{Structured regression.}
The structured regression problem consists of solving a
quadratic function of the form \(\norm{Ax - b}^2\) over some
structured feasible set or a polytope \(P\), i.e., we solve \(\min_{x \in P}
\norm{Ax - b}^2\). We construct the objective functions
in the same way as for the video colocalization problem.

\paragraph{Tests.}
In the following two sections we will present 
our results for various problems grouped
by the versions of the considered algorithms.
Every figure contains two columns, each containing one experiment.
We use different measures to report performance:
we report progress of loss or function value
in wall-clock time in the first row
(including time spent by the oracle),
in the number of iterations
in the second row,
and in the number of linear optimization calls in the last row.
Obviously, the latter only makes sense for the lazy algorithms.
In some other cases we
report in another row
the dual bound or Wolfe gap in wall-clock
time.
The red line denotes the non-lazy algorithm and
the green line denotes the lazy variants.
For each experiment we also report
the cache hit rate, which is the number of oracle calls answered with
a point from the cache over all oracle calls given in percent.

While we found convergence rates in the number of iterations quite
similar (as expected!), we consistently observe a significant speedup
in wall-clock time. In particular for many large-scale or hard
combinatorial problems, lazy algorithms performed several thousand
iterations whereas the non-lazy versions completed only a handful of
iterations due to the large time spent approximately solving the
linear optimization problem.
The observed cache hit rate was 
at least \(90\%\) in most cases,
and often even above \(99\%\).

\subsubsection{Offline Results}
\label{sec:offlineResults}

We describe the considered instances in the offline case separately
for the vanilla Frank-Wolfe method and the Pairwise Conditional
Gradient method.

\paragraph{Vanilla Frank-Wolfe Method}

We tested the vanilla Frank-Wolfe algorithm on the six video
colocalization instances with underlying path polytopes from
\url{http://lime.cs.elte.hu/~kpeter/data/mcf/netgen/}
(Figures~\ref{fig:netgen-small-vanilla},
\ref{fig:netgen-medium-vanilla} and
\ref{fig:netgen-large-vanilla}). In these instances we additionally
report the dual bound or Wolfe gap in wall clock time.  We further
tested the vanilla Frank-Wolfe algorithm on eight instances of the
matrix completion problem generated as described above, for which we
did not use line search; the parameter-free lazy variant is run with
approximate minimization as described in
Remark~\ref{sec:param-free-cond-linesearch}, the others use their
respective standard step sizes.  We provide the used
parameters for each example in the figures below
(Figures~\ref{fig:matrix-completion-small-vanilla},
\ref{fig:matrix-completion-large-vanilla},
\ref{fig:matrix-completion-alt-1-vanilla} and
\ref{fig:matrix-completion-alt-2-vanilla}).  The last tests for this
version were performed on three instances of the structured regression
problem, two with the feasible region containing flow-based
formulations of Hamiltonian cycles in graphs
(Figure~\ref{fig:tsp-vanilla}), and further tests on two cut polytope
instances (Figure~\ref{fig:maxcut-vanilla}) and on two spanning tree
instances of different size (Figure~\ref{fig:spanning-tree-vanilla}).

We observed a significant speedup of LCG compared to CG,
due to the faster iteration of the lazy algorithm.

\begin{figure*}
  \centering
  \small
  \begin{tabular}{*{2}{c}}
  \includegraphics[height=0.3\linewidth]{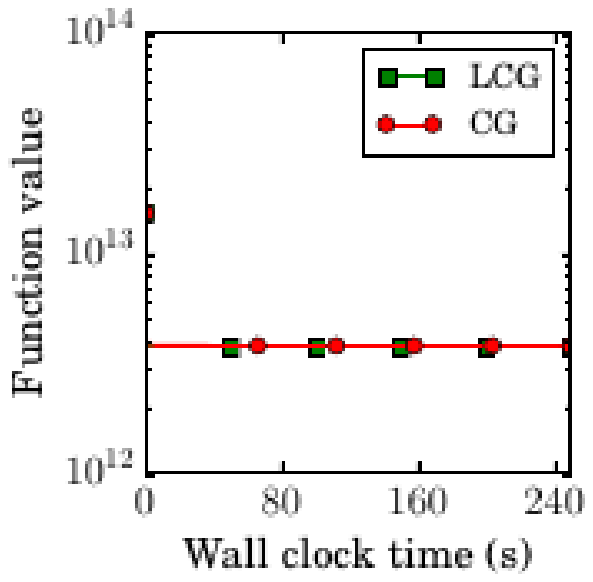}
  &
  \includegraphics[height=0.3\linewidth]{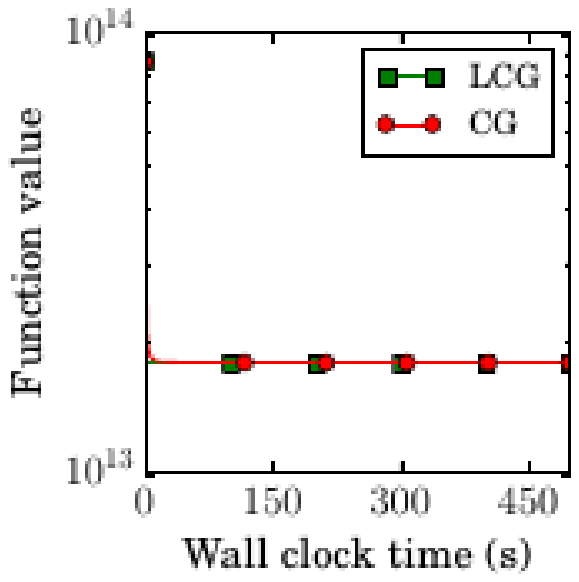}
  \\
  \includegraphics[height=0.3\linewidth]{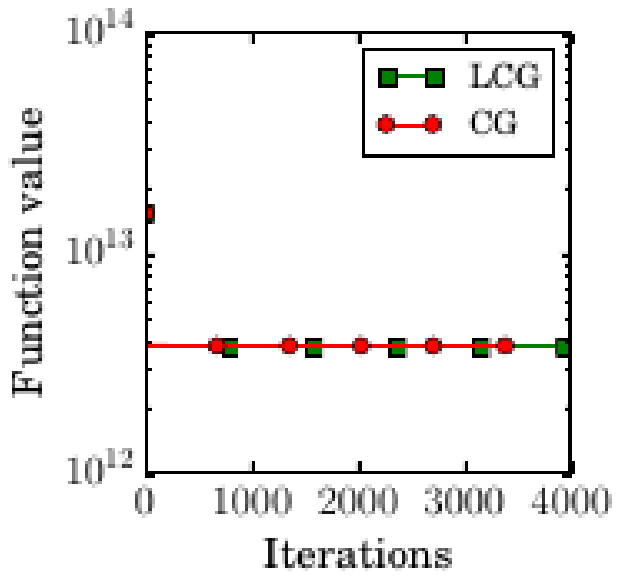}
  &
  \includegraphics[height=0.3\linewidth]{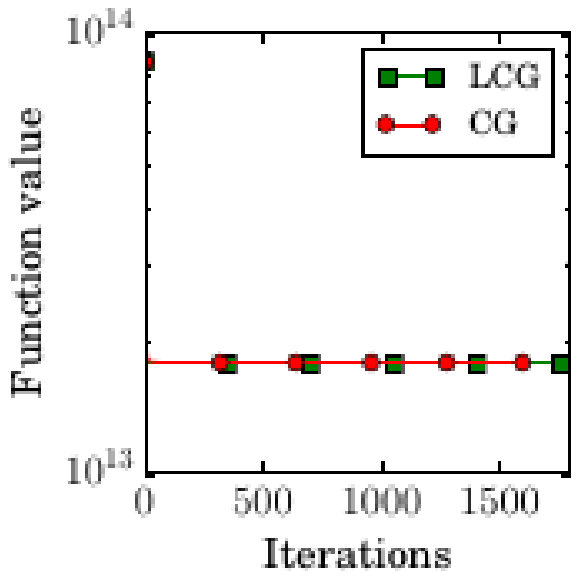}
  \\
  \includegraphics[height=0.3\linewidth]{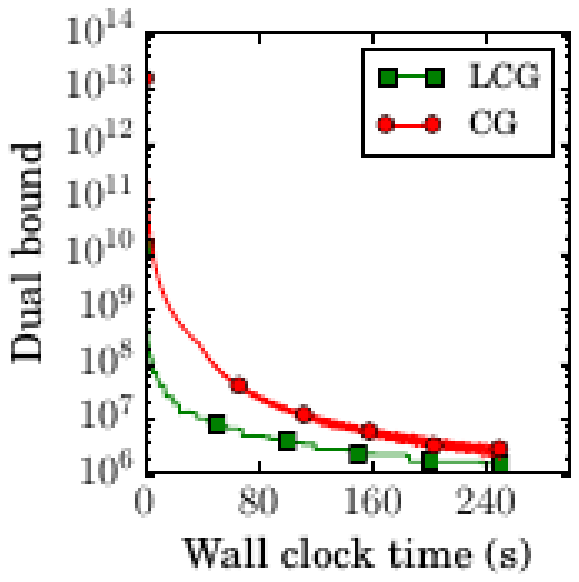}
  &
  \includegraphics[height=0.3\linewidth]{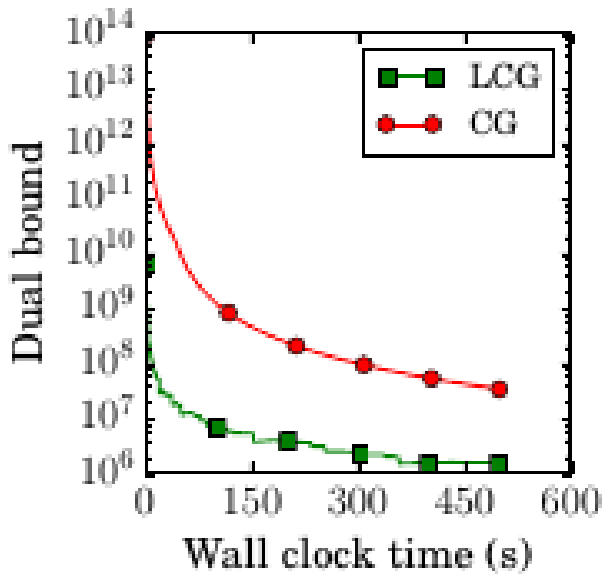}
  \\
  \includegraphics[height=0.3\linewidth]{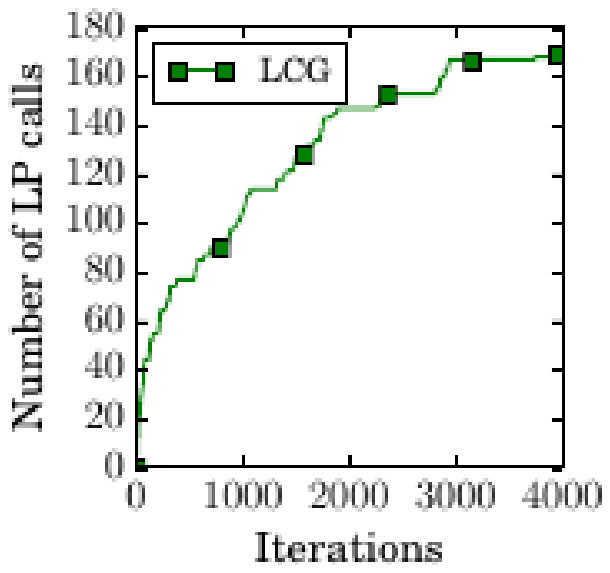}
  &
  \includegraphics[height=0.3\linewidth]{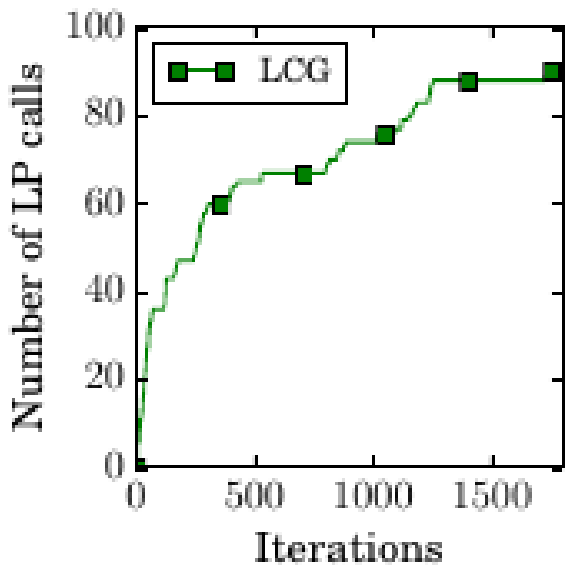}
  \\
  cache hit rate: \(95.72\%\)
  &
  cache hit rate: \(94.83\%\)
  \end{tabular}
  \caption{\label{fig:netgen-small-vanilla} LCG vs. CG on small netgen
  instances \texttt{netgen 08a} (left) and
  \texttt{netgen 10a} (right) with quadratic objective functions. In both cases both algorithms
  are able to reduce the function value very fast, however the dual bound or Wolfe gap
  is reduced much faster by LCG. Observe that the vertical axis is given with a logscale.
   }
\end{figure*}

\begin{figure*}
  \centering
  \small
  \begin{tabular}{*{2}{c}}
  \includegraphics[height=0.3\linewidth]{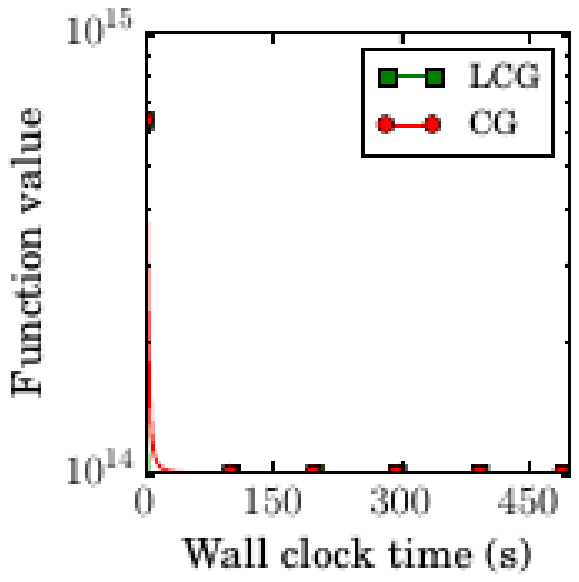}
  &
  \includegraphics[height=0.3\linewidth]{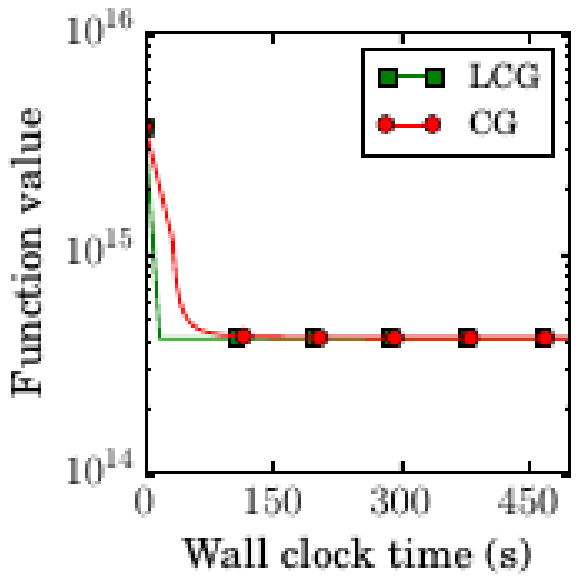}
  \\
  \includegraphics[height=0.3\linewidth]{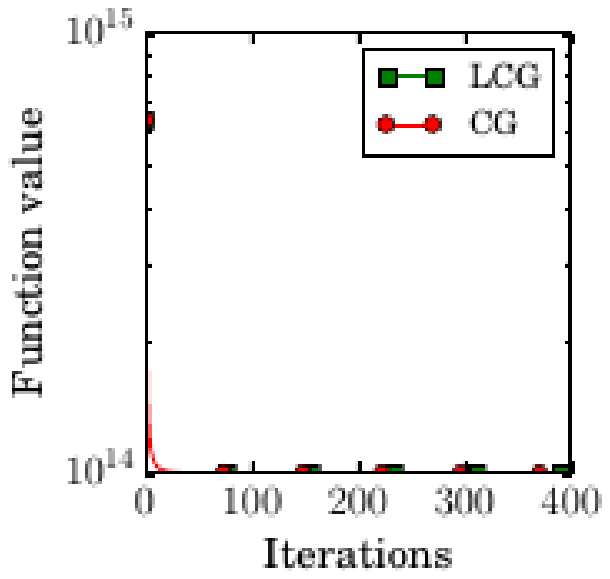}
  &
  \includegraphics[height=0.3\linewidth]{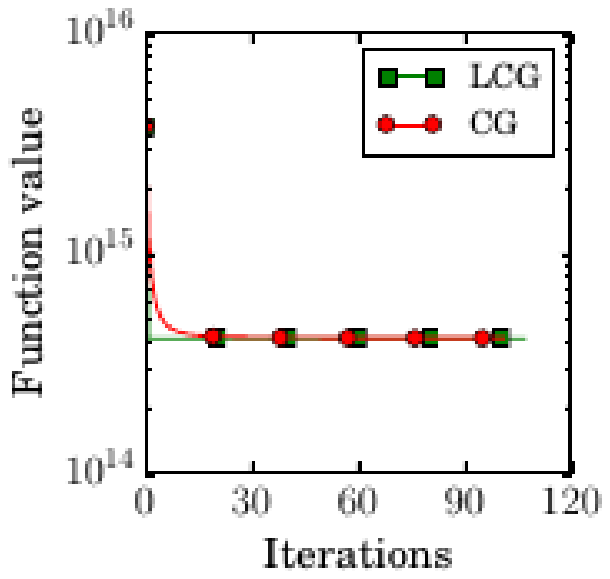}
  \\
  \includegraphics[height=0.3\linewidth]{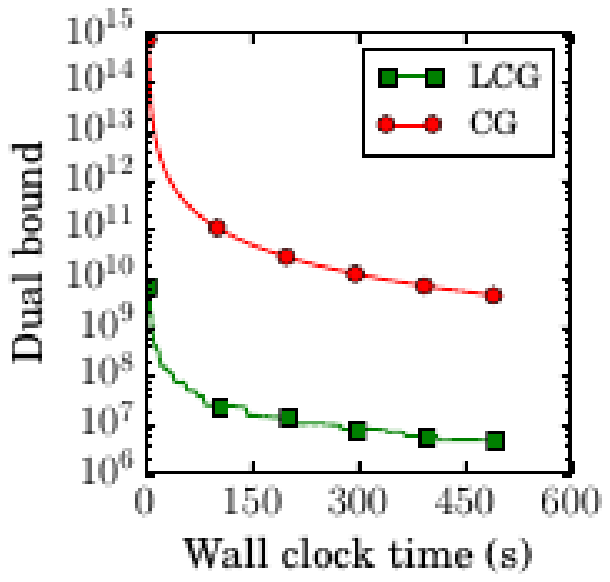}
  &
  \includegraphics[height=0.3\linewidth]{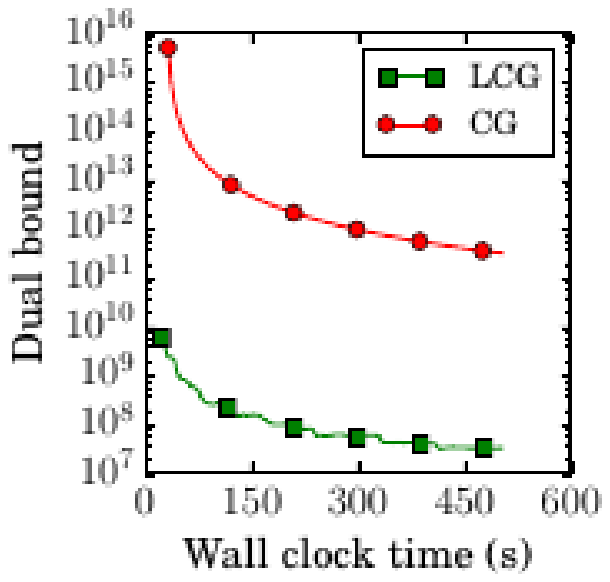}
  \\
  \includegraphics[height=0.3\linewidth]{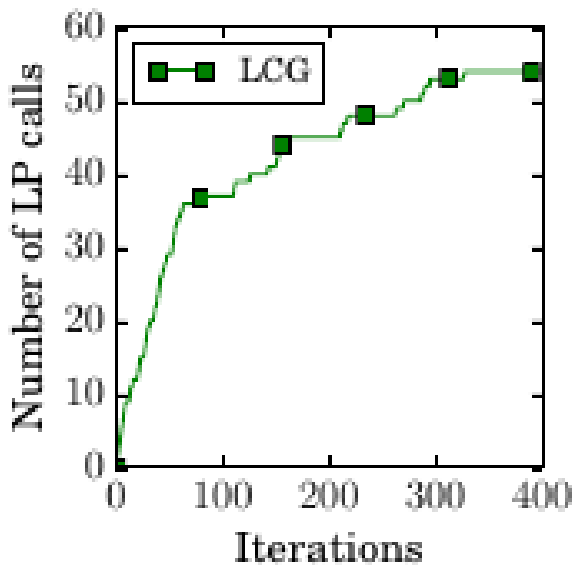}
  &
  \includegraphics[height=0.3\linewidth]{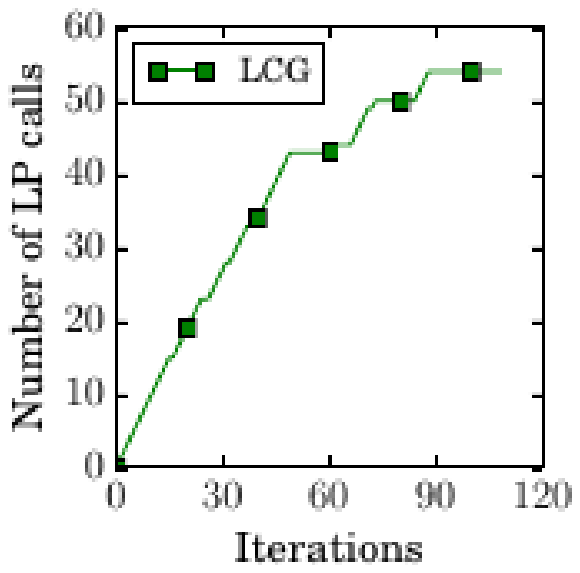}
  \\
  cache hit rate: \(86.43\%\)
  &
  cache hit rate: \(50.00\%\)
  \end{tabular}
  \caption{\label{fig:netgen-medium-vanilla} LCG vs. CG on medium sized netgen instances \texttt{netgen 12b} (left) and
  \texttt{netgen 14a} (right) with quadratic objective functions. The behavior of both versions
  on these instances is very similar to the small netgen instances (Figure~\ref{fig:netgen-small-vanilla}),
  however both in the function value and the dual bound the difference between the lazy and the non-lazy
  version is more prominent. Again, we used a logscale for the vertical axis.
  }
\end{figure*}

\begin{figure*}
  \centering
  \small
  \begin{tabular}{*{2}{c}}
  \includegraphics[height=0.3\linewidth]{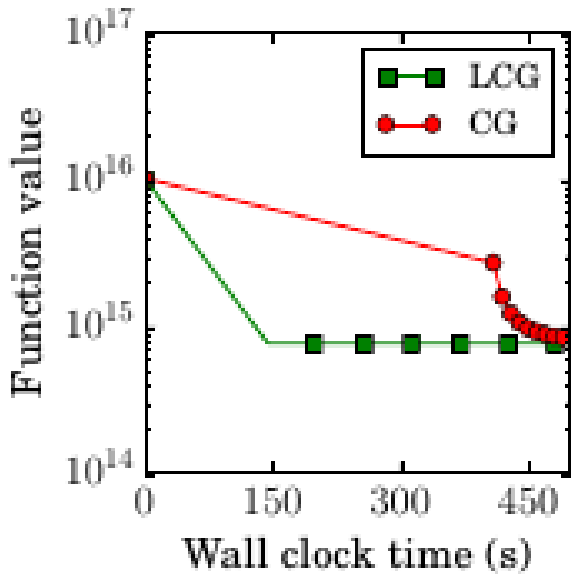}
  &
  \includegraphics[height=0.3\linewidth]{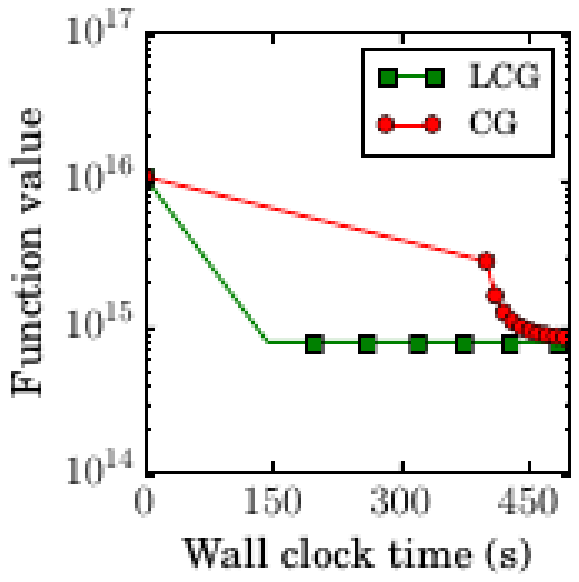}
  \\
  \includegraphics[height=0.3\linewidth]{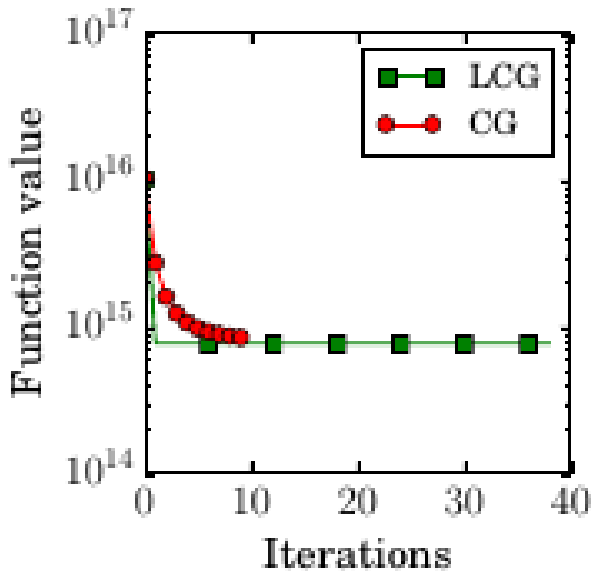}
  &
  \includegraphics[height=0.3\linewidth]{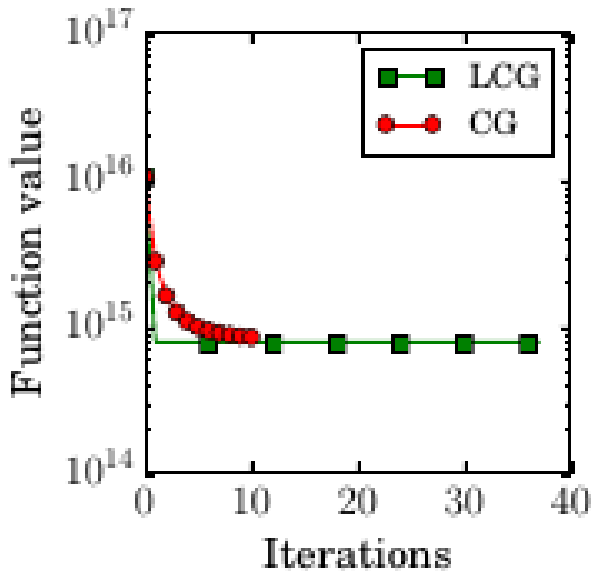}
  \\
  \includegraphics[height=0.3\linewidth]{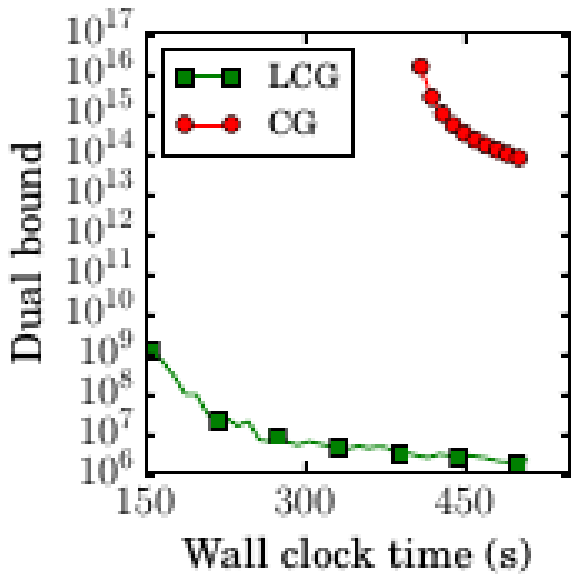}
  &
  \includegraphics[height=0.3\linewidth]{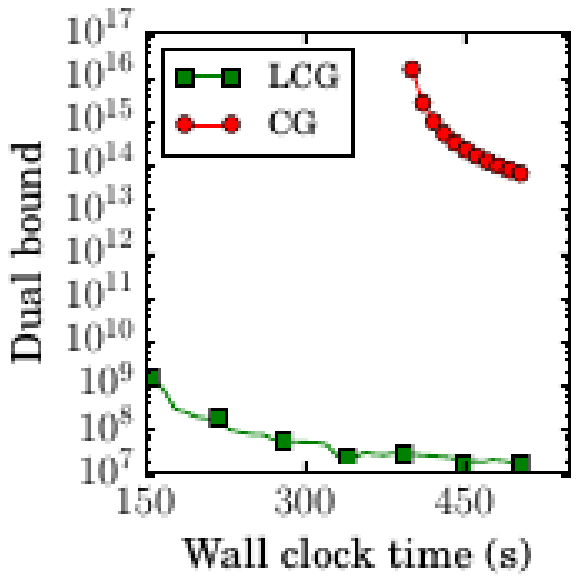}
  \\
  \includegraphics[height=0.3\linewidth]{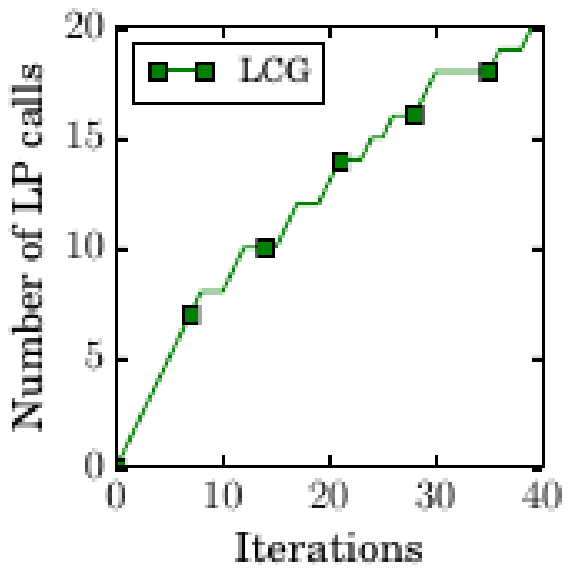}
  &
  \includegraphics[height=0.3\linewidth]{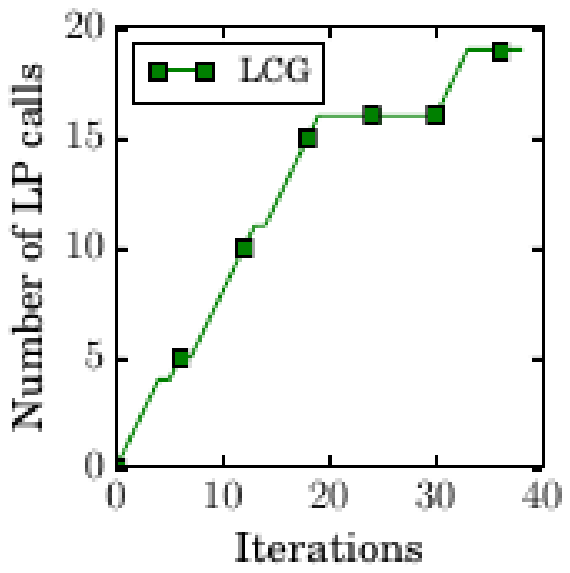}
  \\
  cache hit rate: \(48.72\%\)
  &
  cache hit rate: \(50.00\%\)
  \end{tabular}
  \caption{\label{fig:netgen-large-vanilla} LCG vs. CG on large
  netgen instances \texttt{netgen 16a} (left) and
  \texttt{netgen 16b} (right) with quadratic objective functions. In both cases
  the difference in function value between the two versions of the algorithm is large.
  In the dual bound the performance of the lazy version is multiple orders of
  magnitude better than the performance of the non-lazy counterpart. The cache
  hit rates for these two instances are lower due to the high dimension of the polytope.
  }
\end{figure*}

\begin{figure*}
  \centering
  \small
  \begin{tabular}{*{2}{c}}
  \includegraphics[height=0.35\linewidth]{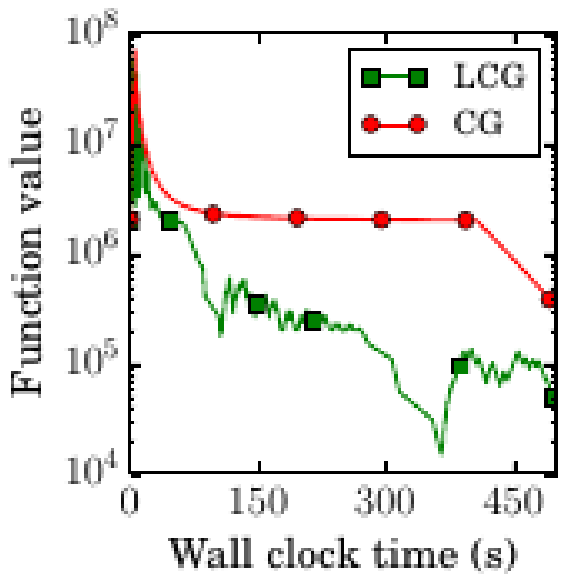}
  &
  \includegraphics[height=0.35\linewidth]{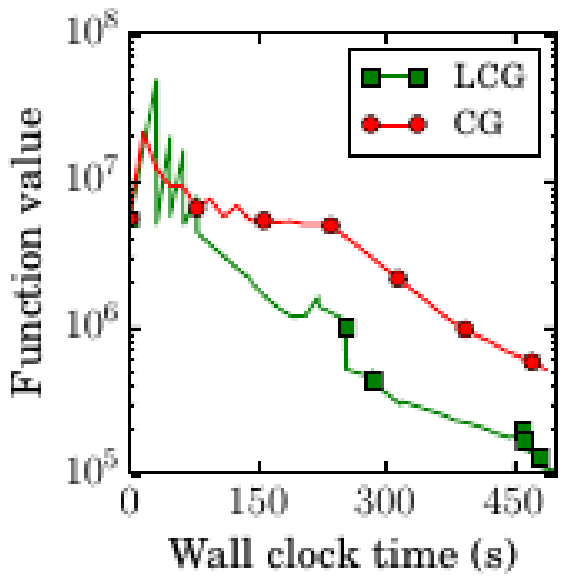}
  \\
  \includegraphics[height=0.35\linewidth]{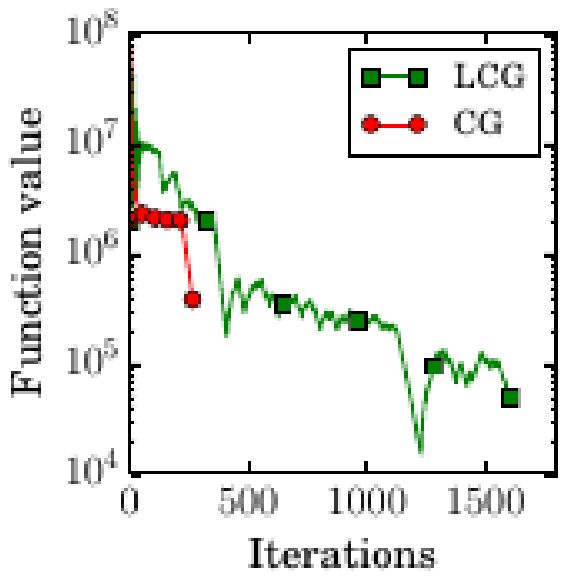}
  &
  \includegraphics[height=0.35\linewidth]{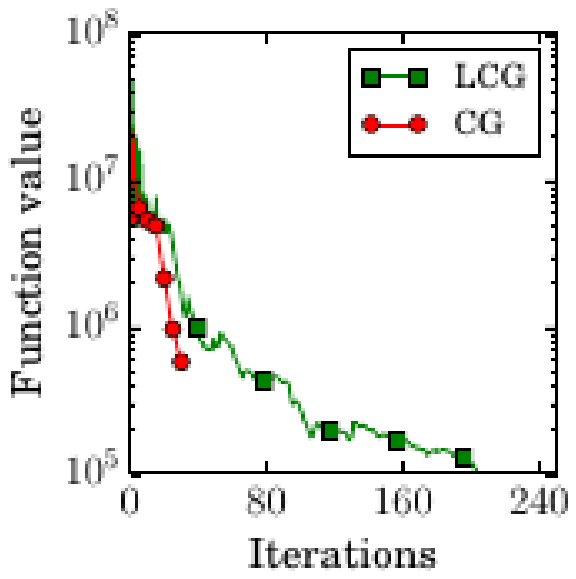}
  \\
  \includegraphics[height=0.35\linewidth]{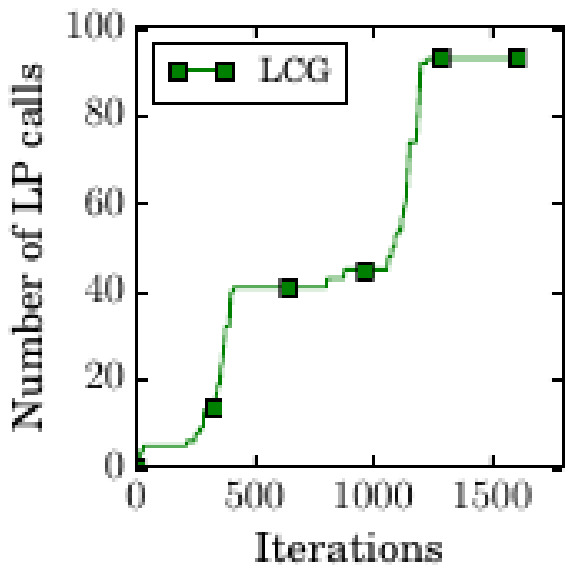}
  &
  \includegraphics[height=0.35\linewidth]{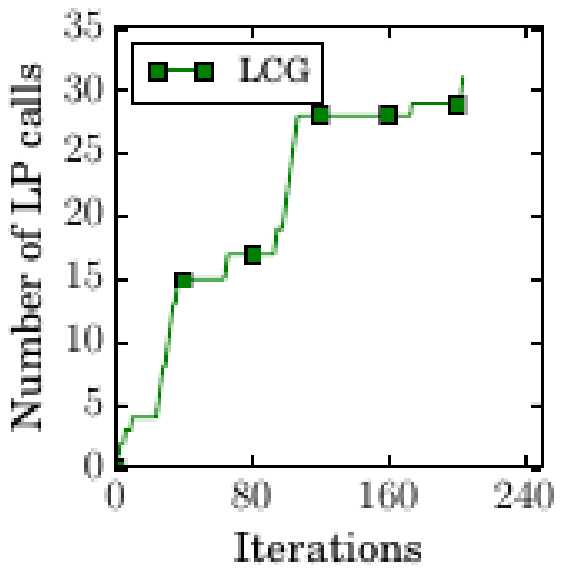}
  \\
  cache hit rate: \(94.24\%\)
  &
  cache hit rate: \(84.80\%\)
  \end{tabular}
  \caption{\label{fig:matrix-completion-small-vanilla}
  LCG vs. CG on two matrix completion instances. We solve the problem as given in
  Equation~\eqref{eq:matrix_completion} with the paramters \(n=3000\), \(m=1000\), \(r=10\) and \(R=30000\)
  for the left instance and \(n=10000\), \(m=100\), \(r=10\) and \(R=10000\) for the right
  instance. In both cases the lazy version is slower in interations, however significantly faster
  in wall clock time.
  }
\end{figure*}

\begin{figure*}
  \centering
  \small
  \begin{tabular}{*{2}{c}}
  \includegraphics[height=0.35\linewidth]{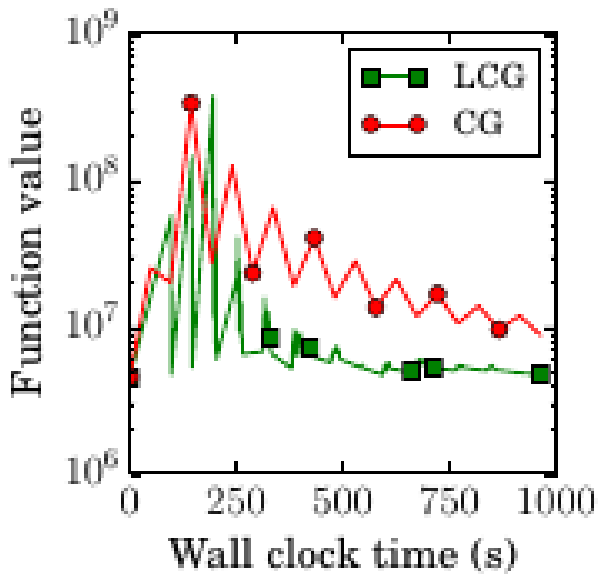}
  &
  \includegraphics[height=0.35\linewidth]{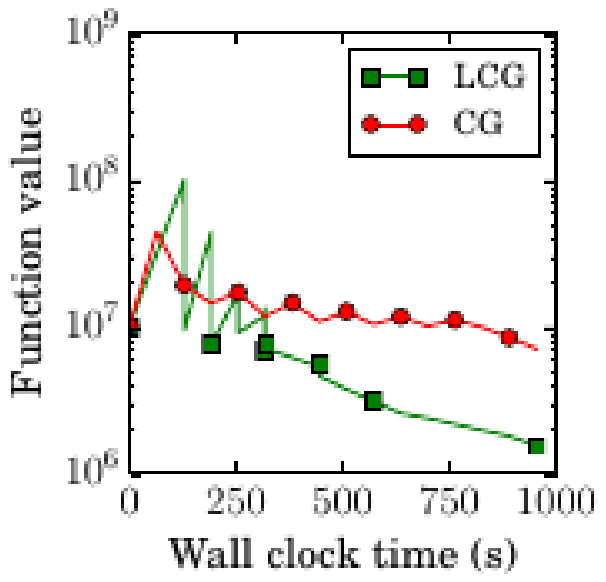}
  \\
  \includegraphics[height=0.35\linewidth]{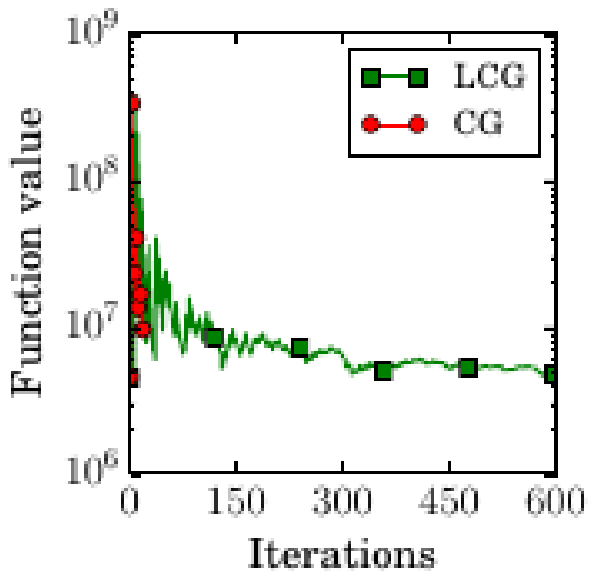}
  &
  \includegraphics[height=0.35\linewidth]{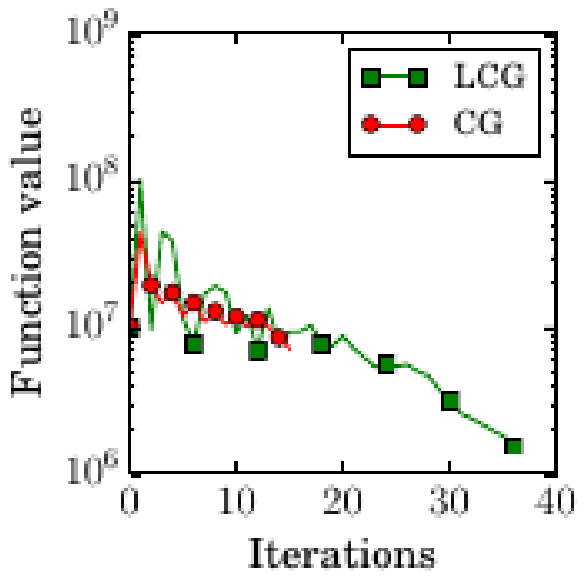}
  \\
  \includegraphics[height=0.35\linewidth]{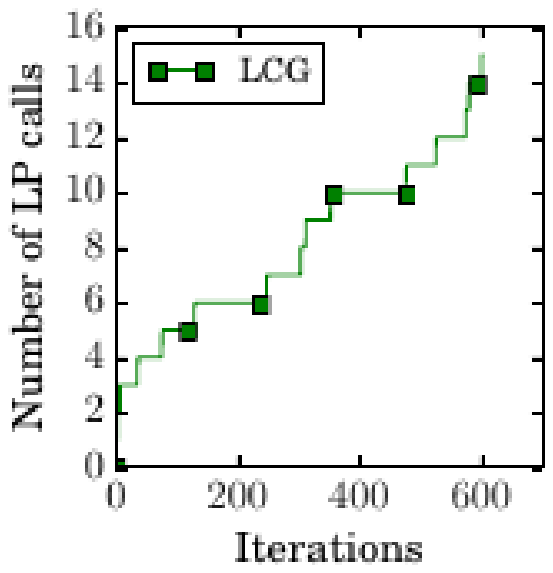}
  &
  \includegraphics[height=0.35\linewidth]{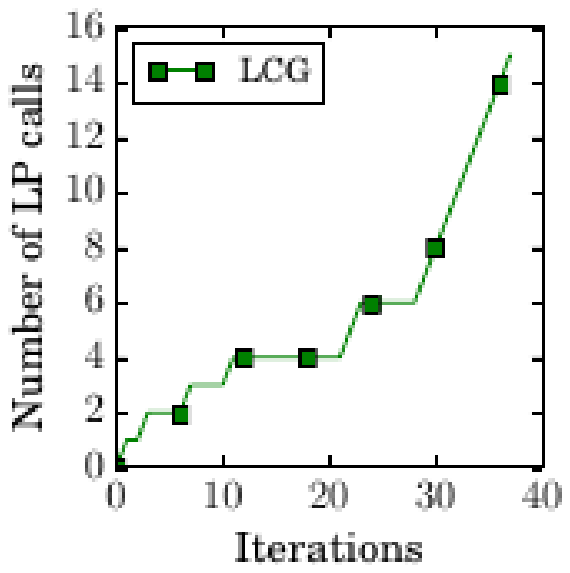}
  \\
  cache hit rate: \(97.50\%\)
  &
  cache hit rate: \(59.46\%\)
  \end{tabular}
  \caption{\label{fig:matrix-completion-large-vanilla}
  LCG vs. CG on two more matrix completion instances. The parameters for
  Equation~\eqref{eq:matrix_completion} are given by \(n=5000\), \(m=4000\), \(r=10\) and \(R=50000\)
  for the left instance and \(n=100\), \(m=20000\), \(r=10\) and \(R=15000\) for the right
  instance. In both of these cases the performance of the lazy and the non-lazy version
  are comparable in interations, however in wall clock time the lazy version reaches lower function
  values faster.
  }
\end{figure*}

\begin{figure*}
  \centering
  \small
  \begin{tabular}{*{2}{c}}
  \includegraphics[height=0.35\linewidth]{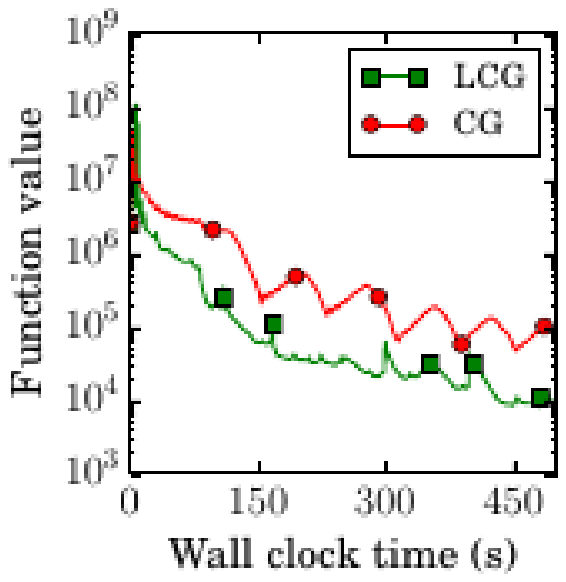}
  &
  \includegraphics[height=0.35\linewidth]{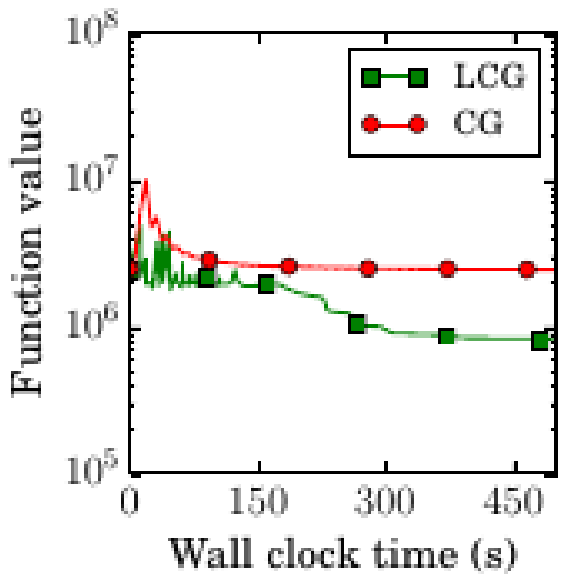}
  \\
  \includegraphics[height=0.35\linewidth]{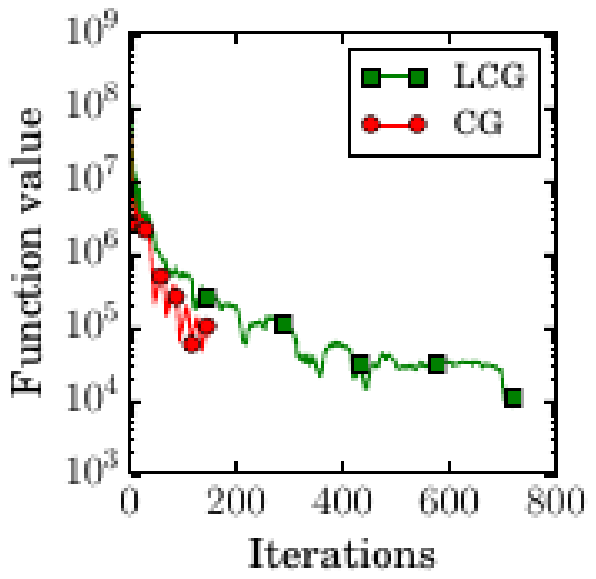}
  &
  \includegraphics[height=0.35\linewidth]{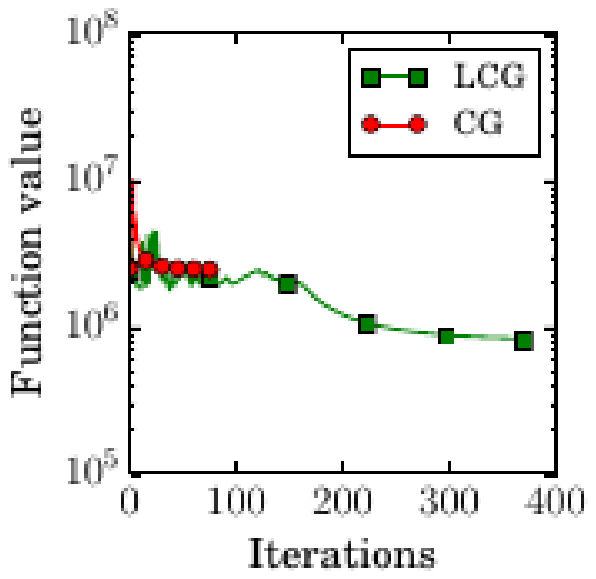}
  \\
  \includegraphics[height=0.35\linewidth]{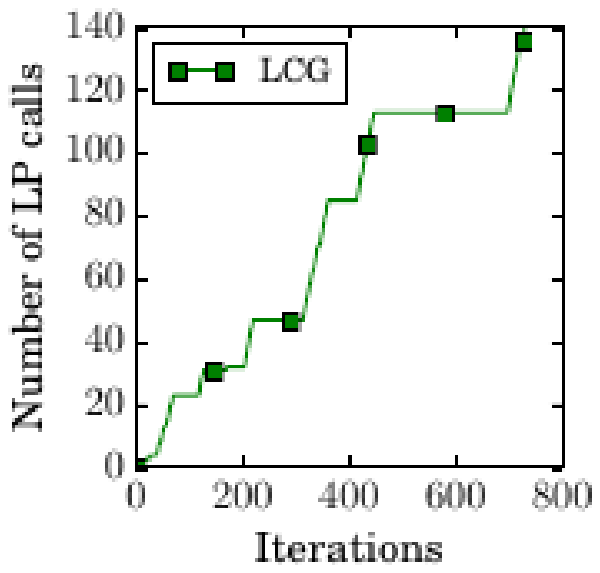}
  &
  \includegraphics[height=0.35\linewidth]{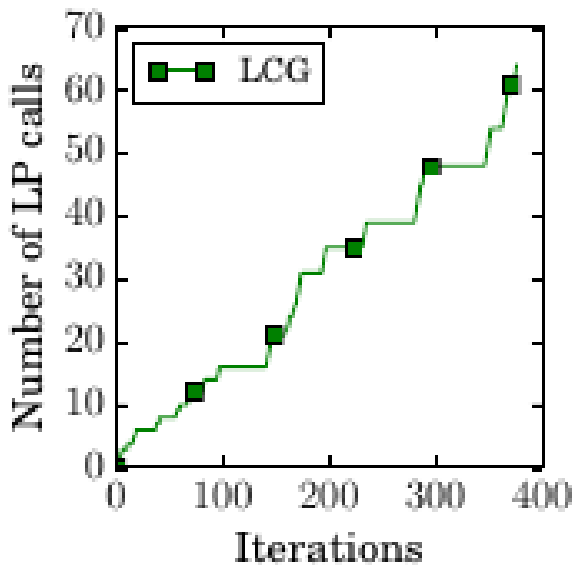}
  \\
  cache hit rate: \(80.80\%\)
  &
  cache hit rate: \(82.98\%\)
  \end{tabular}
  \caption{\label{fig:matrix-completion-alt-1-vanilla}
  LCG vs. CG on our fifth and sixth instance of the matrix completion problem. The parameters are
  \(n=5000\), \(m=100\), \(r=10\) and \(R=15000\)
  for the left instance and \(n=3000\), \(m=2000\), \(r=10\) and \(R=10000\) for the right
  instance. The behavior is very similar to Figure~\ref{fig:matrix-completion-large-vanilla}.
  similar performance over iterations however advantages for the lazy version in wall clock
  time.
  }
\end{figure*}

\begin{figure*}
  \centering
  \small
  \begin{tabular}{*{2}{c}}
  \includegraphics[height=0.35\linewidth]{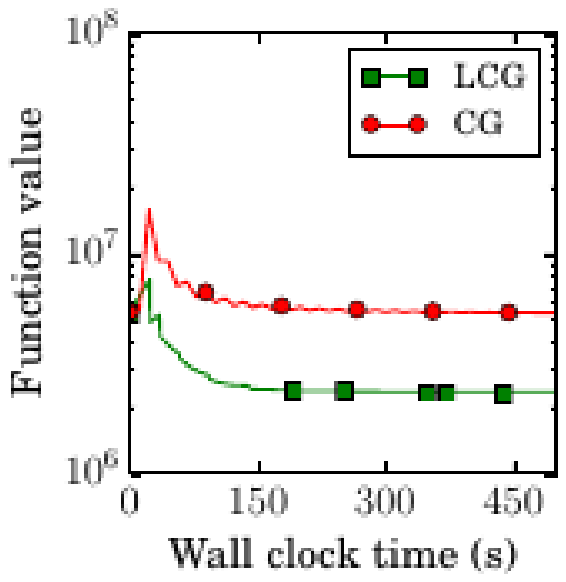}
  &
  \includegraphics[height=0.35\linewidth]{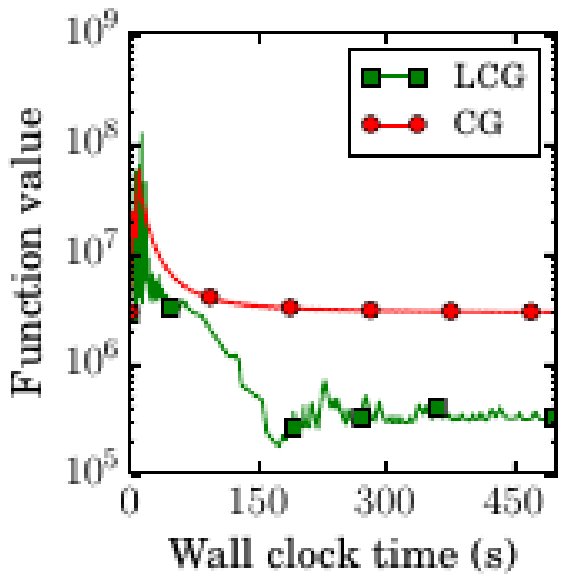}
  \\
  \includegraphics[height=0.35\linewidth]{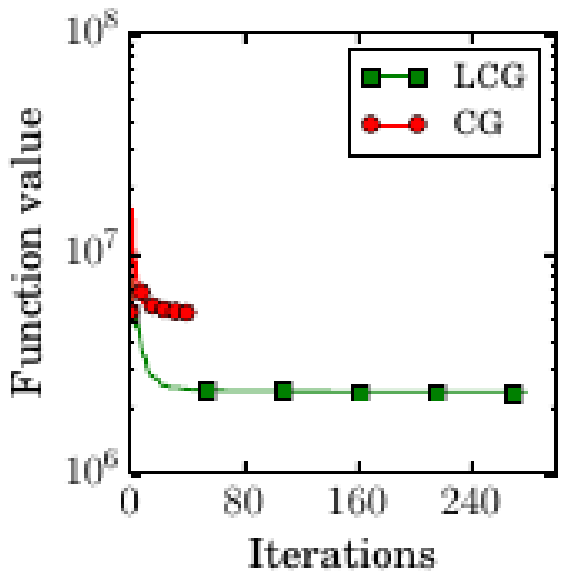}
  &
  \includegraphics[height=0.35\linewidth]{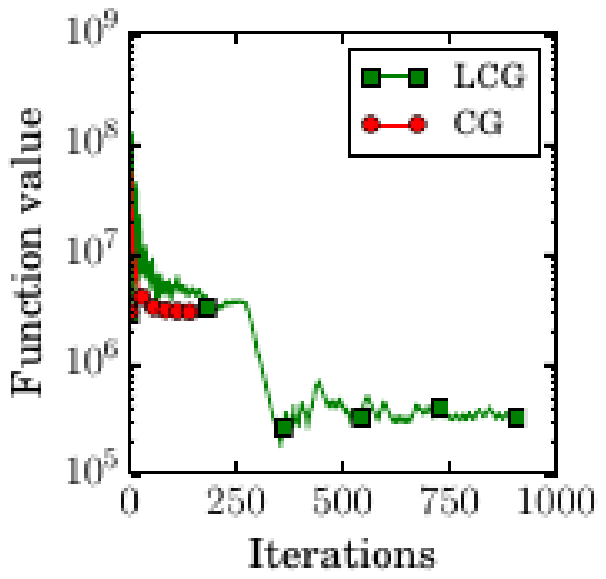}
  \\
  \includegraphics[height=0.35\linewidth]{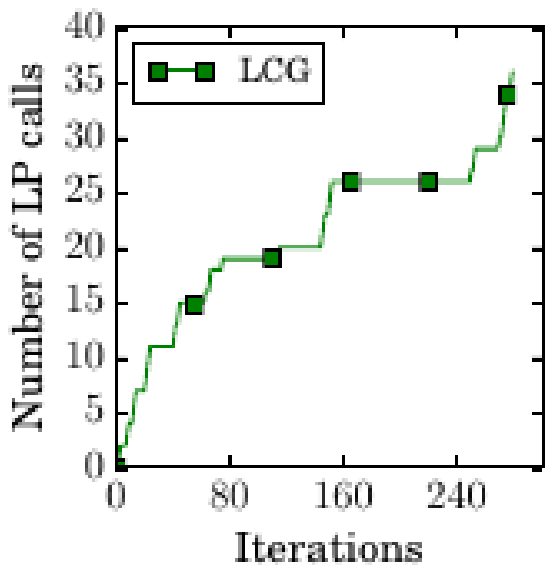}
  &
  \includegraphics[height=0.35\linewidth]{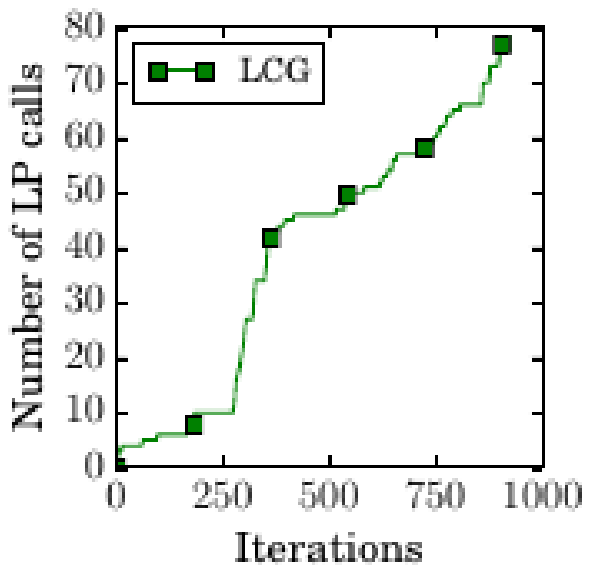}
  \\
  cache hit rate: \(87.10\%\)
  &
  cache hit rate: \(91.55\%\)
  \end{tabular}
  \caption{\label{fig:matrix-completion-alt-2-vanilla}
  LCG vs. CG on the final two matrix completion instances. The parameters are
  \(n=10000\), \(m=1000\), \(r=10\) and \(R=1-000\)
  for the left instance and \(n=5000\), \(m=1000\), \(r=10\) and \(R=30000\) for the right
  instance. On the left in both measures, instances and wall clock time, the lazy
  version performs better than the non-lazy counterpart, due to a suboptimal direction at the
  beginning with a fairly large step size in the non-lazy version.
  }
\end{figure*}

\begin{figure*}
  \centering
  \small
  \begin{tabular}{*{2}{c}}
  \includegraphics[height=0.35\linewidth]{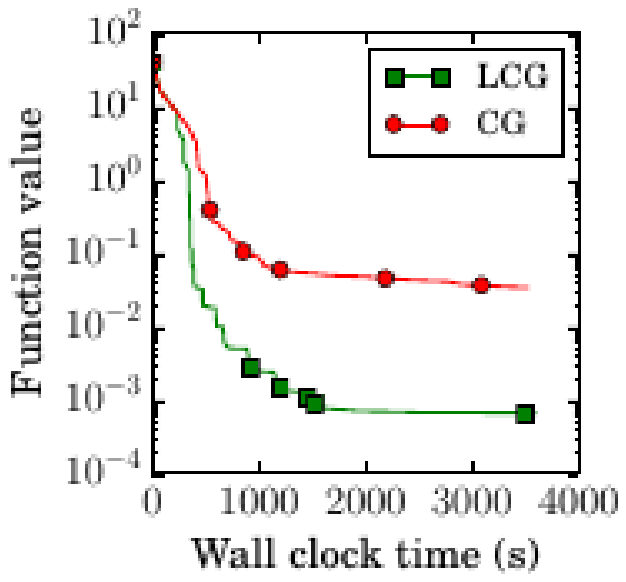}
  &
  \includegraphics[height=0.35\linewidth]{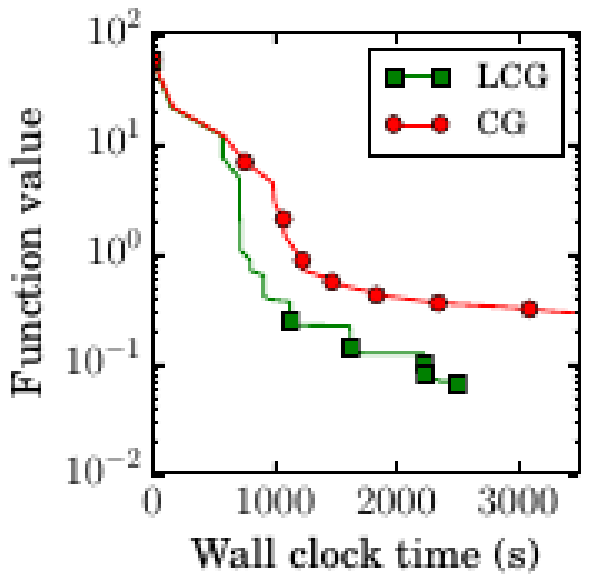}
  \\
  \includegraphics[height=0.35\linewidth]{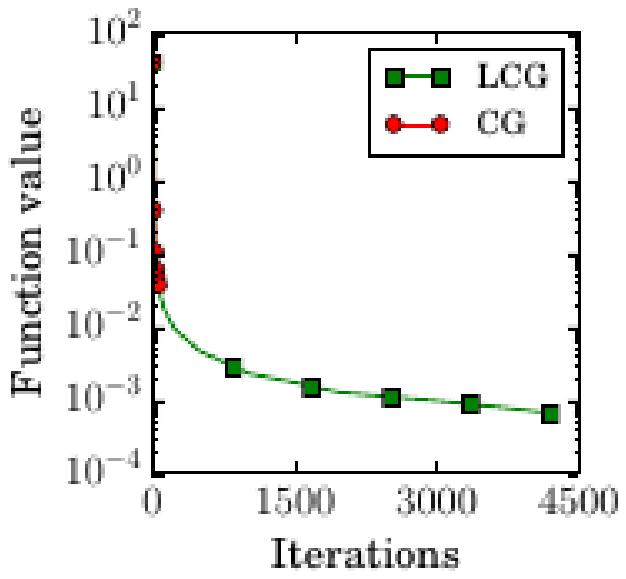}
  &
  \includegraphics[height=0.35\linewidth]{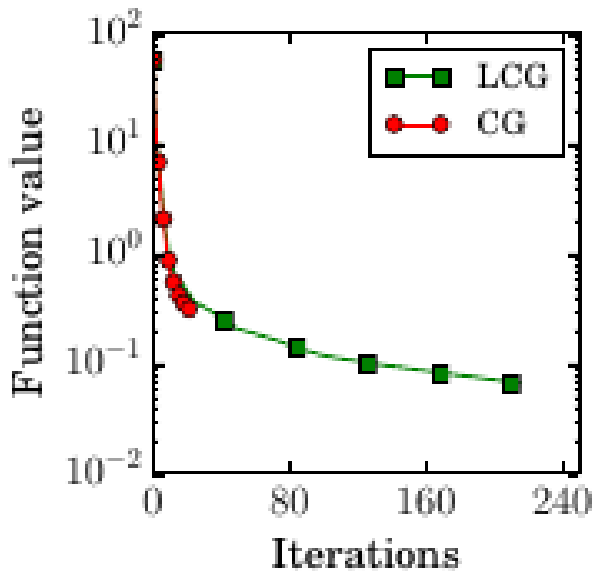}
  \\
  \includegraphics[height=0.35\linewidth]{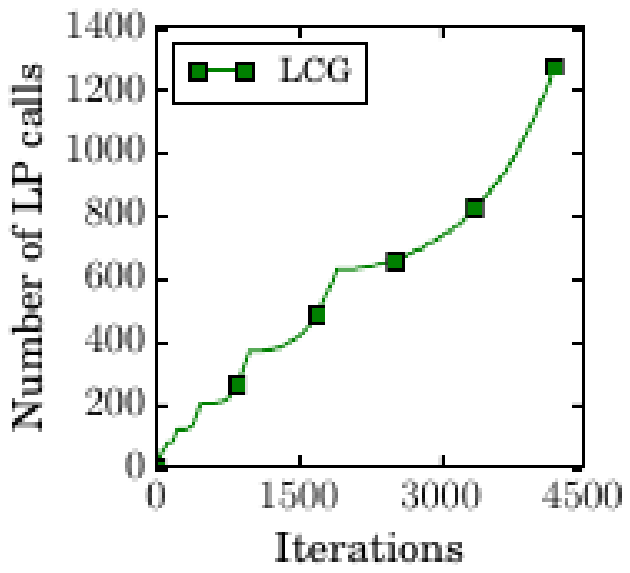}
  &
  \includegraphics[height=0.35\linewidth]{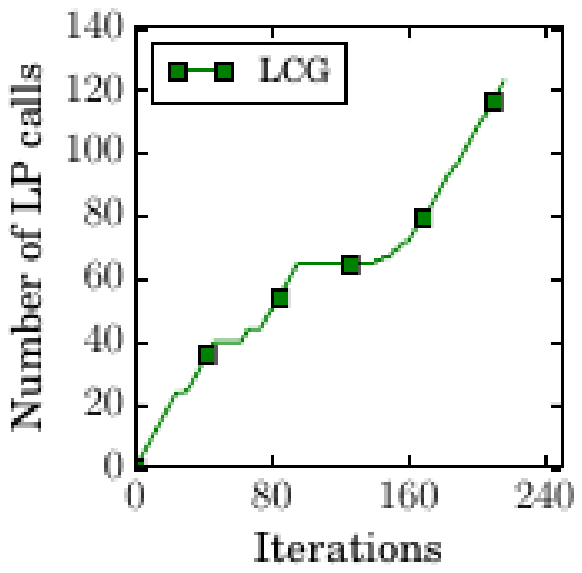}
  \\
  cache hit rate: \(69.46\%\)
  &
  cache hit rate: \(43.06\%\)
  \end{tabular}
  \caption{\label{fig:tsp-vanilla} LCG vs. CG on structured regression problems with
    feasible regions being a TSP polytope over \(11\) nodes
    (left) and \(12\) nodes (right). In both cases LCG is
    significantly faster in wall-clock time.
  }
\end{figure*}

\begin{figure*}
  \centering
  \small
  \begin{tabular}{*{2}{c}}
  \includegraphics[height=0.35\linewidth]{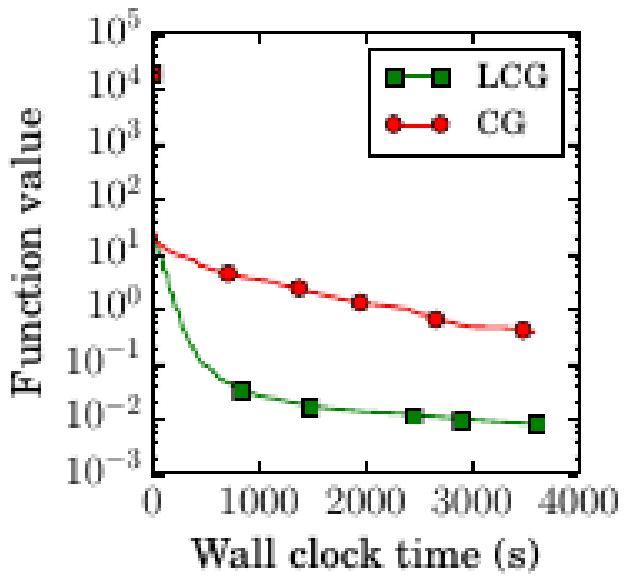}
  &
  \includegraphics[height=0.35\linewidth]{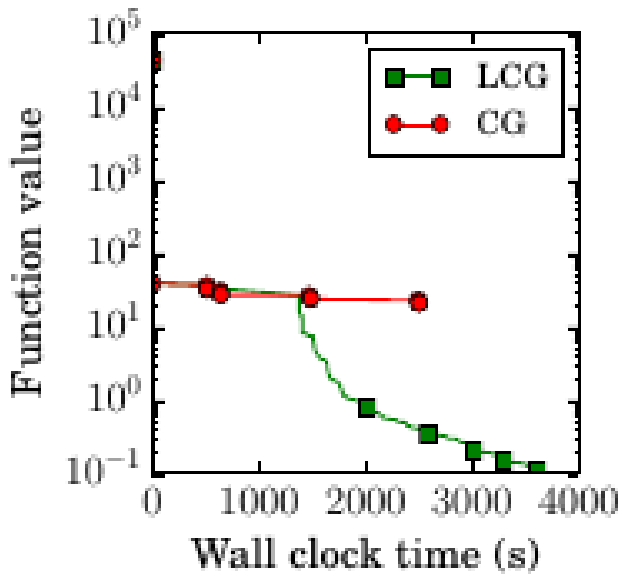}
  \\
  \includegraphics[height=0.35\linewidth]{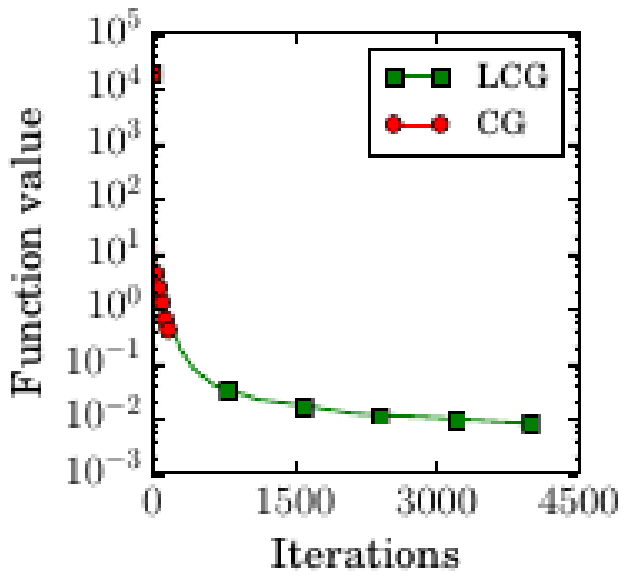}
  &
  \includegraphics[height=0.35\linewidth]{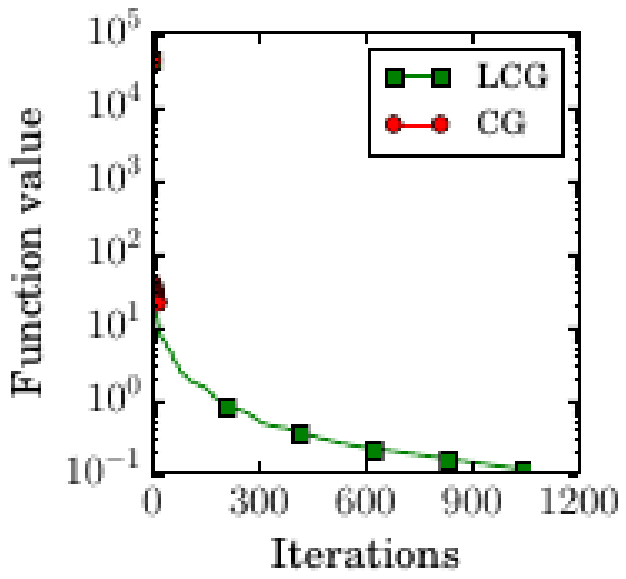}
  \\
  \includegraphics[height=0.35\linewidth]{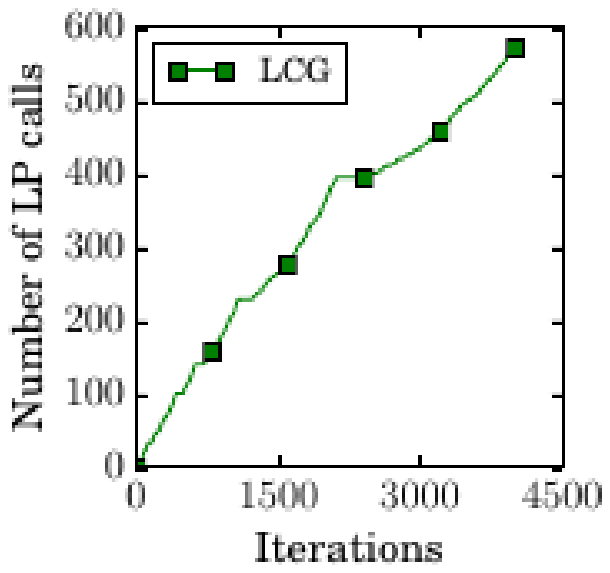}
  &
  \includegraphics[height=0.35\linewidth]{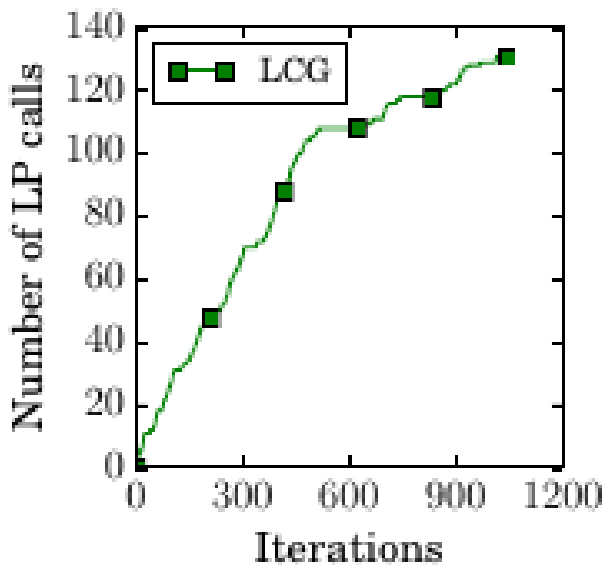}
  \\
  cache hit rate: \(85.61\%\)
  &
  cache hit rate: \(87.48\%\)
  \end{tabular}
  \caption{\label{fig:maxcut-vanilla}
    LCG vs. CG on structured regression instances using cut polytopes over a graph on \(23\) nodes (left)
    and over \(28\) nodes (right) as feasible region. In both instances LCG performs
    significantly better than CG.
  }
\end{figure*}

\begin{figure*}
  \centering
  \small
  \begin{tabular}{*{2}{c}}
  \includegraphics[height=0.35\linewidth]{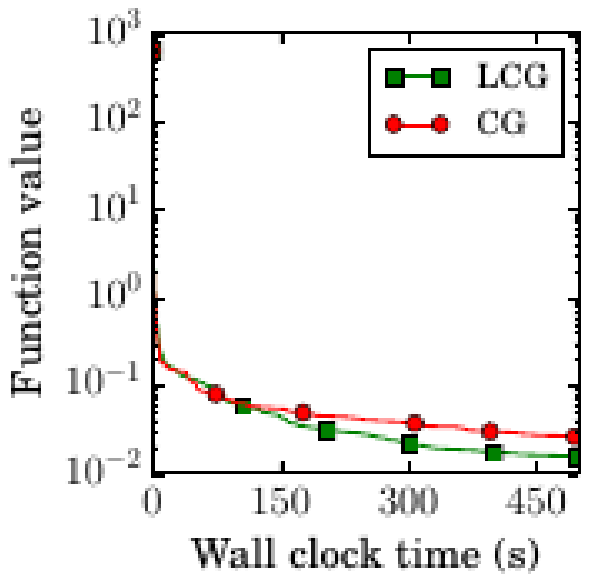}
  &
  \includegraphics[height=0.35\linewidth]{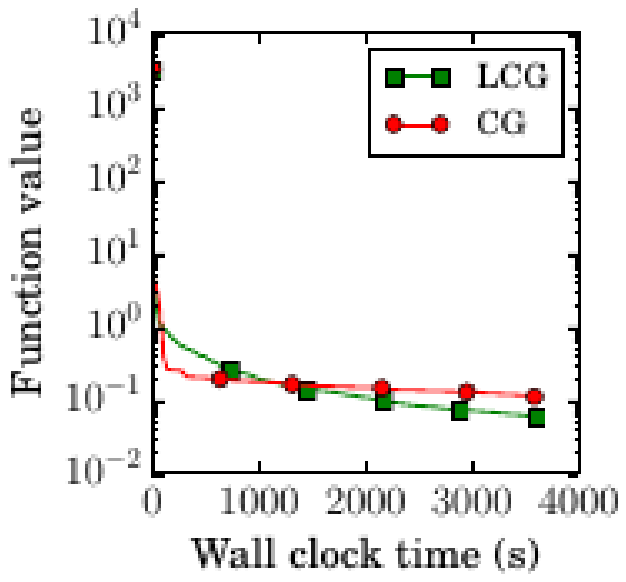}
  \\
  \includegraphics[height=0.35\linewidth]{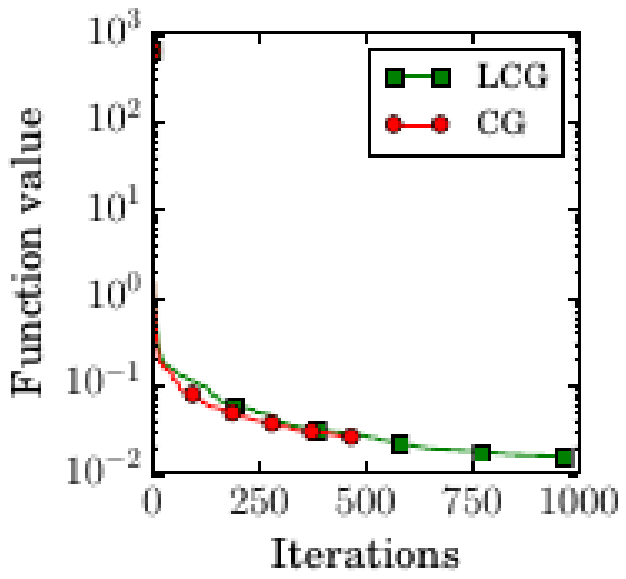}
  &
  \includegraphics[height=0.35\linewidth]{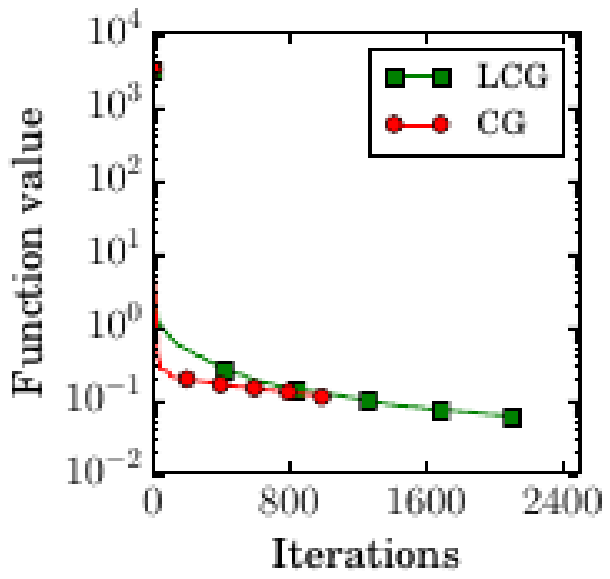}
  \\
  \includegraphics[height=0.35\linewidth]{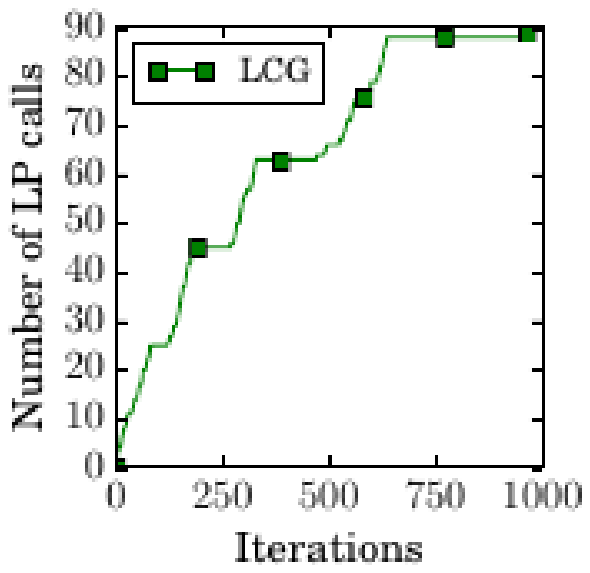}
  &
  \includegraphics[height=0.35\linewidth]{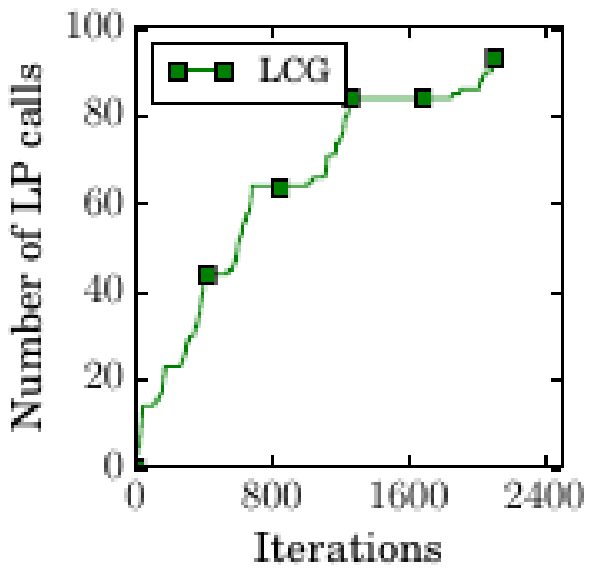}
  \\
  cache hit rate: \(90.83\%\)
  &
  cache hit rate: \(95.59\%\)
  \end{tabular}
  \caption{\label{fig:spanning-tree-vanilla}
    LCG vs. CG on structured regression instances with extended formulation of the
    spanning tree problem on a \(10\)
    node graph on the left and a \(15\)
    node graph on the right.
    }
\end{figure*}

\paragraph{Pairwise Conditional Gradient Algorithm}

As we inherit structural restrictions of PCG
on the feasible region,
the problem repertoire is limited in this case.
We tested the Pairwise Conditional Gradient algorithm on the structured
regression problem
with feasible regions from the MIPLIB instances
\texttt{eil33-2}, \texttt{air04}, \texttt{eilB101}, \texttt{nw04},
\texttt{disctom}, \texttt{m100n500k4r1}
(Figures~\ref{fig:eil33-air04-offline},~\ref{fig:nw04-eilB101-offline}
and~\ref{fig:disctom-m100n500k4r1-offline}).

Again similarly to the vanilla Frank-Wolfe algorihtm,
we observed a significant improvement in wall-clock time
of LPCG compared to CG,
due to the faster iteration of the lazy algorithm.

\begin{figure*}
  \centering
  \small
  \begin{tabular}{*{2}{c}}
    eil33-2, \(4516\) dimensions
    &
    air04, \(8904\) dimensions
    \\
  \includegraphics[height=0.35\linewidth]{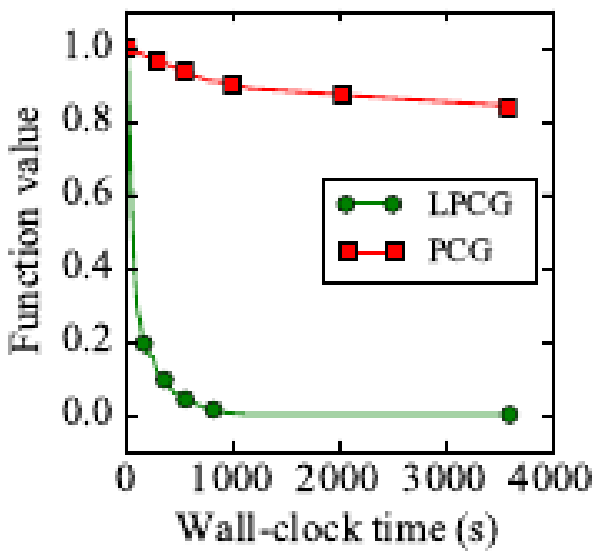}
  &
  \includegraphics[height=0.35\linewidth]{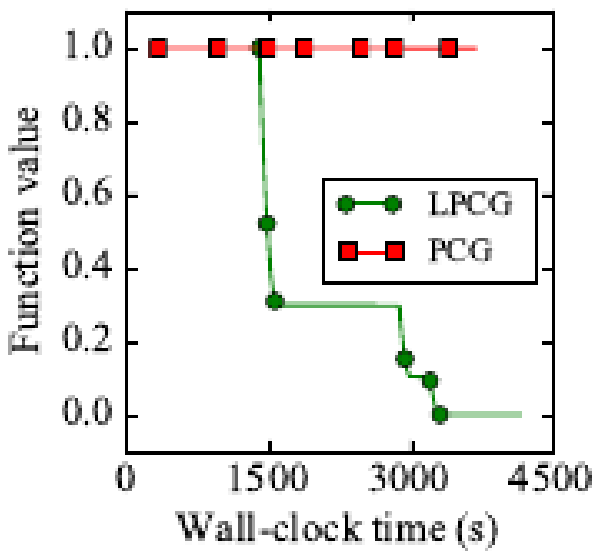}
  \\
  \includegraphics[height=0.35\linewidth]{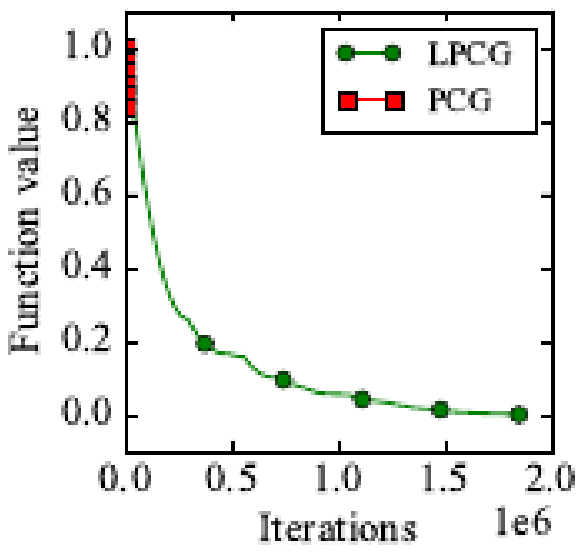}
  &
  \includegraphics[height=0.35\linewidth]{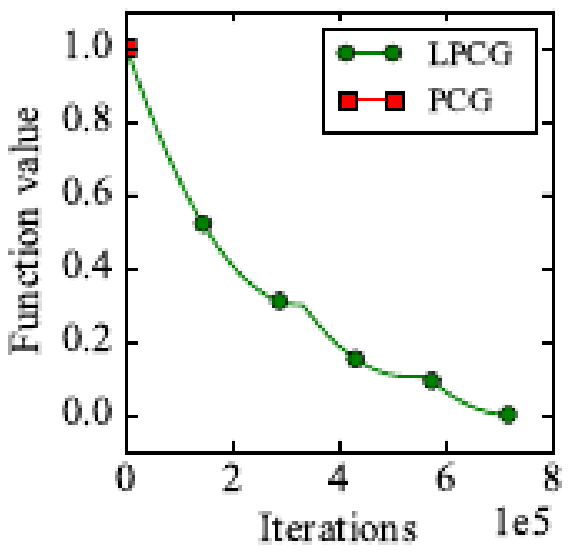}
  \\
  \includegraphics[height=0.35\linewidth]{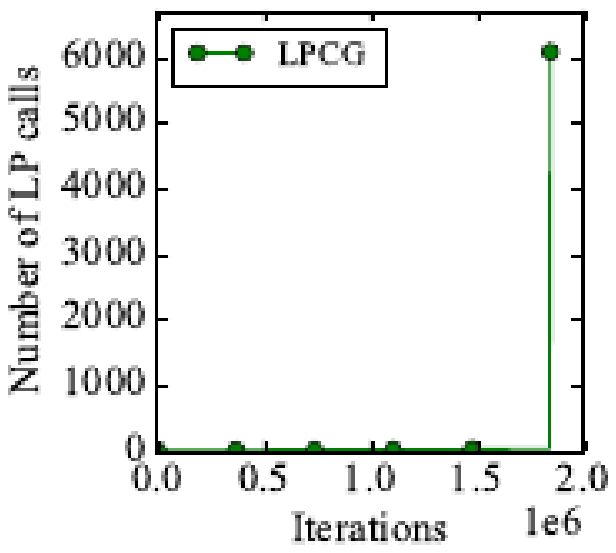}
  &
  \includegraphics[height=0.35\linewidth]{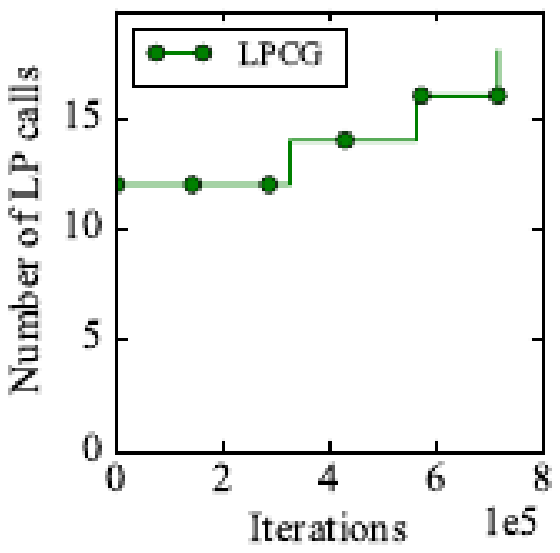}
  \\
  cache hit rate: \(99.8\%\)
  &
  cache hit rate: \(99.999\%\)
  \end{tabular}
  \caption{\label{fig:eil33-air04-offline}
    LPCG vs. PCG on two MIPLIB instances \texttt{eil33-2} and \texttt{air04}.
    LPCG converges very fast, making millions of iterations
    with a relatively few oracle calls, while PCG
    completed only comparably few iterations
    due to the time-consuming oracle calls.
    This clearly illustrates the advantage of lazy methods
    when the cost of linear optimization is non-negligible.
    On the left, when reaching \(\varepsilon\)-optimality,
    LPCG performs many (negative) oracle calls to (re-)prove
    optimality; at that point one might opt for stopping the algorithm. 
    On the right LPCG needed a rather long time
    for the initial bound tigthening of \(\Phi_{0}\),
    before converging significantly faster than PCG.
  }
\end{figure*}
\begin{figure*}
  \centering
  \begin{tabular}{cc}
  eilB101, \(2818\) dimensions
  &
  nw04, \(87482\) dimensions
  \\
  \includegraphics[height=0.35\linewidth]{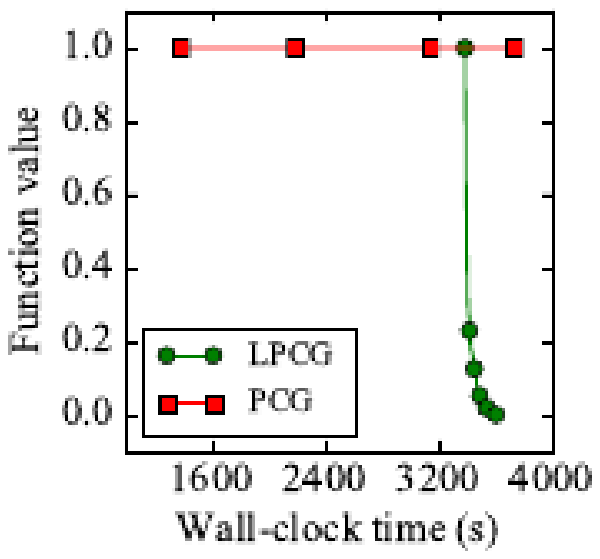}
  &
  \includegraphics[height=0.35\linewidth]{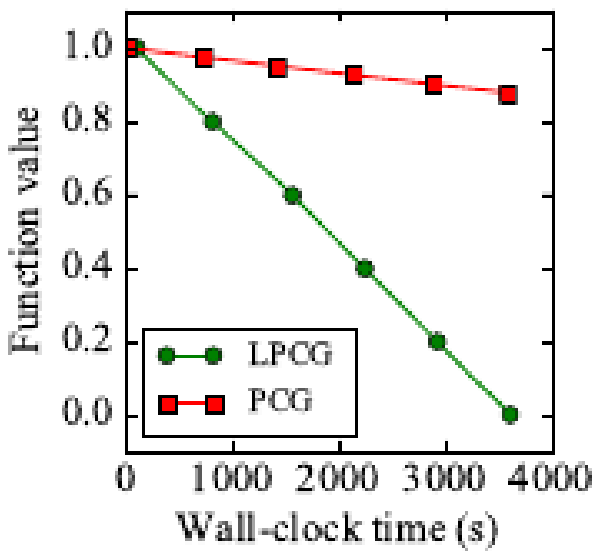}
  \\
  \includegraphics[height=0.35\linewidth]{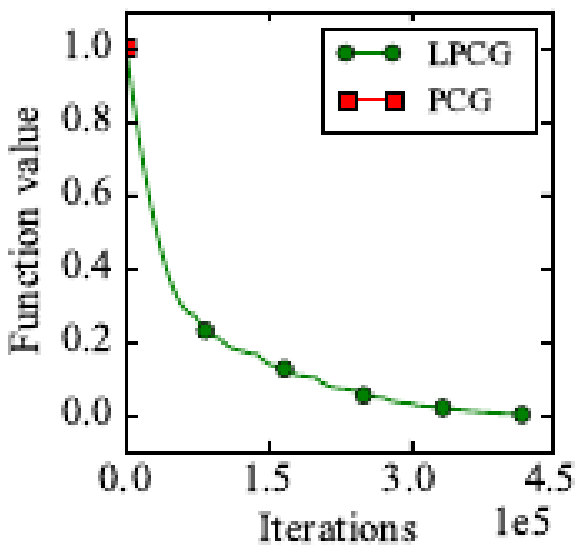}
  &
  \includegraphics[height=0.35\linewidth]{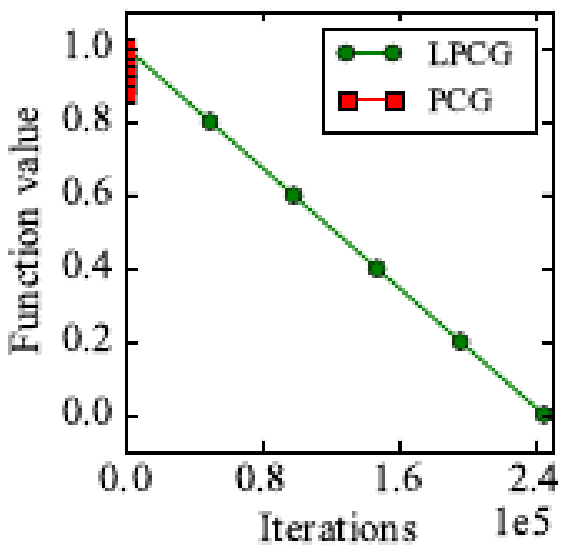}
  \\
  \includegraphics[height=0.35\linewidth]{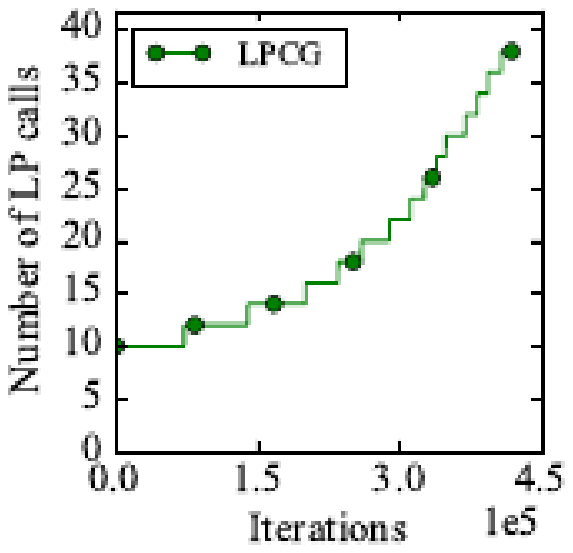}
  &
  \includegraphics[height=0.35\linewidth]{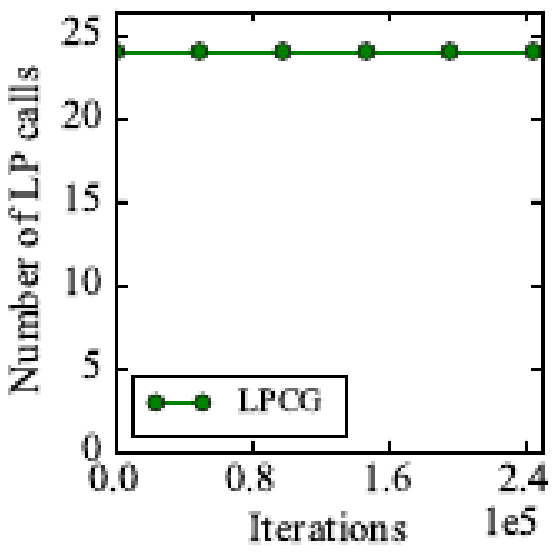}
  \\
  cache hit rate: \(99.995\%\)
  &
  cache hit rate: \(99.995\%\)
  \end{tabular}
  \caption{\label{fig:nw04-eilB101-offline}
    LPCG vs. PCG on MIPLIB instances \texttt{eilB101} and \texttt{nw04}
    with quadratic loss functions.
    For the \texttt{eilB101} instance,
    LPCG spent most of the time tightening \(\Phi_{0}\),
    after which it converged very fast,
    while PCG was unable to complete a single iteration even solving
    the problem only approximately.
    For the \texttt{nw04} instance LPCG needed no more oracle calls
    after an initial phase, while significantly outperforming PCG.
  }
\end{figure*}
\begin{figure*}
  \centering
  \begin{tabular}{cc}
  disctom, \(10000\) dimensions
  &
  m100n500k4r1, \(600\) dimensions
  \\
  \includegraphics[height=0.35\linewidth]{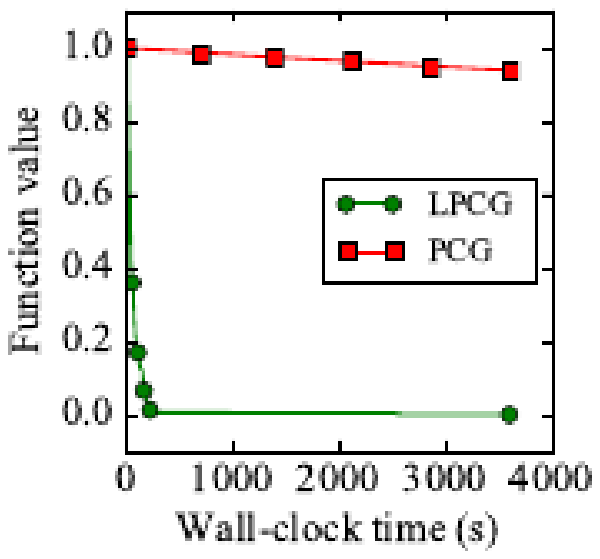}
  &
  \includegraphics[height=0.35\linewidth]{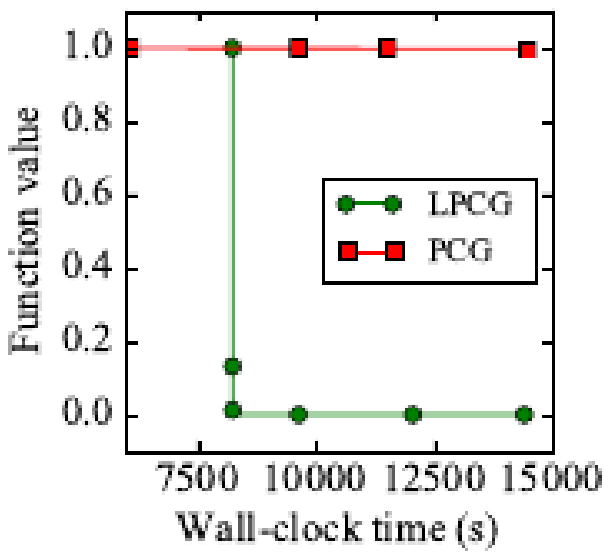}
  \\
  \includegraphics[height=0.35\linewidth]{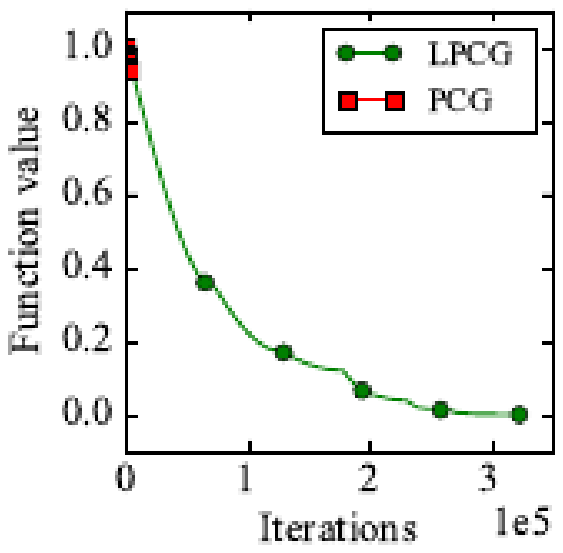}
  &
  \includegraphics[height=0.35\linewidth]{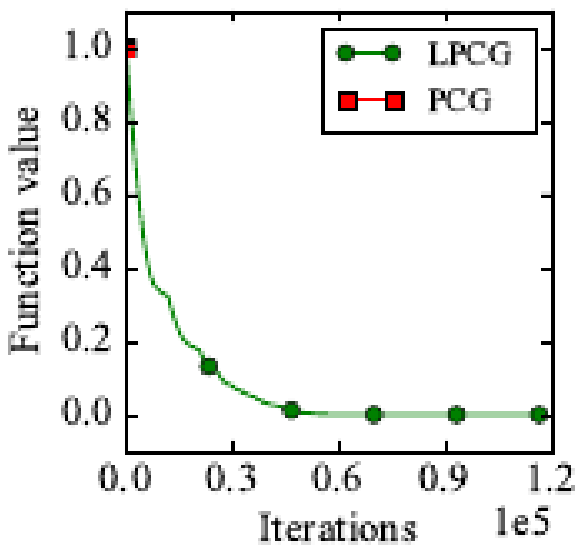}
  \\
  \includegraphics[height=0.35\linewidth]{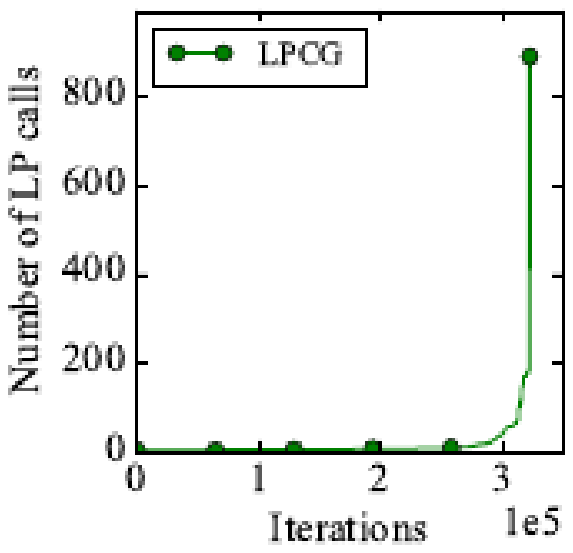}
  &
  \includegraphics[height=0.35\linewidth]{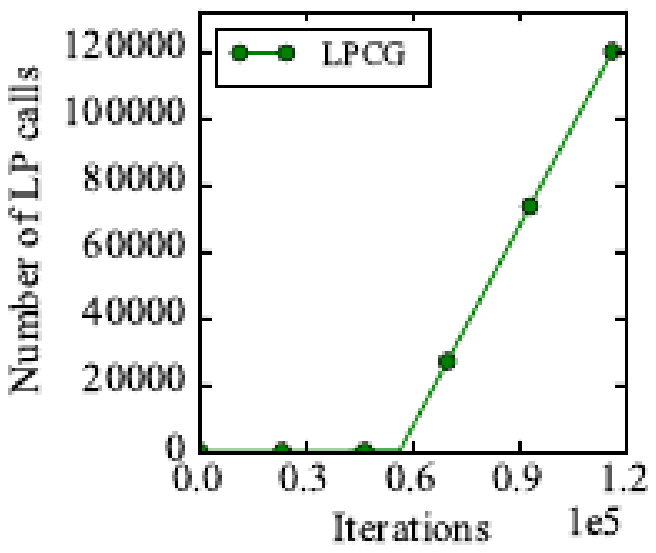}
  \\
  cache hit rate: \(99.9\%\)
  &
  cache hit rate: \(48.4\%\)
  \end{tabular}
  \caption{\label{fig:disctom-m100n500k4r1-offline}
    LPCG vs. PCG on MIPLIB instances \texttt{disctom} and \texttt{m100n500k4r1}.
    After very fast convergence, there is a huge increase
    in the number of oracle calls for the lazy algorithm LPCG
    due to reaching \(\varepsilon\)-optimality as explained before.
    On the right the initial bound tightening for \(\Phi_{0}\)
    took a considerable amount of time but then convergence is almost instantaneous.
  }
\end{figure*}

\subsubsection{Online Results}
\label{sec:onlineResults}

Additionally to the quadratic objective functions above we
tested the online version on random linear functions
\(c x + b\) with \(c \in [-1, +1]^{n}\) and \(b \in [0, 1].\)
For online algorithms,
each experiment used a random sequence of
\(100\) different random loss functions.
In every figure
the left column uses linear loss functions,
while the right one uses quadratic loss functions
over the same polytope. As customary, we did not use line search here
but used the respective prescribed step sizes. 

As an instance of the structured regression problem
we used the flow-based formulation
for Hamiltonian cycles in graphs,
i.e., the traveling salesman problem (TSP)
for graphs with \(11\)
and \(16\)
nodes (Figures~\ref{fig:tspSmall} and~\ref{fig:tspLarge}).
\begin{figure*}[ht] 
  \centering
  \begin{tabular}{cc}
  \includegraphics[height=0.35\linewidth]{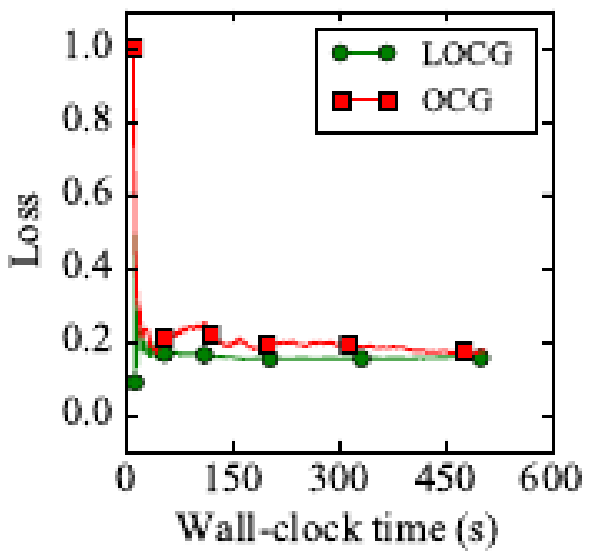}
  &
  \includegraphics[height=0.35\linewidth]{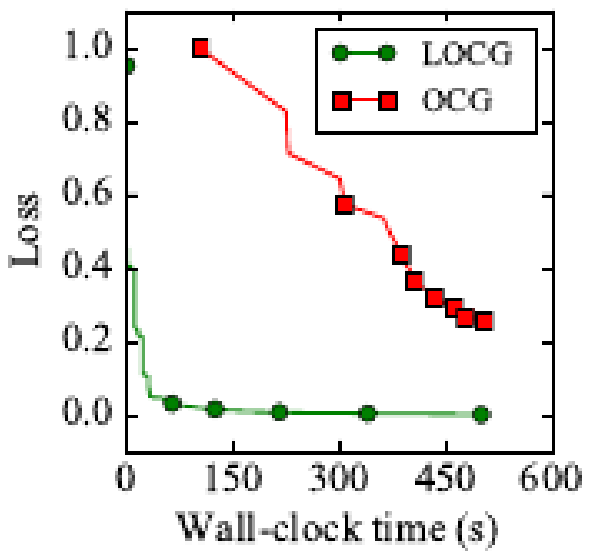}
  \\
  \includegraphics[height=0.35\linewidth]{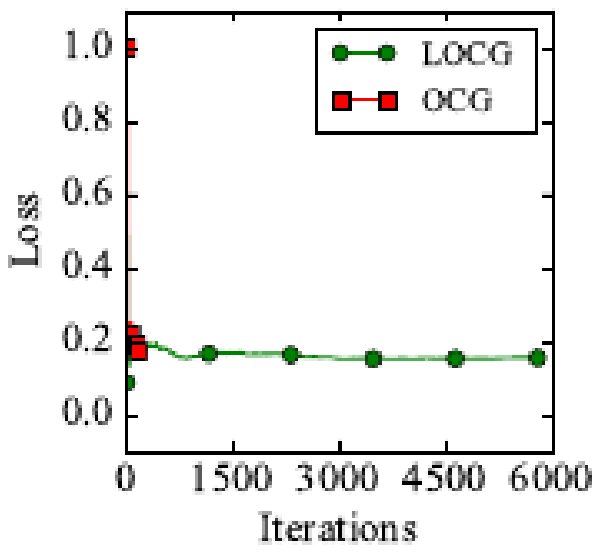}
  &
  \includegraphics[height=0.35\linewidth]{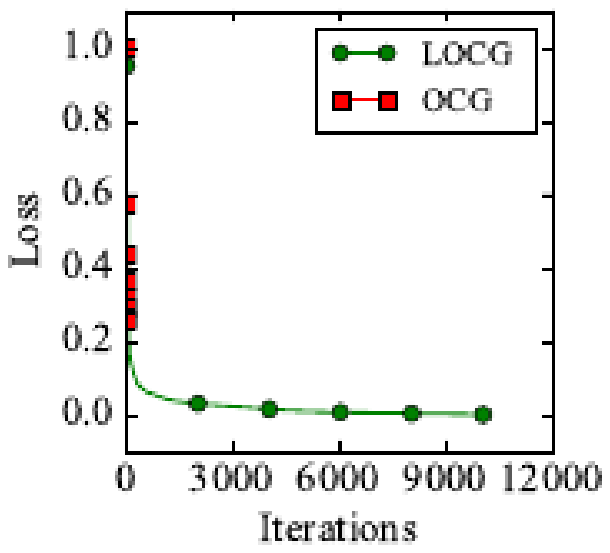}
  \\
  \includegraphics[height=0.35\linewidth]{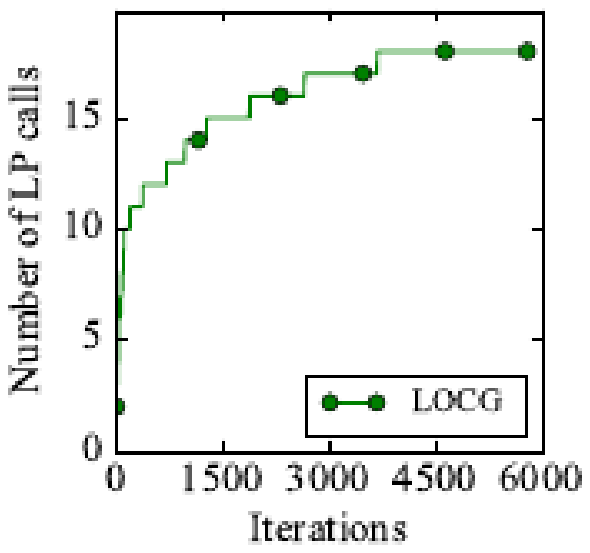}
  &
  \includegraphics[height=0.35\linewidth]{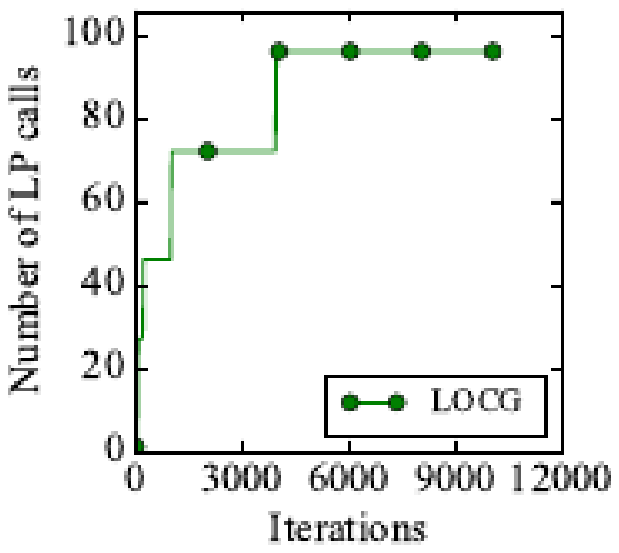}
  \\
  cache hit rate: \(99.7\%\)
  &
  cache hit rate: \(99.0\%\)
  \end{tabular}
  \caption{\label{fig:tspSmall} LOCG vs. OCG over TSP polytope
    for a graph with \(11\) nodes as feasible region and
    with a \(500\) seconds time limit.
    OCG completed only a few iterations,
    resulting in a several times larger final loss
    for quadratic loss functions (right).
    Notice that with time LOCG needed fewer and fewer LP calls.
  }
\end{figure*}
\begin{figure*}
  \centering
  \begin{tabular}{cc}
  \includegraphics[height=0.35\linewidth]{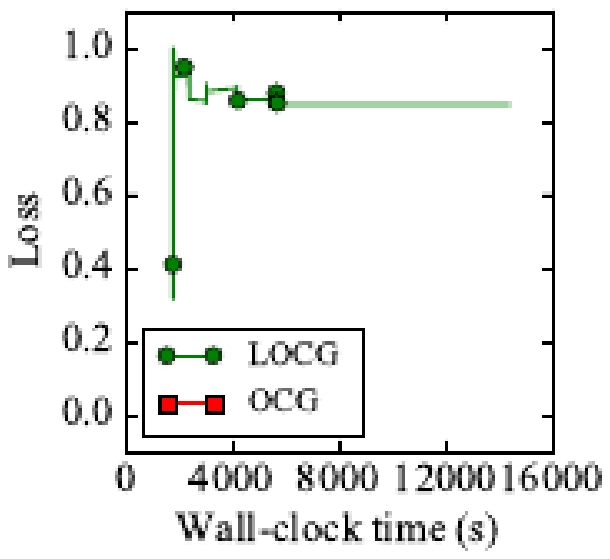}
  &
  \includegraphics[height=0.35\linewidth]{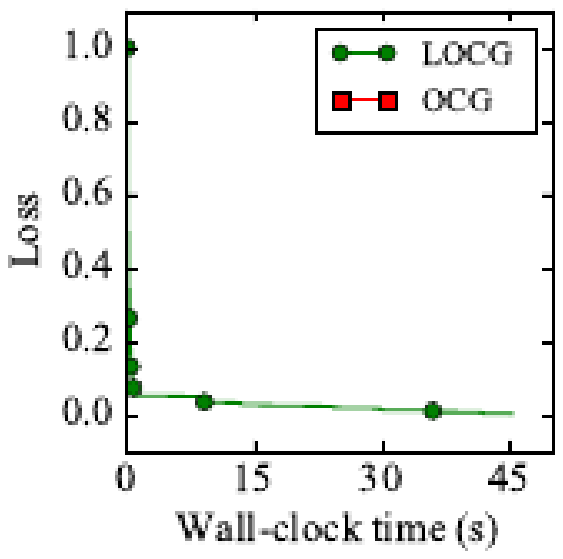}
  \\
  \includegraphics[height=0.35\linewidth]{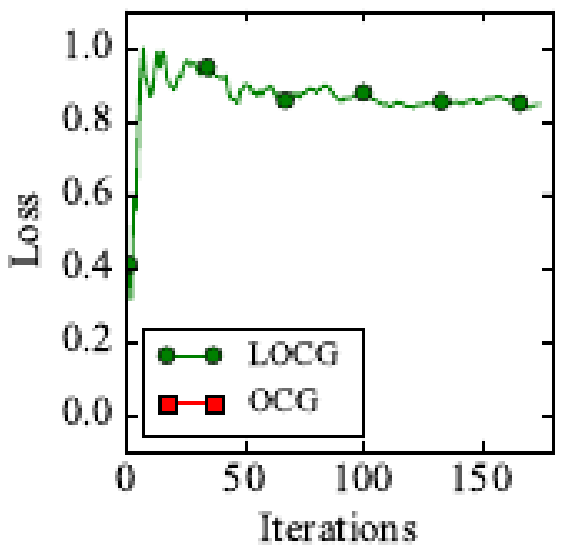}
  &
  \includegraphics[height=0.35\linewidth]{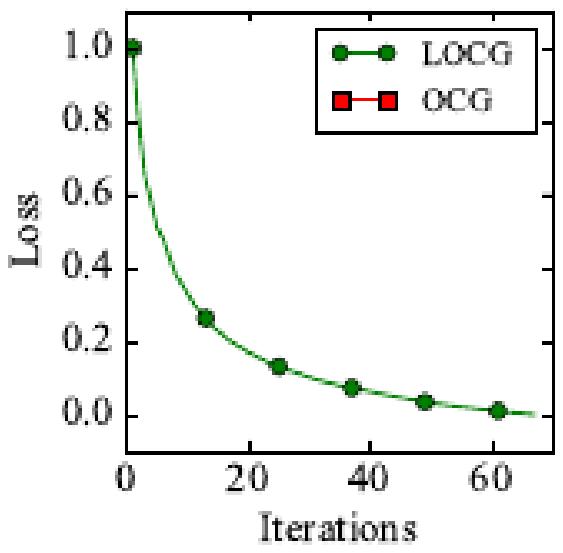}
  \\
  \includegraphics[height=0.35\linewidth]{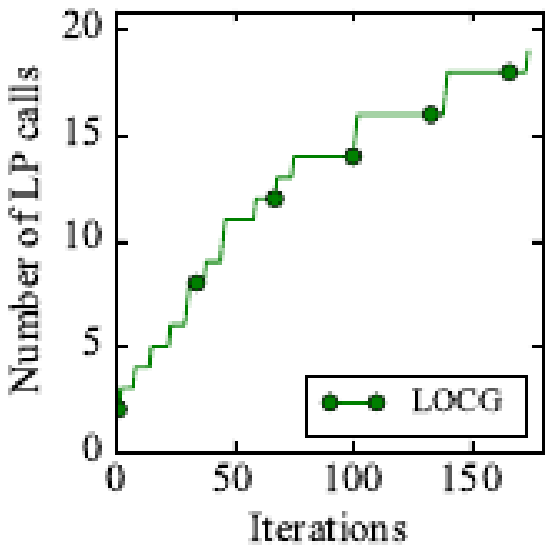}
  &
  \includegraphics[height=0.35\linewidth]{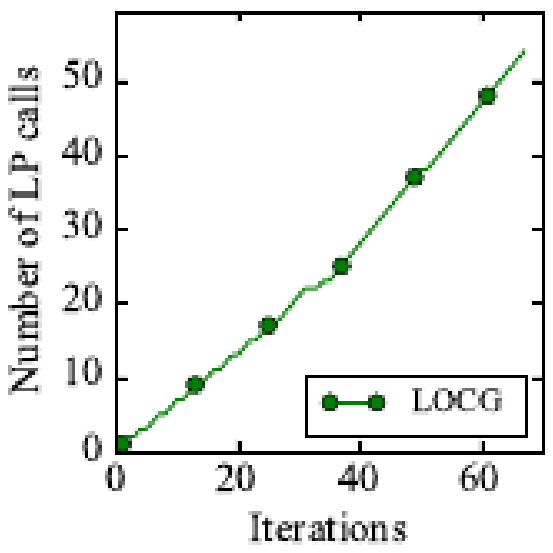}
  \\
  cache hit rate: \(89.1\%\)
  &
  cache hit rate: \(20.6\%\)
  \end{tabular}
  \caption{\label{fig:tspLarge}
    LOCG vs. OCG over TSP polytope for a graph with \(16\)
    nodes with a time limit of \(7200\)
    seconds.
    OCG was not able to complete a single
    iteration and in the quadratic case (right)
    even LOCG could not complete any more iteration after 50s.
    The quadratic losses nicely demonstrate
    speed improvements (mostly) through early termination of the linear
    optimization as the cache rate is only \(20.6\%\).
  }
\end{figure*}
For these small instances, the oracle problem can be solved in
reasonable time.
Another instance of the structured regression problem
uses the standard formulation of the
cut polytope for graphs with \(23\) and \(28\) nodes
as the feasible region
(Figures~\ref{fig:maxcutSmall} and~\ref{fig:maxcutLarge}).
\begin{figure*}
  \centering
  \begin{tabular}{cc}
  \includegraphics[height=0.35\linewidth]{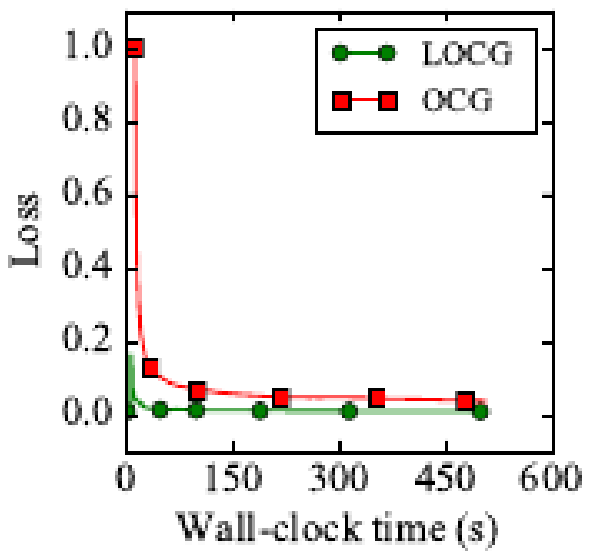}
  &
  \includegraphics[height=0.35\linewidth]{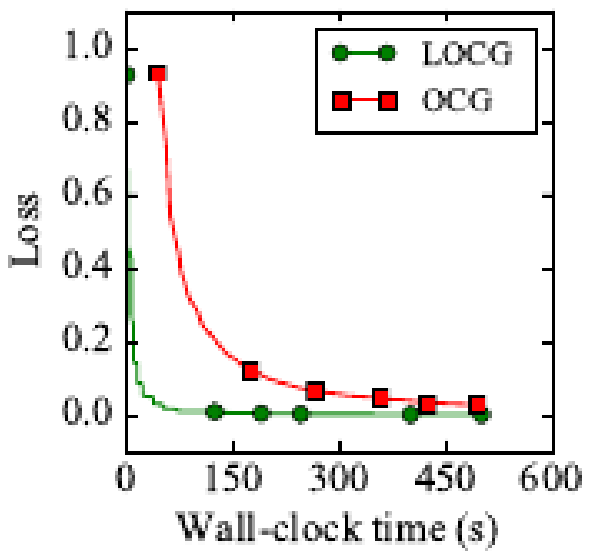}
  \\
  \includegraphics[height=0.35\linewidth]{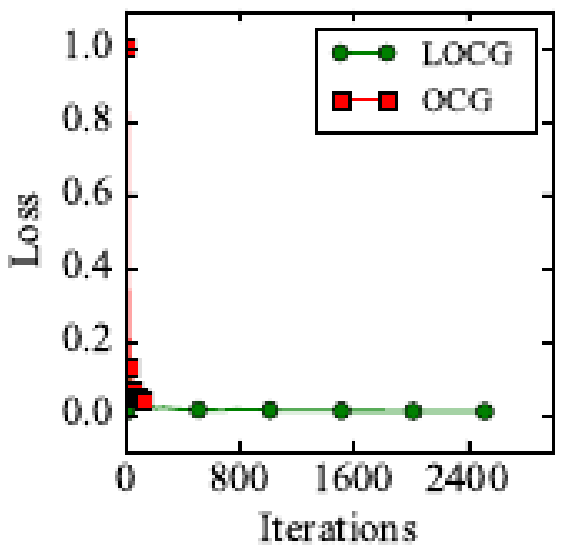}
  &
  \includegraphics[height=0.35\linewidth]{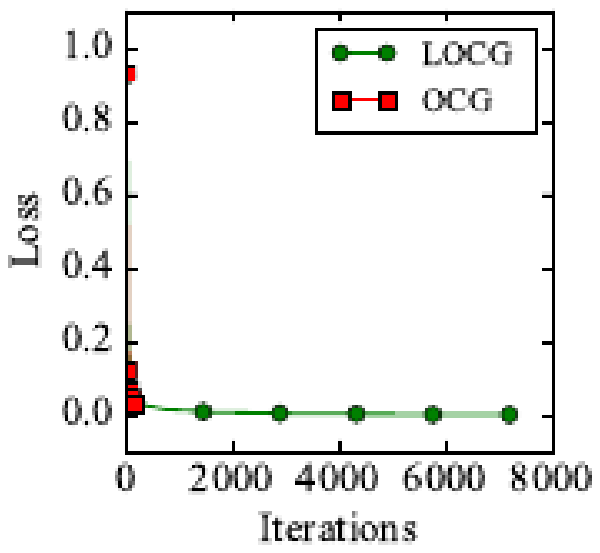}
  \\
  \includegraphics[height=0.35\linewidth]{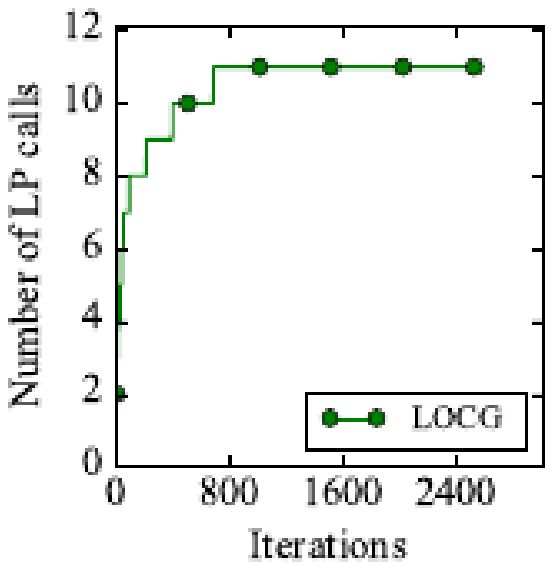}
  &
  \includegraphics[height=0.35\linewidth]{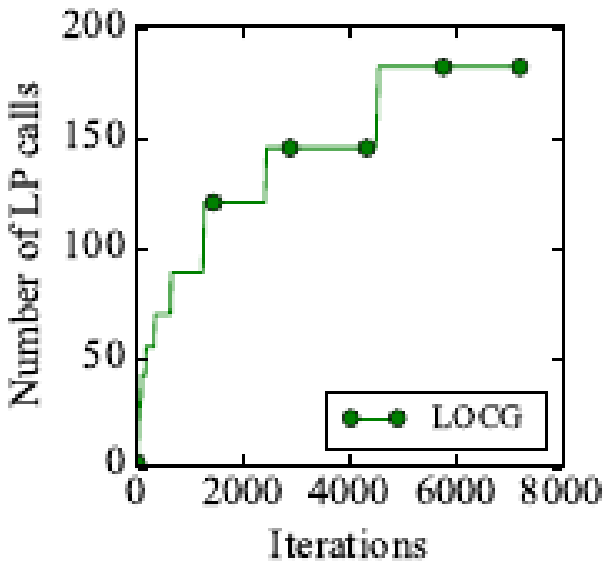}
  \\
  cache hit rate: \(99.6\%\)
  &
  cache hit rate: \(97.5\%\)
  \end{tabular}
  \caption{\label{fig:maxcutSmall} LOCG vs. OCG on the cut polytope
    for a graph with 23 nodes.  Both LOCG and OCG converge to the
    optimum in a few iterations for linear losses, while LOCG is
    remarkably faster for quadratic losses. As can be seen here the
    advantage of lazy algorithms strongly correlates with the
    difficulty of linear optimization.  For linear losses, remarkably
    LOCG needed no LP oracle calls after one third of the time.  }
\end{figure*}
\begin{figure*}
  \centering
  \begin{tabular}{cc}
  \includegraphics[height=0.35\linewidth]{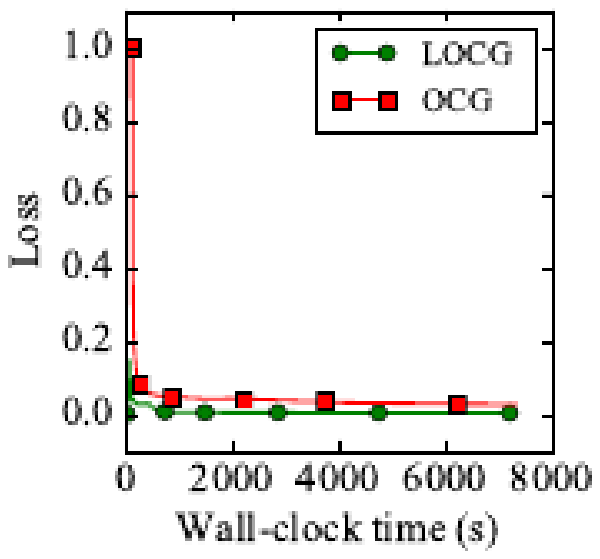}
  &
  \includegraphics[height=0.35\linewidth]{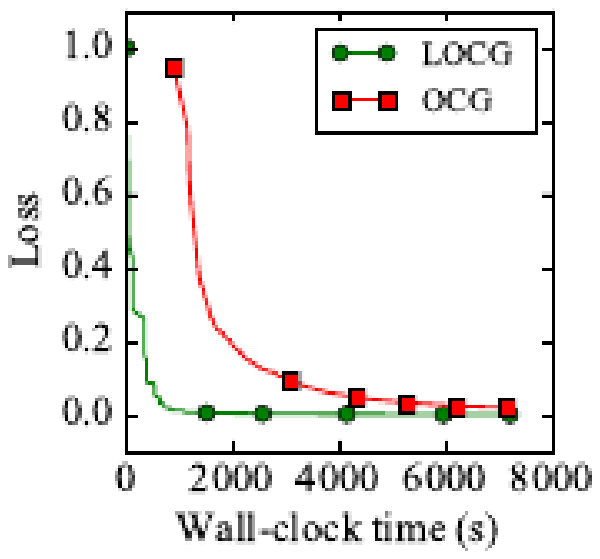}
  \\
  \includegraphics[height=0.35\linewidth]{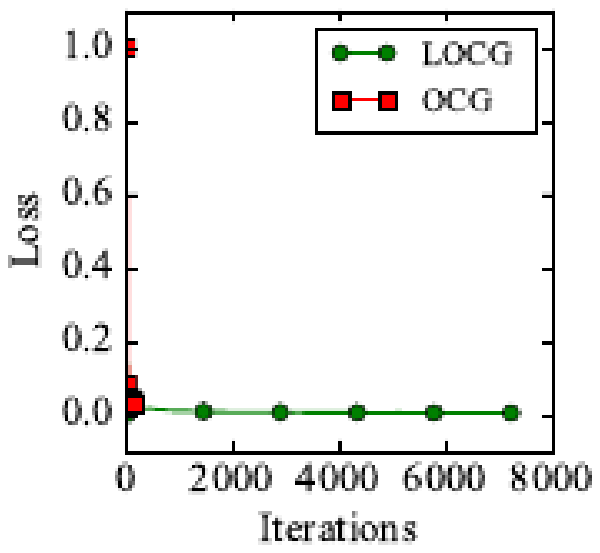}
  &
  \includegraphics[height=0.35\linewidth]{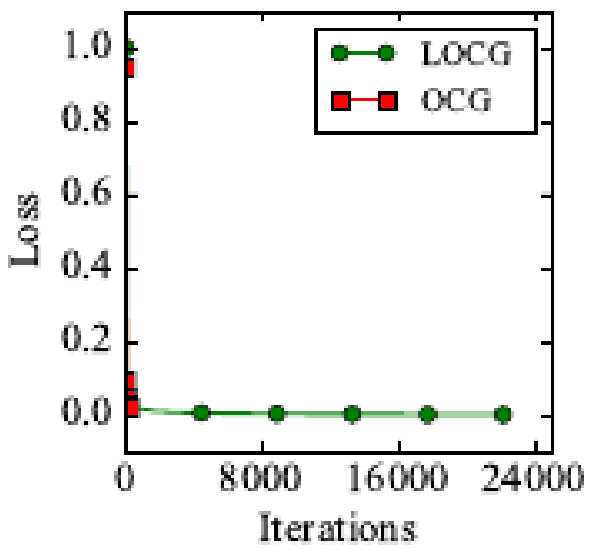}
  \\
  \includegraphics[height=0.35\linewidth]{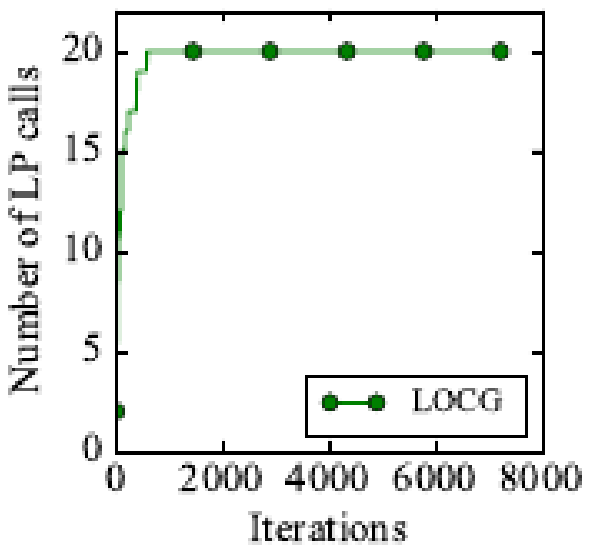}
  &
  \includegraphics[height=0.35\linewidth]{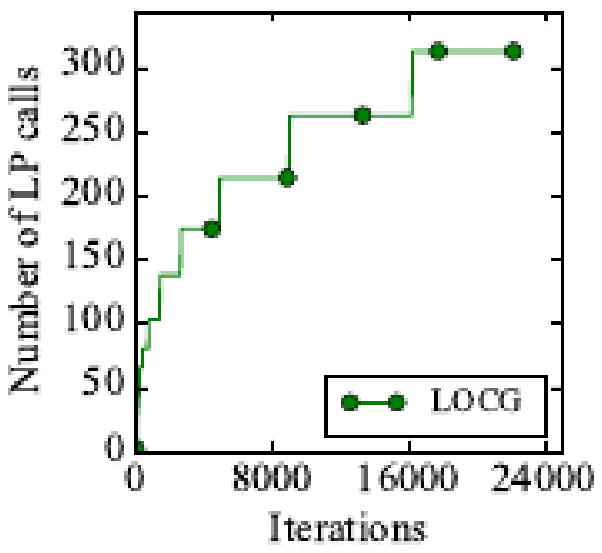}
  \\
  cache hit rate: \(99.7\%\)
  &
  cache hit rate: \(98.6\%\)
  \end{tabular}
  \caption{\label{fig:maxcutLarge}
    LOCG vs. OCG over cut polytope for a 28-node graph.
    As for the smaller problem,
    this also illustrates the advantage of lazy algorithms
    when linear optimization is expensive.
    Again, LOCG needed no oracle calls after a small initial amount of
    time.
  }
\end{figure*}
We also tested our algorithm on are the quadratic unconstrained
boolean optimization (QUBO) instances defined on Chimera graphs
\citep{dash2013note}, which are available at
\url{http://researcher.watson.ibm.com/researcher/files/us-sanjeebd/chimera-data.zip}. The
instances are relatively hard albeit their rather small size and in
general the problem is NP-hard.  (Figure~\ref{fig:quboSmall}
and~\ref{fig:quboLarge}).

\begin{figure*}
  \centering
  \begin{tabular}{cc}
  \includegraphics[height=0.35\linewidth]{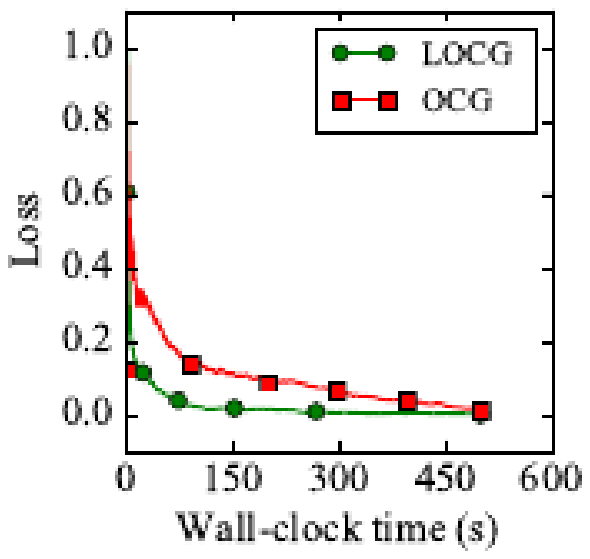}
  &
  \includegraphics[height=0.35\linewidth]{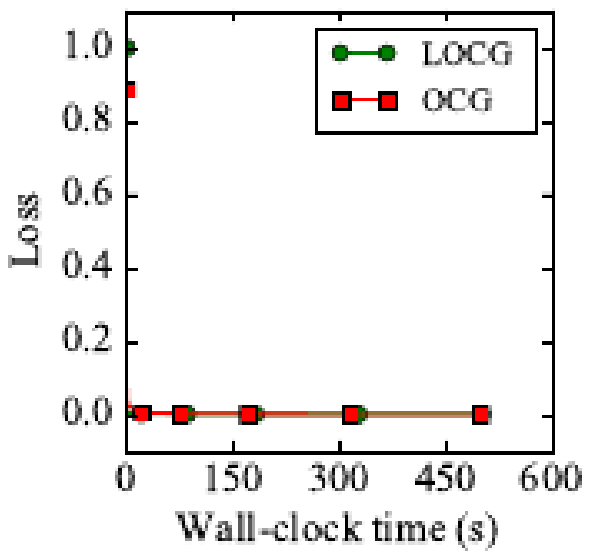}
  \\
  \includegraphics[height=0.35\linewidth]{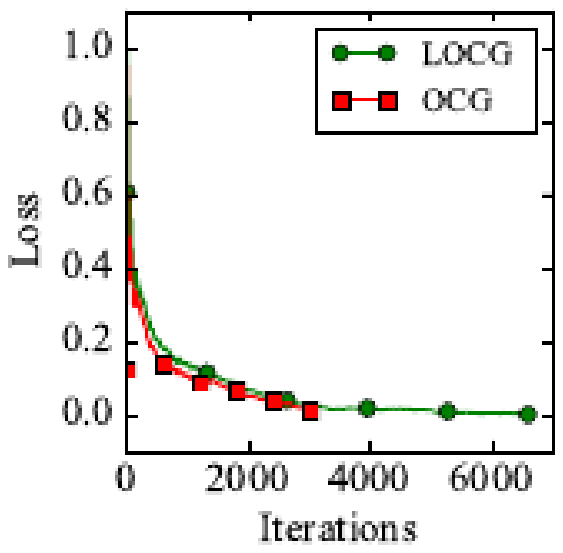}
  &
  \includegraphics[height=0.35\linewidth]{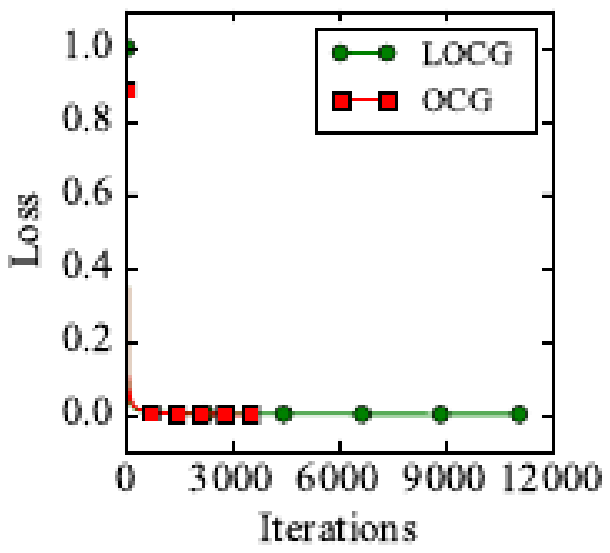}
  \\
  \includegraphics[height=0.35\linewidth]{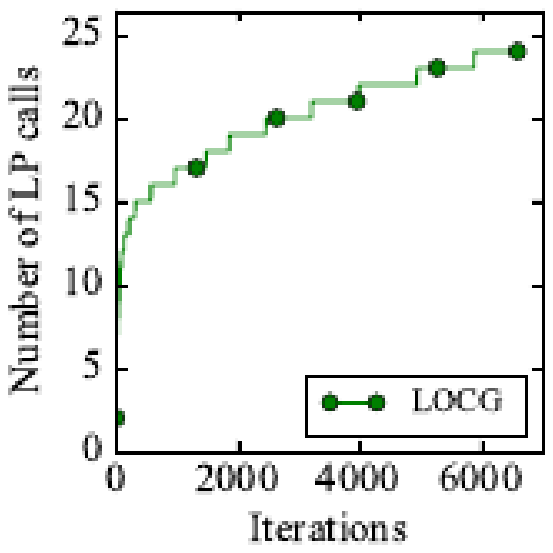}
  &
  \includegraphics[height=0.35\linewidth]{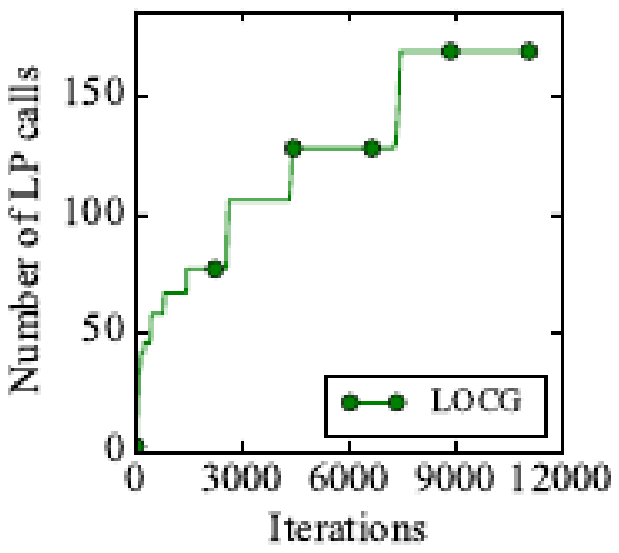}
  \\
  cache hit rate: \(99.6\%\)
  &
  cache hit rate: \(98.5\%\)
  \end{tabular}
  \caption{\label{fig:quboSmall} LOCG vs. OCG on a small QUBO instance.
    For quadratic losses (right), both algorithms converged very fast while
    LOCG still has a significant edge.
    This time, for linear losses (left) LOCG is noticeably faster than OCG.
  }
\end{figure*}
\begin{figure*}
  \centering
  \begin{tabular}{cc}
  \includegraphics[height=0.35\linewidth]{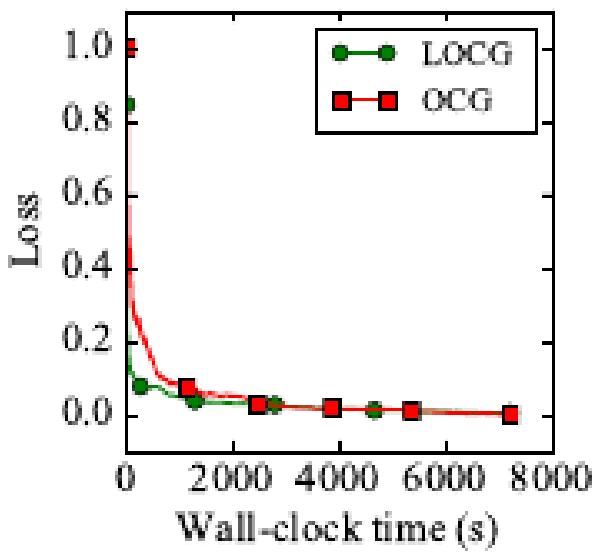}
  &
  \includegraphics[height=0.35\linewidth]{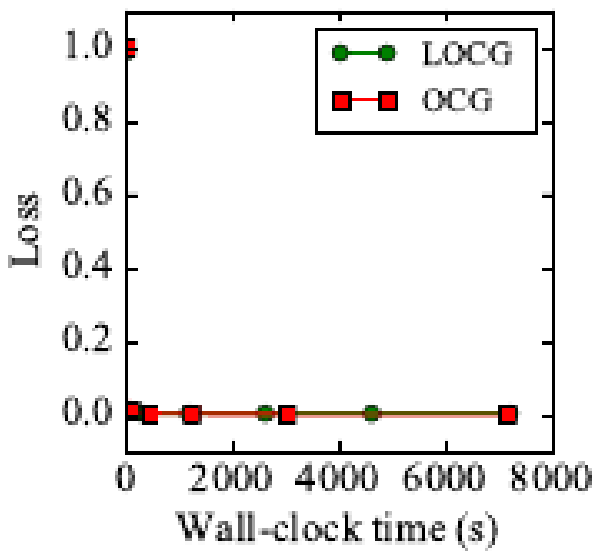}
  \\
  \includegraphics[height=0.35\linewidth]{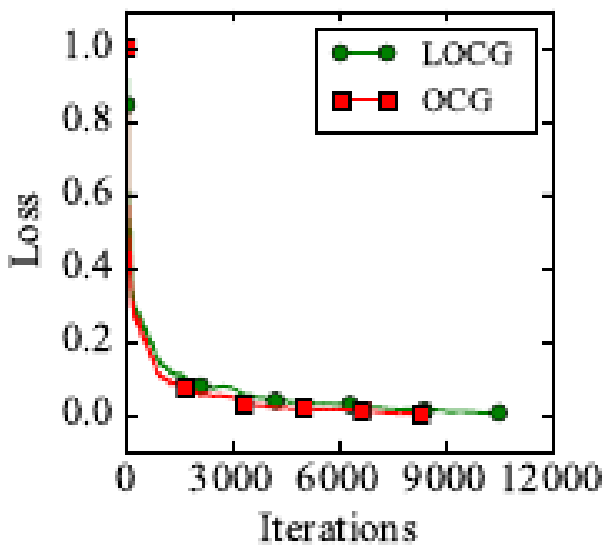}
  &
  \includegraphics[height=0.35\linewidth]{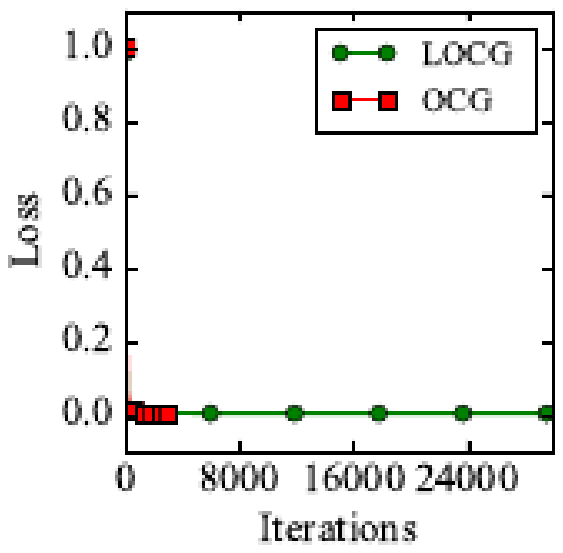}
  \\
  \includegraphics[height=0.35\linewidth]{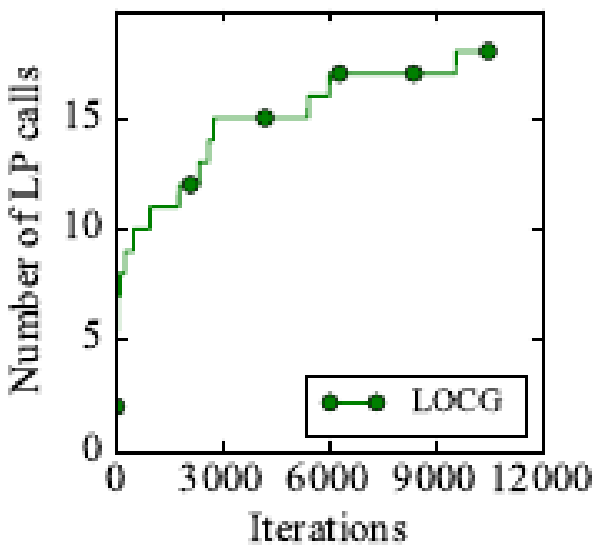}
  &
  \includegraphics[height=0.35\linewidth]{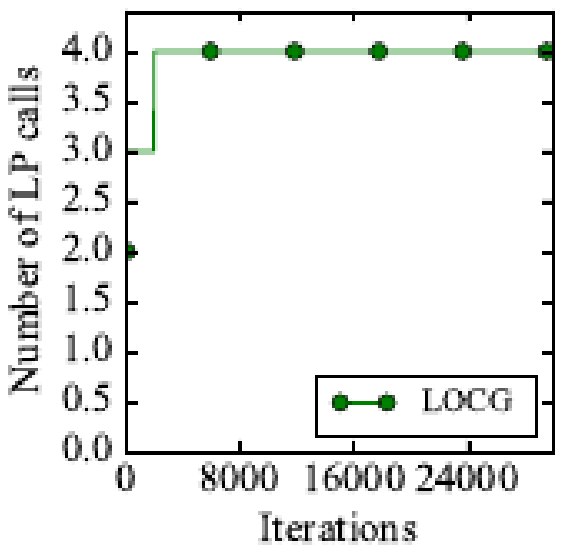}
  \\
  cache hit rate: \(99.8\%\)
  &
  cache hit rate: \(99.99\%\)
  \end{tabular}
  \caption{\label{fig:quboLarge} LOCG vs. OCG on a large QUBO instance.
    Both algorithms converge fast to the optimum. Interestingly, LOCG
    only performs \(4\) LP calls.
  }
\end{figure*}
One instance of the video colocalization problem uses a path polytope from
\url{http://lime.cs.elte.hu/~kpeter/data/mcf/netgen/} that was
generated with the \texttt{netgen} graph generator
(Figure~\ref{fig:path}).
Most of these instances are very large-scale minimum cost flow
instances with several tens of thousands nodes in the
underlying graphs,
therefore solving still takes considerable
time despite the problem being in P.
\begin{figure*}
  \centering
  \begin{tabular}{cc}
  \includegraphics[height=0.35\linewidth]{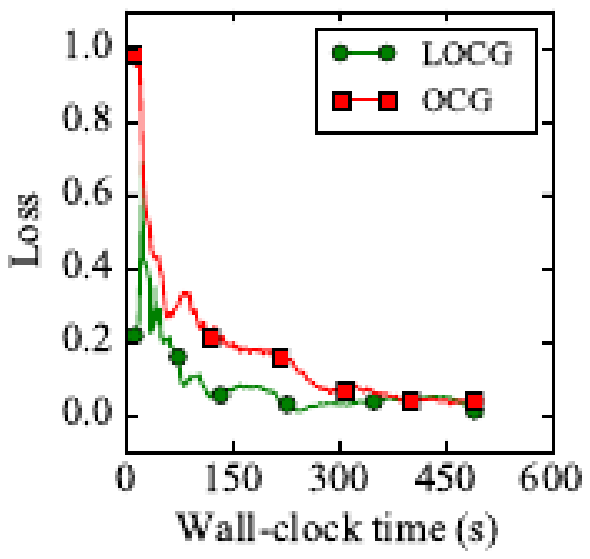}
  &
  \includegraphics[height=0.35\linewidth]{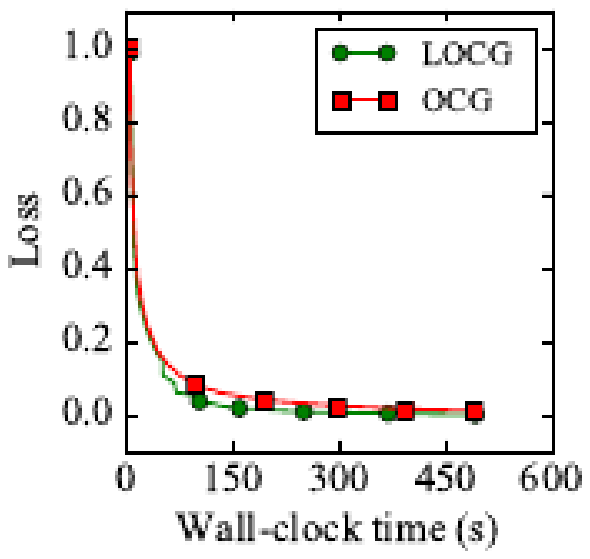}
  \\
  \includegraphics[height=0.35\linewidth]{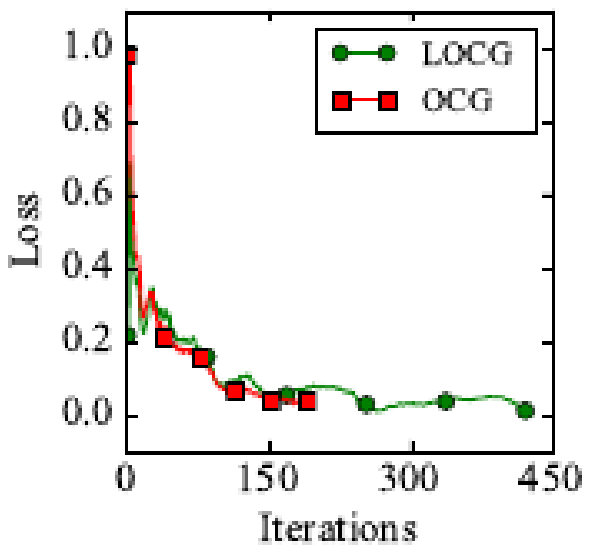}
  &
  \includegraphics[height=0.35\linewidth]{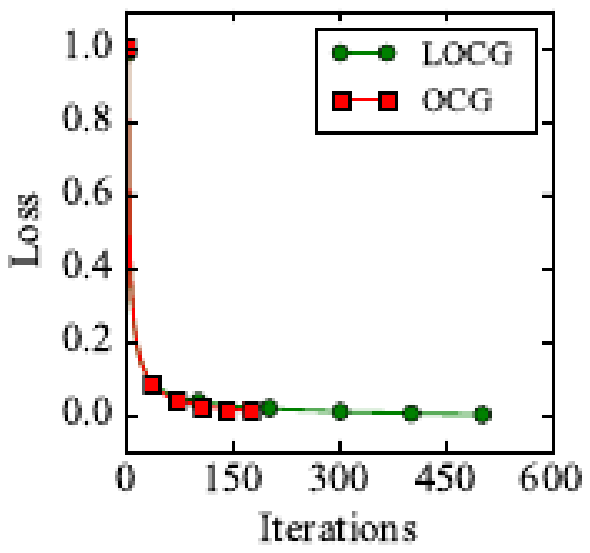}
  \\
  \includegraphics[height=0.35\linewidth]{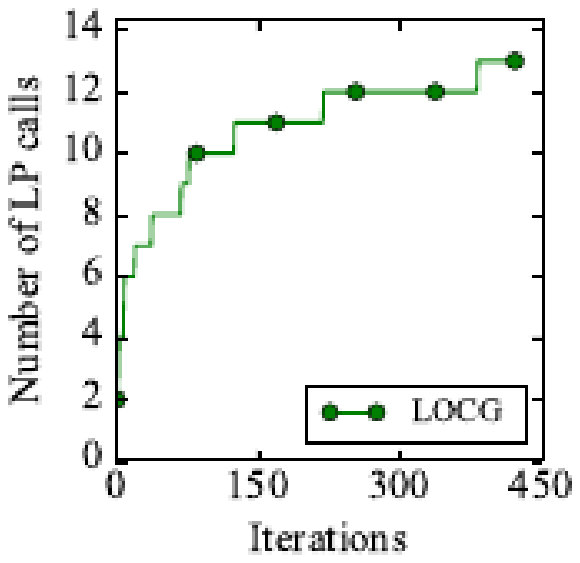}
  &
  \includegraphics[height=0.35\linewidth]{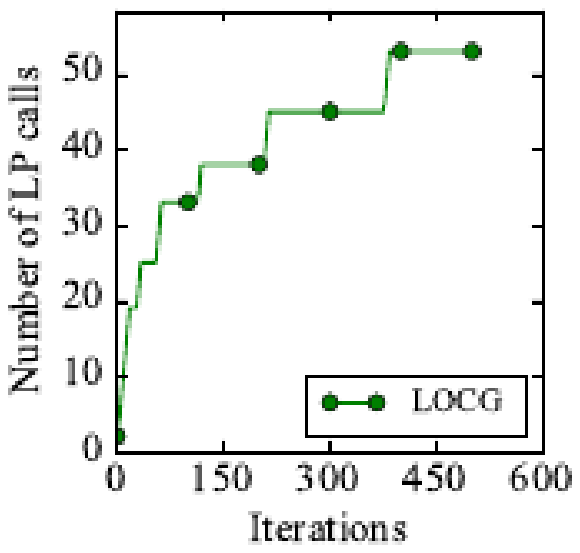}
  \\
  cache hit rate: \(97.0\%\)
  &
  cache hit rate: \(89.6\%\)
  \end{tabular}
  \caption{\label{fig:path} LOCG vs. OCG over a path polytope.
    Similar convergence rate in the number of iterations,
    but significant difference in terms of wall-clock time.
  }
\end{figure*}
We tested on the structured regression problems with
the MIPLIB \citep{AchterbergKochMartin2006,koch2011miplib}) instances
\texttt{eil33-2} (Figure~\ref{fig:eil33-2})
and \texttt{air04} (Figure~\ref{fig:air04}) as feasible regions.
\begin{figure*}
  \centering
  \begin{tabular}{cc}
  \includegraphics[height=0.35\linewidth]{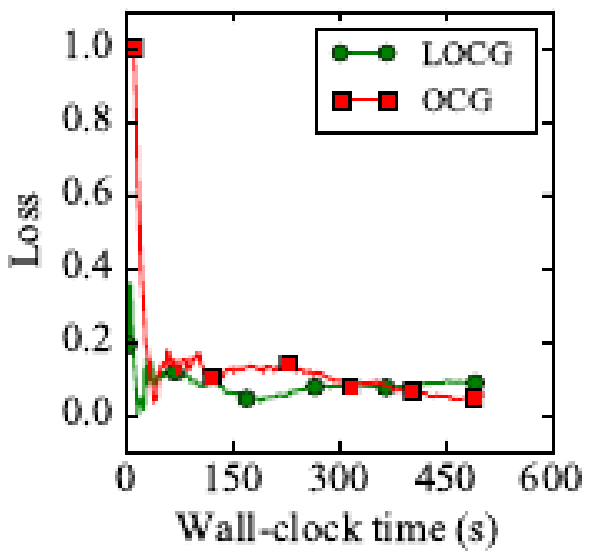}
  &
  \includegraphics[height=0.35\linewidth]{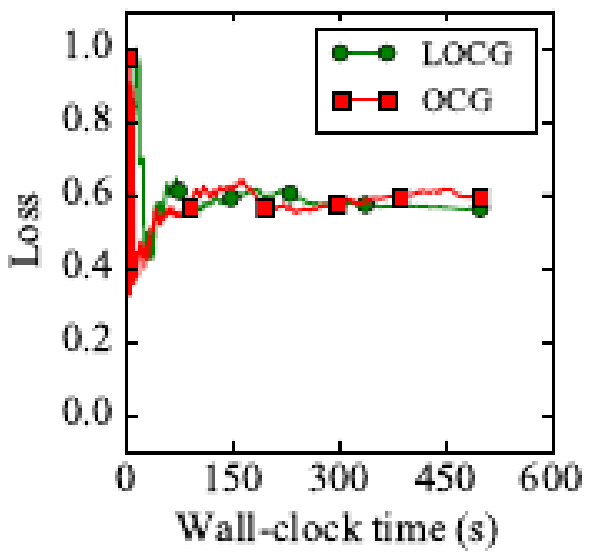}
  \\
  \includegraphics[height=0.35\linewidth]{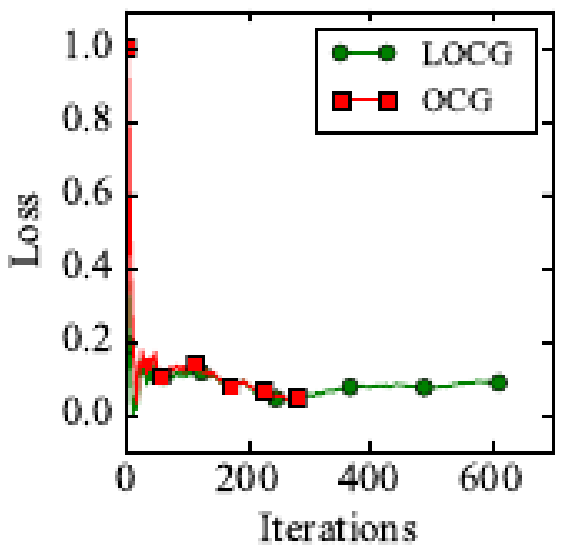}
  &
  \includegraphics[height=0.35\linewidth]{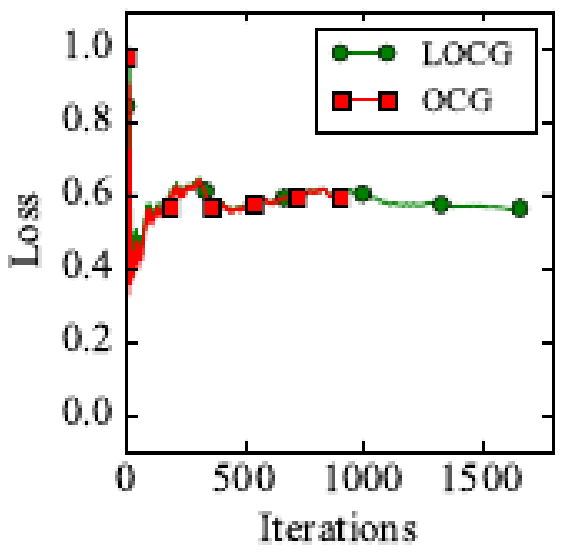}
  \\
  \includegraphics[height=0.35\linewidth]{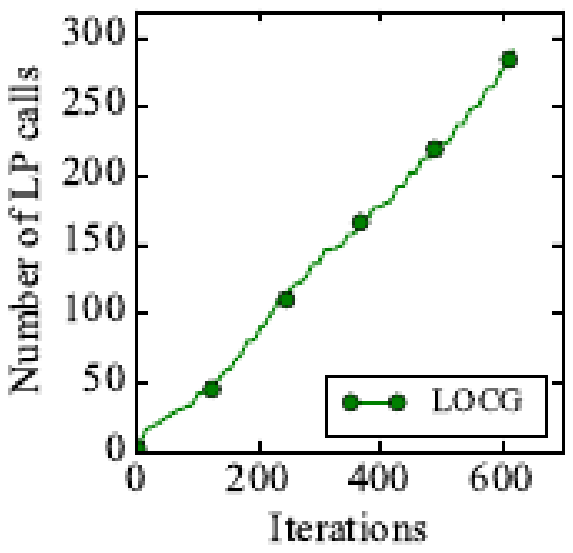}
  &
  \includegraphics[height=0.35\linewidth]{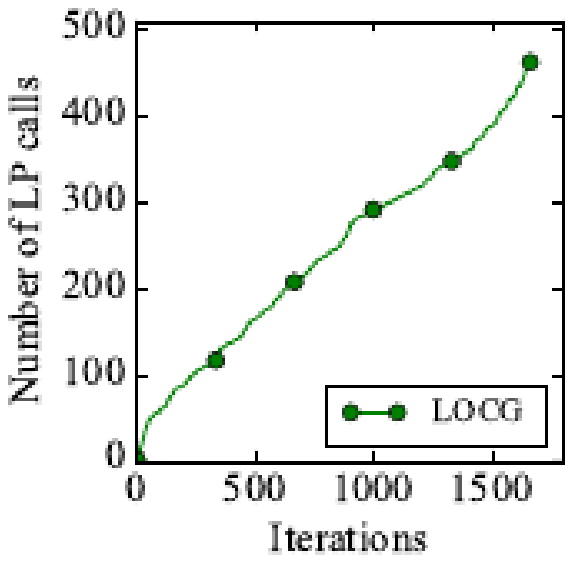}
  \\
  cache hit rate: \(53.1\%\)
  &
  cache hit rate: \(72.2\%\)
  \end{tabular}
  \caption{\label{fig:eil33-2} LOCG vs. OCG on the MIPLIB instance \texttt{eil33-2}.
    All algorithms performed comparably,
    due to fast convergence in this case.
  }
\end{figure*}
\begin{figure*}
  \centering
  \begin{tabular}{cc}
  \includegraphics[height=0.35\linewidth]{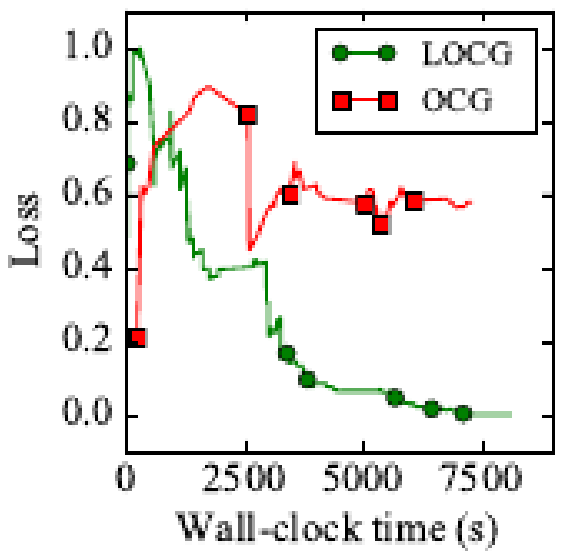}
  &
  \includegraphics[height=0.35\linewidth]{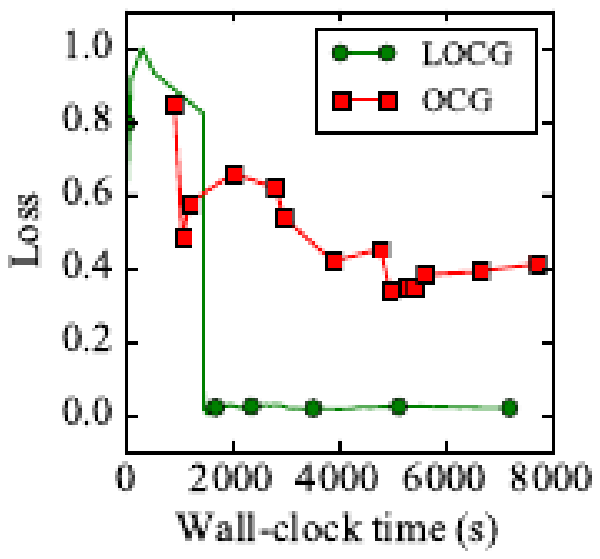}
  \\
  \includegraphics[height=0.35\linewidth]{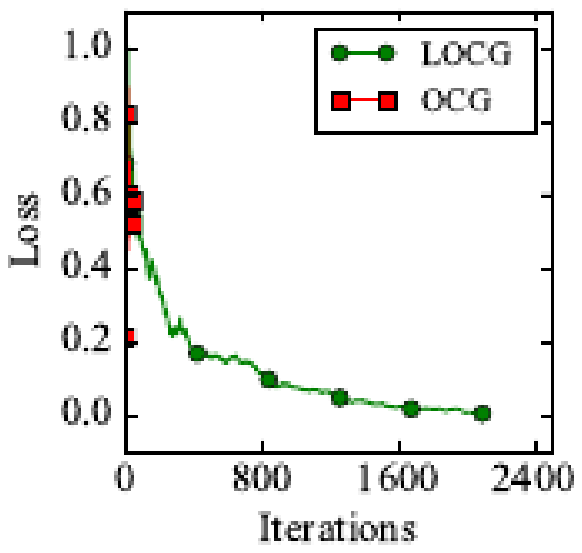}
  &
  \includegraphics[height=0.35\linewidth]{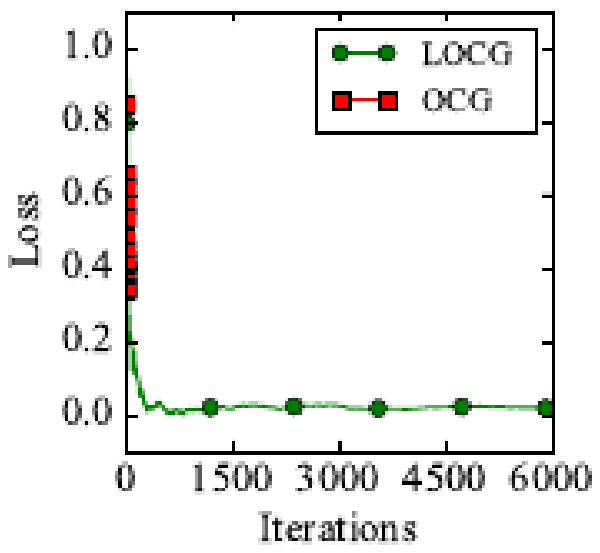}
  \\
  \includegraphics[height=0.35\linewidth]{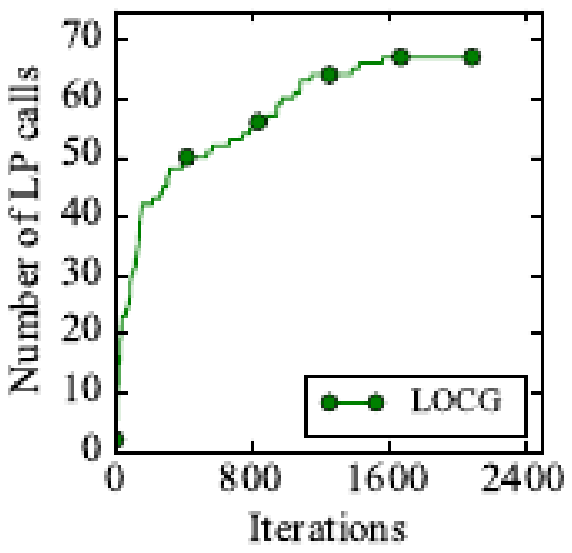}
  &
  \includegraphics[height=0.35\linewidth]{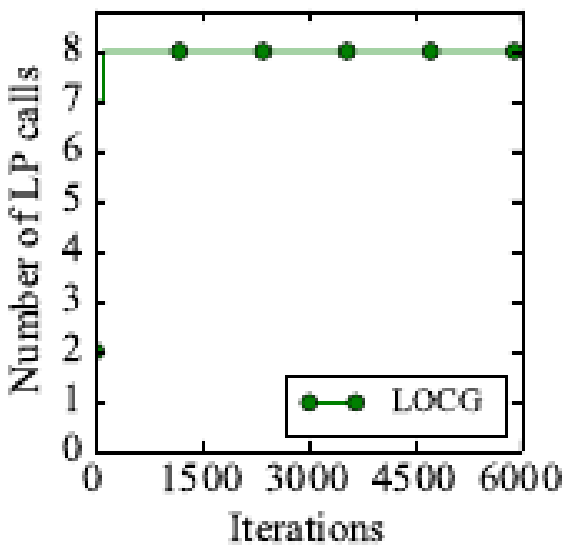}
  \\
  cache hit rate: \(96.8\%\)
  &
  cache hit rate: \(99.9\%\)
  \end{tabular}
  \caption{\label{fig:air04} LOCG vs. OCG on the MIPLIB instance \texttt{air04}.
    LOCG clearly outperforms OCG
    as the provided time was not enough for OCG to complete the
    necessary number of iterations for entering reasonable convergence.
  }
\end{figure*}
Finally, for the spanning tree problem,
we used the well-known extended formulation with \(O(n^{3})\)
inequalities for an \(n\)-node graph.
We considered graphs with 10 and 25 nodes
(Figures~\ref{fig:spt10} and~\ref{fig:spt25}).
\begin{figure*}
  \centering
  \begin{tabular}{cc}
  \includegraphics[height=0.35\linewidth]{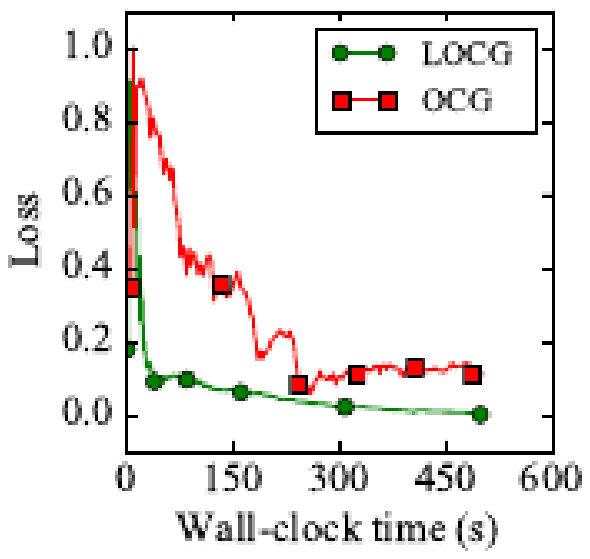}
  &
  \includegraphics[height=0.35\linewidth]{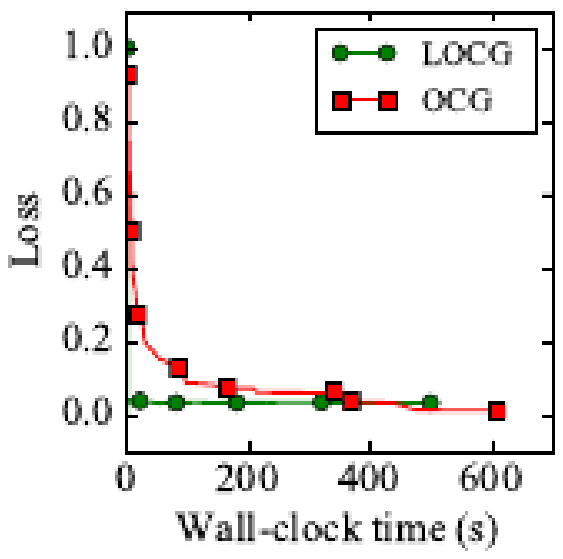}
  \\
  \includegraphics[height=0.35\linewidth]{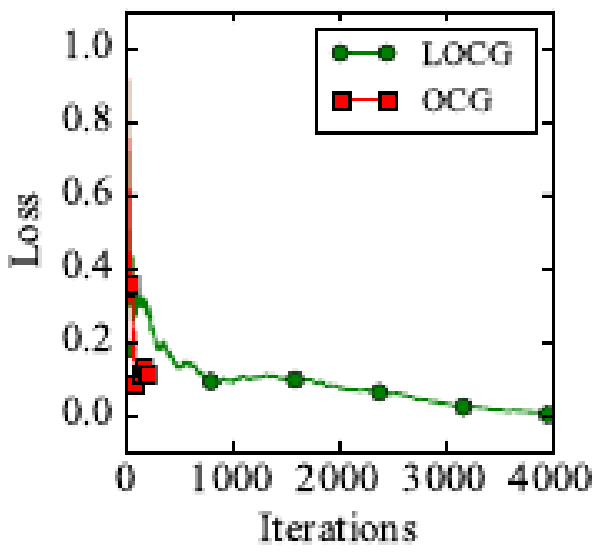}
  &
  \includegraphics[height=0.35\linewidth]{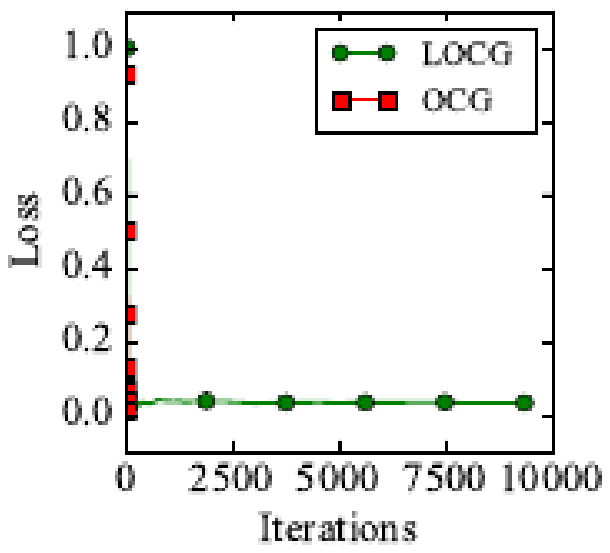}
  \\
  \includegraphics[height=0.35\linewidth]{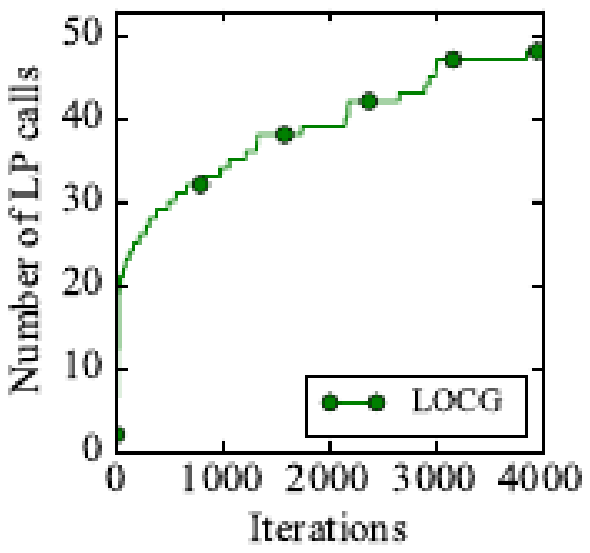}
  &
  \includegraphics[height=0.35\linewidth]{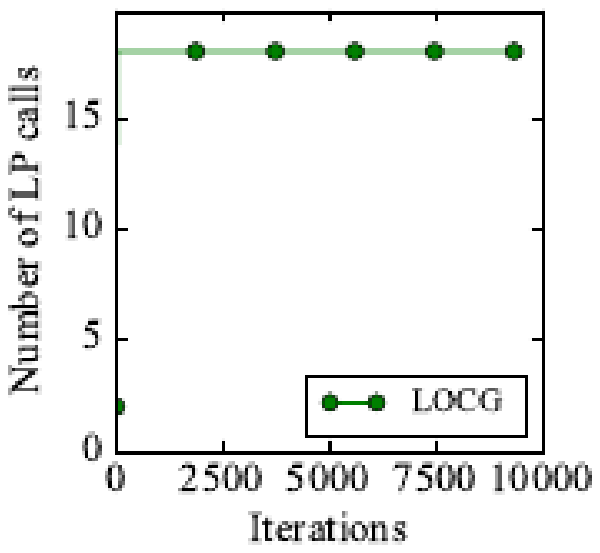}
  \\
  cache hit rate: \(98.8\%\)
  &
  cache hit rate: \(99.8\%\)
  \end{tabular}
  \caption{\label{fig:spt10} LOCG vs. OCG on a spanning tree instance for a 10-node graph.
    LOCG makes significantly more iterations,
    few oracle calls,
    and converges faster in wall-clock time.
  }
\end{figure*}
\begin{figure*}
  \centering
  \begin{tabular}{cc}
  \includegraphics[height=0.35\linewidth]{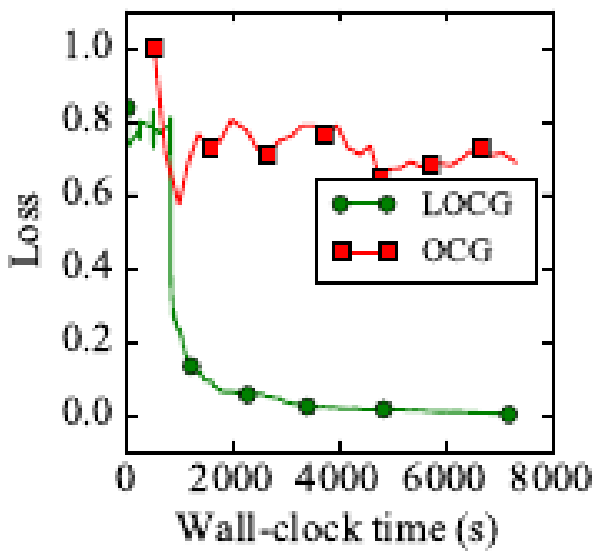}
  &
  \includegraphics[height=0.35\linewidth]{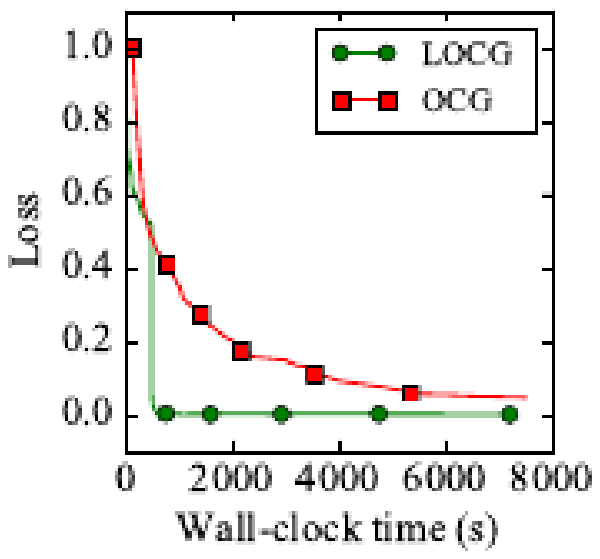}
  \\
  \includegraphics[height=0.35\linewidth]{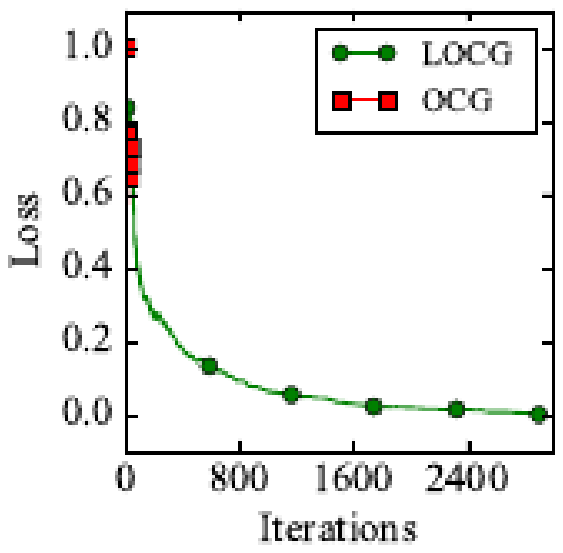}
  &
  \includegraphics[height=0.35\linewidth]{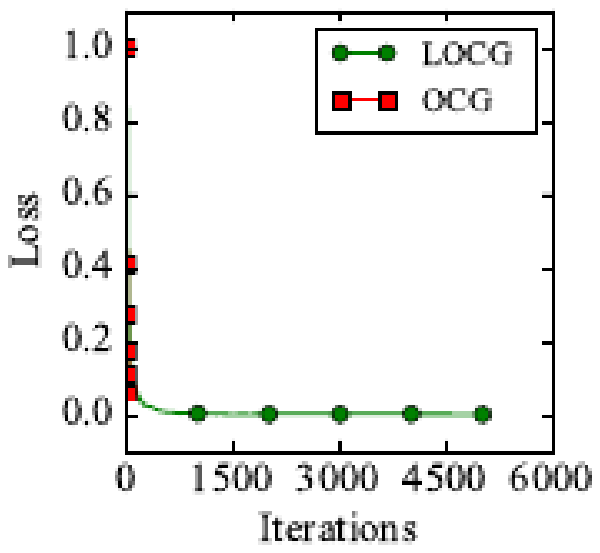}
  \\
  \includegraphics[height=0.35\linewidth]{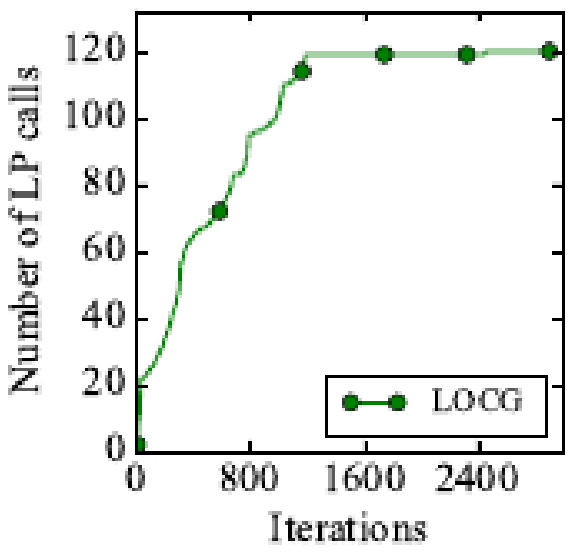}
  &
  \includegraphics[height=0.35\linewidth]{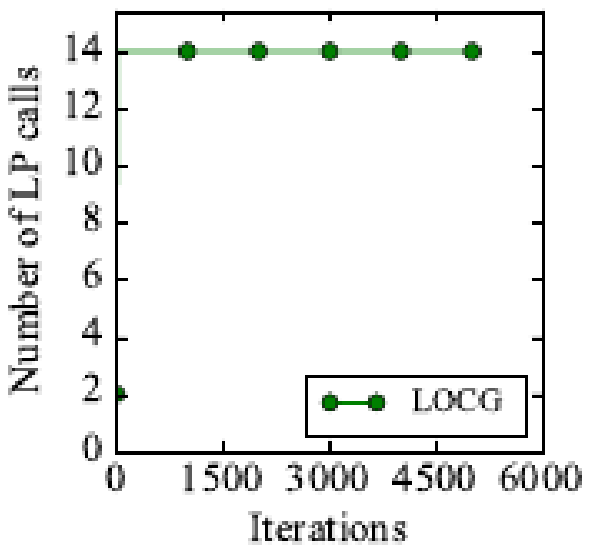}
  \\
  cache hit rate: \(95.9\%\)
  &
  cache hit rate: \(99.7\%\)
  \end{tabular}
  \caption{\label{fig:spt25} LOCG vs. OCG over a spanning tree instance for a 25-node graph.
    On the left, early fluctuation can be observed,
    bearing no consequence for later convergence rate.
    OCG did not get past this early stage.
    In both cases LOCG converges significantly faster.
  }
\end{figure*}

We observed that similarly to the offline case
while OCG and LOCG converge comparably in the number
of iterations, the lazy LOCG performed significantly more iterations;
for hard problems, where linear optimization is costly and convergence
requires a large number of iterations, this led LOCG converging much
faster in wall-clock time.  In extreme cases OCG could not complete
even a single iteration.  This is due to LOCG only requiring
\emph{some} good enough solution, whereas OCG requires a stronger
guarantee. This is reflected in faster oracle calls for LOCG.

\subsection{Performance improvements, parameter sensitivity, and tuning}
\label{sec:perf-impr-tuning}

\subsubsection{Effect of caching}
\label{sec:effect-cache}

As mentioned before, lazy algorithms have two improvements: caching
and early termination.  Here we depict the effect of caching in
Figure~\ref{fig:cacheEffect}, comparing OCG (no caching, no early
termination), LOCG (caching and early termination) and LOCG (only
early termination) (see Algorithm~\ref{alg:online-FW-WSep}).  We did
not include a caching-only OCG variant, because caching without early
termination does not make much sense: in each iteration a new linear
optimization problem has to be solved; previous solutions can hardly
be reused as they are unlikely to be optimal for the new linear
optimization problem.

\begin{figure*}
  \centering
  \includegraphics[width=0.35\linewidth]{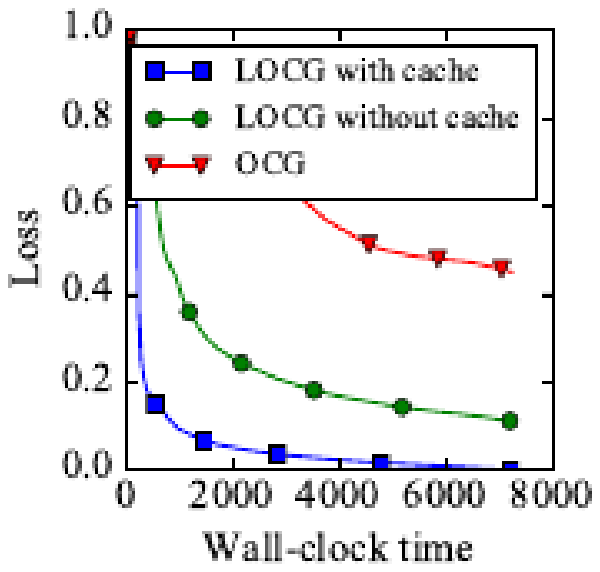}
  \includegraphics[width=0.35\linewidth]{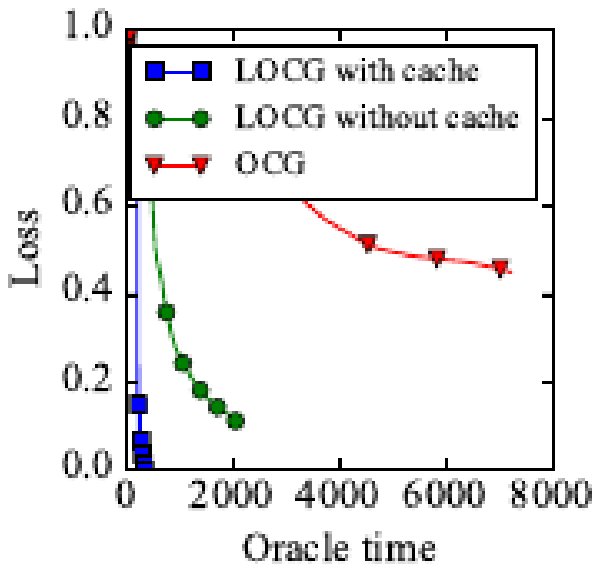}
  \caption{\label{fig:cacheEffect} Performance gain due to caching and
    early termination for online optimization over a maximum cut
    problem with linear losses. The red line is the OCG baseline, the
    green one is the lazy variant using only early termination, and
    the blue one uses caching and early termination.  Left: loss
    vs. wall-clock time. Right: loss vs. total time spent in oracle
    calls.  Time limit was \(7200\)
    seconds.  Caching allows for a significant improvement in loss
    reduction in wall-clock time. The effect is even more obvious in
    oracle time as caching cuts out a large number of oracle calls. }
\end{figure*}


\begin{figure}[htbp]
  \centering
  \begin{tabular}{cc}
  \includegraphics[height=0.45\linewidth]{results/vanilla_fw/netgen_12b_lasso_cache-100_det_dual_bound_wall_clock}
  &
  \includegraphics[height=0.45\linewidth]{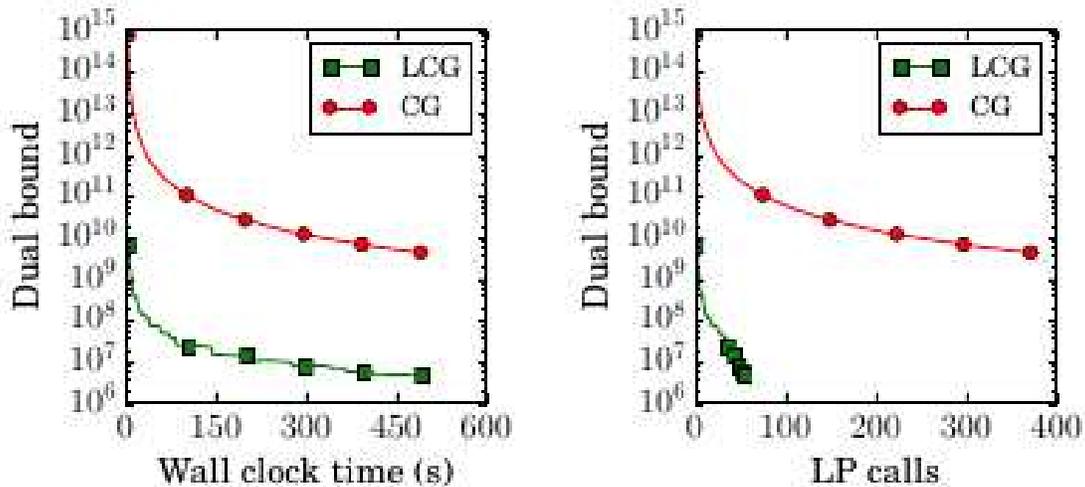}
  \end{tabular}
  \caption{\label{fig:video-colocalization}
  Performance on an instance of the video colocalization problem.
  We solve quadratic minimization over a flow polytope
  and report the achieved dual bound
  (or Wolfe-gap) over wall-clock time in seconds in logscale
  on the left and over the number of actual LP calls on the right.
  We used the parameter-free variant of
  the Lazy CG algorithm, which performs in both measures significantly better than
  the non-lazy counterpart. The performance difference is more prominent
  in the number of LP calls.}
\end{figure}

\begin{figure}[htbp]
  \centering
  \begin{tabular}{cc}
  \includegraphics[height=0.45\linewidth]{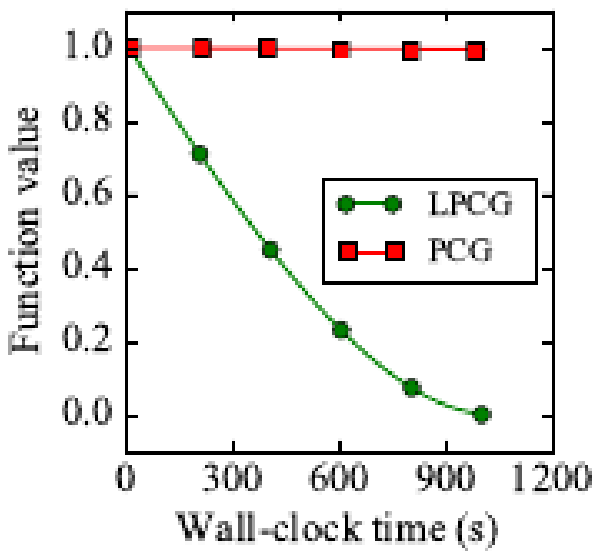}
  &
  \includegraphics[height=0.45\linewidth]{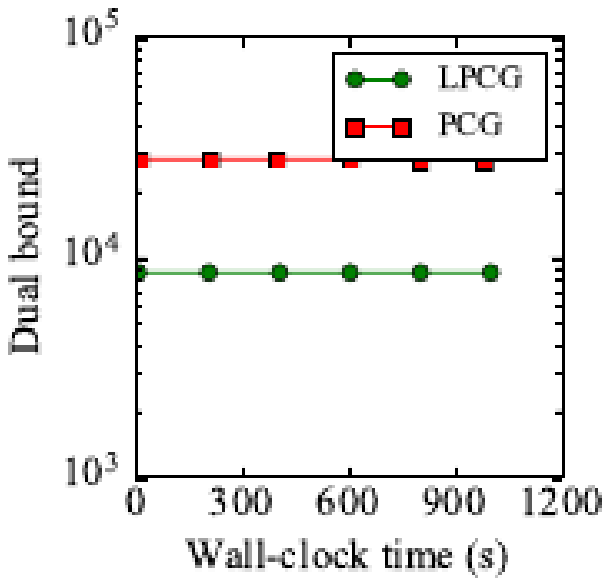}
  \end{tabular}
  \caption{\label{fig:video-colocalization-2}
  Performance on a large instance of the video colocalization problem using
PCG and its lazy variant. We observe that lazy PCG is significantly
better both in terms of function value and dual bound. Recall that the
function value is normalized between \([0,1]\).
}
\end{figure}

\begin{figure}[htbp]
  \centering
  \begin{tabular}{cc}
  \includegraphics[height=0.45\linewidth]{results/vanilla_fw/random_matrix_completion_10000_100_10_10000_totalWallClock}
  &
  \includegraphics[height=0.45\linewidth]{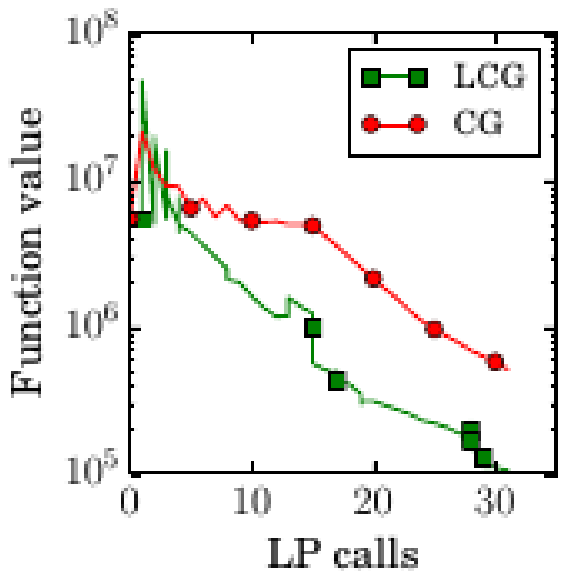}
  \end{tabular}
  \caption{\label{fig:matrix-completion}
  Performance on a matrix completion instance.
  More information about this problem can be found in the supplemental material
  (Section~\ref{sec:experiments}).
  The performance is reported as the objective
  function value over wall-clock time in seconds on the left and over LP calls on the right.
  In both measures after an initial phase the function value using LCG is much lower than with the
  non-lazy algorithm.}
\end{figure}

\begin{figure}[htbp]
  \centering
  \begin{tabular}{cc}
  \includegraphics[height=0.45\linewidth]{results/quadratic_with_opt/results+maxcut+28+PHI-05+K11+k10_totalWallClock}
  &
  \includegraphics[height=0.45\linewidth]{results/offline/results+standard+parabola+eil33-2+4516+PHI0+K11_totalWallClock}
  \\
  \includegraphics[height=0.45\linewidth]{results/quadratic_with_opt/results+maxcut+28+PHI-05+K11+k10_iterations}
  &
  \includegraphics[height=0.45\linewidth]{results/offline/results+standard+parabola+eil33-2+4516+PHI0+K11_iterations}
  \\
  \includegraphics[height=0.45\linewidth]{results/quadratic_with_opt/results+maxcut+28+PHI-05+K11+k10_lpcalls}
  &
  \includegraphics[height=0.45\linewidth]{results/offline/results+standard+parabola+air04+8904+PHI0+K11_lpcalls}
  \\
  \end{tabular}
  \caption{\label{fig:ocg-pcg-results}
  Performance of the two lazified variants LOCG (left column) and
  LPCG (right column). The feasible regions are a cut polytope on the left and
  the MIPLIB instance \texttt{air04} on the right. The objective functions are in both
  cases quadratic, on the left randomly chosen in every step.
  We show the performance over wall clock time in seconds (first row) and over
  iterations (second row). The last row shows the number of call to the linear
  optimization oracle.
  The lazified versions perform significantly
  better in wall clock time compared to the non-lazy counterparts.
  }
\end{figure}

\subsubsection{Effect of $K$}
\label{sec:effect-k}

If the parameter \(K\)
of the oracle can be chosen, which depends on the actual oracle
implementation, then we can increase \(K\)
to bias the algorithm towards performing more positive calls. At the
same time the steps get shorter. As such there is a natural trade-off
between the cost of many positive calls vs.~a negative call. We depict
the impact of the parameter choice for \(K\) in Figure~\ref{fig:K-values}.

\begin{figure*}
  \centering
  \includegraphics[width=0.35\linewidth]{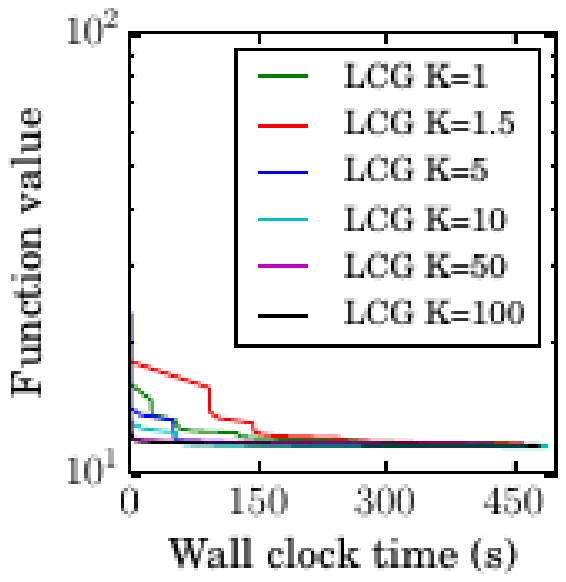}
  \includegraphics[width=0.35\linewidth]{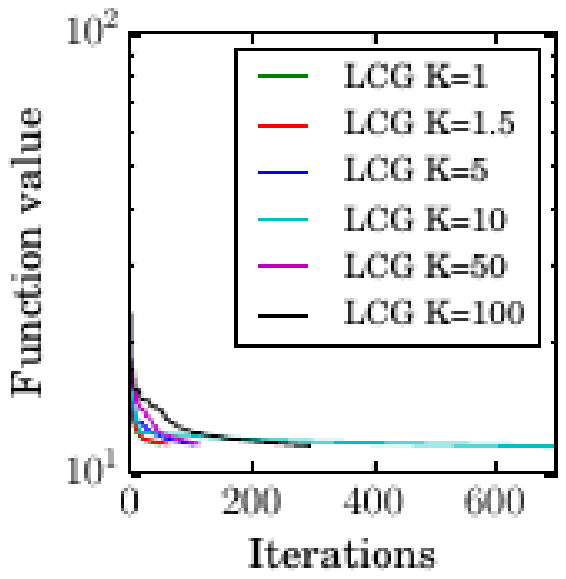}
  \caption{\label{fig:K-values} Impact of the oracle approximation parameter
    \(K\) depicted for the Lazy CG algorithm. We can see that
    increasing \(K\) leads to a deterioration of progress in
    iterations but improves performance in wall-clock time. The behavior is similar
    for other algorithms.}
\end{figure*}

\subsubsection{Paramter-free vs. textbook variant}
\label{sec:paramfreeVsTextbook}

For illustrative purposes, we compare the textbook variant of the lazy
conditional gradient (Algorithm~\ref{alg:FW-WSep}) with its
parameter-free counterpart (Algorithm~\ref{alg:CG-imp-imp}) in
Figure~\ref{fig:parameter_free_vs_textbook}. The parameter-free
variant outperforms the textbook variant due to the active management
of \(\Phi\) combined with line search.

Similar parameter-free variants can be derived for the other
algorithms; see discussion in Section~\ref{sec:frank-wolfe-with}.

\begin{figure*}
  \centering
  \begin{tabular}{cc}
  \includegraphics[height=0.35\linewidth]{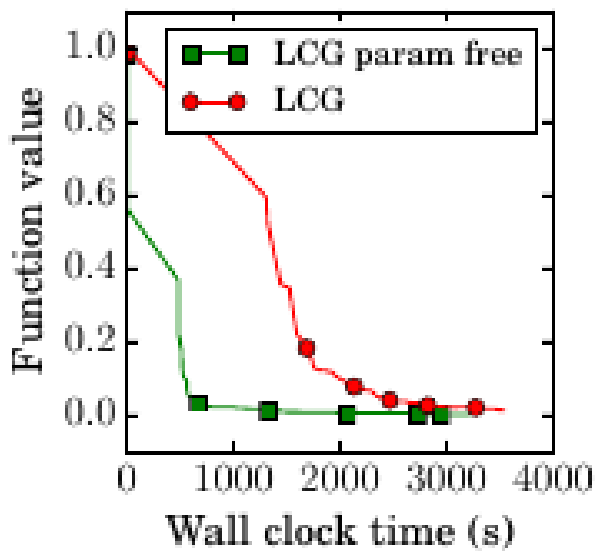}
  &
  \includegraphics[height=0.35\linewidth]{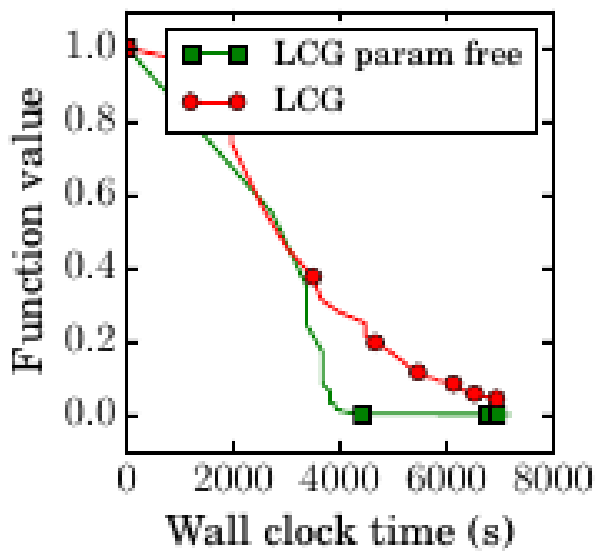}
  \\
  \includegraphics[height=0.35\linewidth]{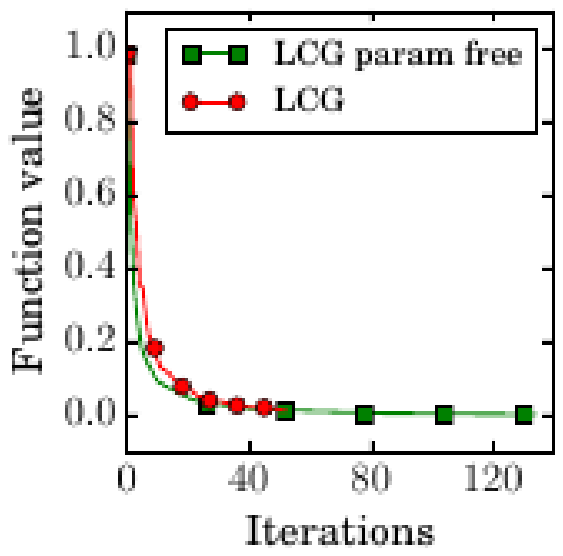}
  &
  \includegraphics[height=0.35\linewidth]{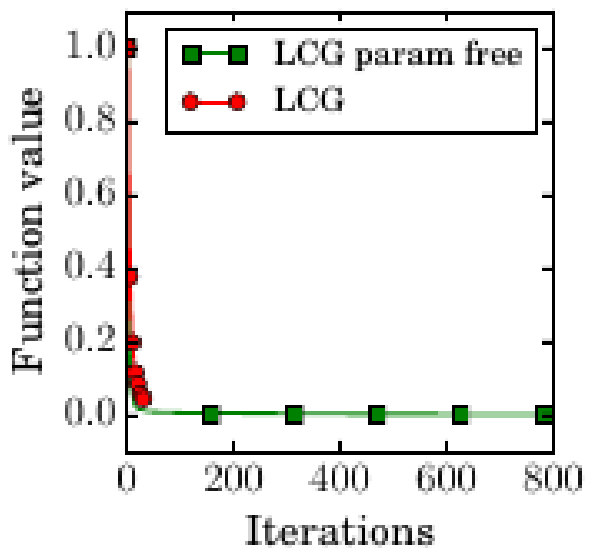}
  \\
  \includegraphics[height=0.35\linewidth]{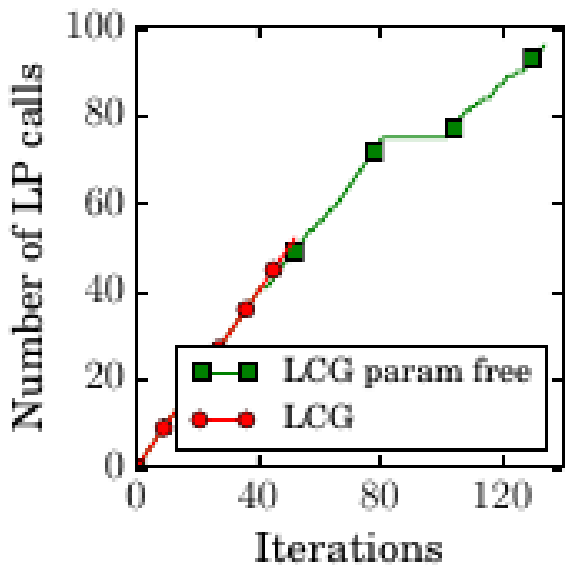}
  &
  \includegraphics[height=0.35\linewidth]{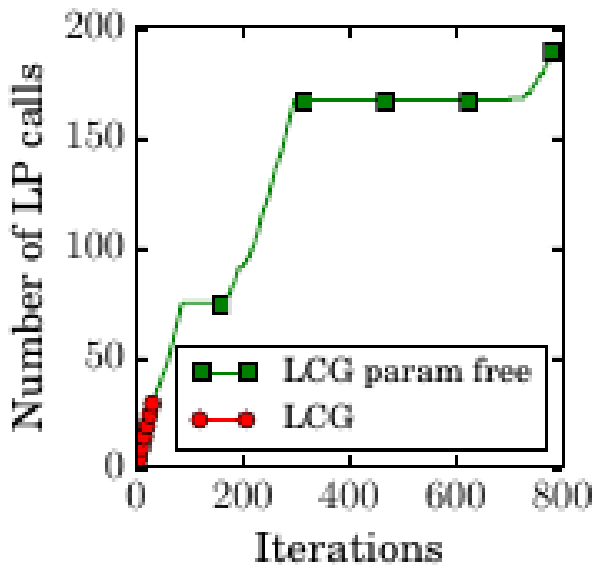}
  \end{tabular}
  \caption{\label{fig:parameter_free_vs_textbook} Comparison of the
    \lq{}textbook\rq{} variant of the Lazy CG algorithm
    (Algorithm~\ref{alg:FW-WSep}) vs. the Parameter-free Lazy CG
    (Algorithm~\ref{alg:CG-imp-imp}) depicted for two sample instances
    to demonstrate behavior. The parameter-free variant usually has a
    slighlty improved behavior in terms of iterations and a
    significantly improved behavior in terms of wall-clock
    performance. In particular, the parameter-free variant can execute
    significantly more oracle calls, due to the \(\Phi\)-halving
    strategy and the associated bounded number of negative calls (see
    Theorem~\ref{cor:negCalls}).}
\end{figure*}

\section{Final Remarks}
\label{sec:final-remarks}

If a given baseline
algorithm works over general compact convex sets \(P\),
then so does the lazified version. In fact, as the lazified algorithm
runs, it produces a polyhedral approximation of the set \(P\)
with very few vertices (subject to optimality vs.~sparsity tradeoffs; see
\cite[Appendix C]{jaggi2013revisiting}).

Moreover, the weak separation oracle does not need to return extreme
points. All algorithms also work with maximal solutions that are not
necessarily extremal (e.g., lying in a higher-dimensional
face). However, in that case we lose the desirable property that the
final solution is a sparse convex combination of extreme points
(typically vertices in the polyhedral setup).

We would also like to briefly address potential downsides of our
approach. In fact, we believe the right perspective is the following:
when using the lazy oracle over the LP oracle, we obtain potentially
\emph{weaker} approximations \(v_t - x_t\) of the true gradient
\(\nabla f(x_t)\) compared to solving the actual LP, but the
computation might be much faster. This is the tradeoff that one has to
consider: working with weaker approximations (which implies
potentially less progress per iteration) vs.~potentially significantly
faster computation of the approximations. If solving the LP is
expensive than lazification will be usually very beneficial, if the LP
is very cheap as in the case of \(P = [0,1]^n\) or \(P= \Delta_n\)
being the probability simplex, then lazification might be slower. 

A related remark in this context is that once the lazified algorithm
has obtained vertices \(x_1, \dots, x_m\) of \(P\), so that the
minimizer \(x^*\) of \(f\) satisfies
\(x^* \in \text{conv}\{x_1, \dots, x_m\}\), then from that point
onwards no actual calls to the true LP oracle have to be performed
anymore for primal progress and the algorithm will only use cache
calls; the only remaining true LP calls are at most a
logarithmic number for dual progress updates of the \(\Phi_t\).

\section*{Acknowledgements}
\label{sec:acknowledgements}

We are indebted to Alexandre D'Aspremont, Simon Lacoste-Julien, and
George Lan for the helpful discussions and for providing us with
relevant references. Research reported in this paper was partially
supported by NSF CAREER award CMMI-1452463.

\bibliographystyle{abbrvnat}
\bibliography{bibliography}
\end{document}